\documentclass[11pt,a4paper,bibliography=totoc,bibliography=openstyle, reqno]{amsart}
\usepackage{amsfonts}
\usepackage{amsthm}
\usepackage{amsmath}
%\allowdisplaybreaks[1]  % Seitenumbruch innerhalb Formeln möglich
\usepackage{amscd}
\usepackage{amssymb}
\usepackage{subfigure}
\usepackage[T1]{fontenc}            % Paket für die zu verwendendes Schriftart
\usepackage[utf8]{inputenc}       % Eingabe-Zeichenkodierung
\usepackage[english]{babel}                  % Sprachpaket
\usepackage{csquotes}               % Paket für Anführungszeichen \enquote{}
\usepackage{t1enc}
\usepackage[mathscr]{eucal}
\usepackage{indentfirst}
\usepackage{graphicx}
\usepackage{graphics}
\usepackage{pict2e}
\usepackage{color}
\usepackage{epic}
\usepackage{bbm}
\usepackage{enumitem}
\usepackage{tikz}
\usepackage{tkz-graph}
\usetikzlibrary{arrows}
\numberwithin{equation}{section}
\usepackage[margin=2.6cm]{geometry}
\usepackage{epstopdf} 
\usepackage{hyperref}
\usepackage{textcomp} %fur mu in PM10
\usepackage{autobreak}
\usepackage{subfigure}
\usepackage{float}
\usepackage{dsfont}
\usepackage{mathtools}
\usepackage{pdflscape}
\usepackage{appendix}
\usepackage{booktabs}

\usepackage[backend=bibtex,               % biber als Sortierprogramm (bibtex)      
          natbib=true,                 % natbib-ähnliche Befehle verfügbar
          block = space,
           bibstyle=authoryear,        % Stil ohne Label, z.B. [1]
           citestyle=authoryear,        % Autor-Jahr-Zitate
           babel=other,                 % berücksichtigt hyphenation-Feld in der .bib
           maxcitenames=2,              % ab 2 Namen et al. bzw. u.a.
           maxbibnames=10,              % ab 10 Namen im Verzeichnis et al. bzw. u.a.
           url=true,                   % wohl URL
           doi=true,                   % keine DOI
           giveninits=true]{biblatex} 
           
\addbibresource{ma.bib}
\renewbibmacro{in:}{} % removes the 'in:' before journal name
%\bibliography{ma.bib}
\DeclareNameAlias{sortname}{last-first} 
\DefineBibliographyStrings{german}{andothers={\mbox{et~al.}}} %evtl. entfernen
\setlength{\bibitemsep}{0.5\baselineskip}
\setlength{\bibhang}{2em}          

%\usepackage{natbib}
%\bibpunct{(}{)}{,}{a}{}{;}
%\bibliographystyle{abbrvnat}
%\renewcommand{\citep}[1]{[\citealp{#1}]}

\newcommand{\N}{\mathbb{N}}

\newtheorem{Theorem}{Theorem}[section]
\newtheorem{Lemma}[Theorem]{Lemma}
\newtheorem{Cor}[Theorem]{Corollary}

\newtheorem{asum}{Assumption}
\newtheorem{?}[Theorem]{Problem}

%\newcommand{\M}{\mathbb{M}}

%\endlocaldefs

\newcommand{\fett}{\boldsymbol}

\newcommand{\tz}{\theta_0}
\newcommand{\gz}{G_0}
\newcommand{\thn}{\widehat{\theta}_n}
\newcommand{\ghn}{\widehat{G}_n}
\newcommand{\thns}{\widehat{\theta}_n^*}
\newcommand{\ghns}{\widehat{G}_n^*}

\newcommand{\ahns}{\widehat\alpha_n^*}
\newcommand{\ax}{\alpha_{\xi_n}}
\newcommand{\gx}{G_{\xi_n}}
\newcommand{\ahn}{\widehat\alpha_n}
\newcommand{\az}{\alpha_0}

\newcommand{\hpsxn}{\widehat{P}_{S,x_n}}

\DeclareRobustCommand{\subtitle}[1]{\\#1}

\linespread{1.5}

\allowdisplaybreaks

\numberwithin{equation}{section}

\begin{document}

%\setlength{\parindent}{0em} 
%\begin{frontmatter}
\title[Semi-parametric INAR bootstrap inference]{Joint semi-parametric INAR bootstrap inference for model coefficients and innovation distribution %\subtitle{\blue{Work in progress}}
}

%\runtitle{Generalized Binary Vector VAR Processes}

\author[Maxime Faymonville \& Carsten Jentsch]{Maxime Faymonville%$^*$
   \\  \texttt{Department of Statistics,  TU Dortmund University, D-44221 Dortmund, Germany; faymonville@statistik.tu-dortmund.de}\vspace*{0.5cm} \\  \vspace*{0.5cm}
\and  \\ 
Carsten Jentsch  \\ \texttt{ Department of Statistics,  TU Dortmund University, D-44221 Dortmund, Germany; jentsch@statistik.tu-dortmund.de
}}

\begin{abstract}
For modeling the serial dependence in time series of counts, various approaches have been proposed in the literature. In particular, models based on a recursive, autoregressive-type structure such as the well-known integer-valued autoregressive (INAR) models are very popular in practice. The distribution of such INAR models is fully determined by a vector of autoregressive binomial thinning coefficients and the discrete innovation distribution. While fully parametric estimation techniques for these models are mostly covered in the literature, a semi-parametric approach allows for consistent and efficient joint estimation of the model coefficients and the innovation distribution without imposing any parametric assumptions. Although the limiting distribution of this estimator is known, which, in principle, enables asymptotic inference and INAR model diagnostics on the innovations, it is cumbersome to apply in practice.

In this paper, we consider a corresponding semi-parametric INAR bootstrap procedure and show its joint consistency for the estimation of the INAR coefficients and for the estimation of the innovation distribution. We discuss different application scenarios that include goodness-of-fit testing, predictive inference and joint dispersion index analysis for count time series. In simulations, we illustrate the finite sample performance of the semi-parametric INAR bootstrap using several innovation distributions and provide real-data applications.

\end{abstract}

\keywords{Bootstrap inference, central limit theorem, count time series,  dispersion index, semi-parametric estimation}

%\subjclass[2010]{\purple{Primary: 62M10, 62H12, Secondary: 62F40, 62M20}}

\maketitle

\renewcommand{\subtitle}[1]{}

\newpage

\section{Introduction}\label{sec:intro}
Count time series consist of sequences of observations over time taking values in the non-negative integers $\mathbb{N}_0 := \mathbb{N} \cup \{0\} = \{0,1,2,\ldots\}$.
% where all values are non-negative integers. 
They arise naturally when counting things or events over time and have therefore many relevant applications in various fields as, e.g., the number of infectious diseases, extreme weather events or phishing attacks. Unlike continuous time series, count data is inherently discrete-valued, which often leads to modeling challenges caused, e.g., by the presence of overdispersion (variance exceeding the mean) or zero inflation (excessive zeros in the data). One of the probably most used count time series models is the Integer-valued AutoRegressive (INAR) model of order $p\in\mathbb{N}$ introduced by \citet{duli}. An INAR($p$) process $(X_t,t\in\mathbb{Z})$ is defined by the recursion 
\begin{align} \label{inarp}
X_t = \alpha_1 \circ X_{t-1}+ \alpha_2 \circ X_{t-2} + \ldots + \alpha_p \circ X_{t-p} + \varepsilon_t, \quad t \in \mathbb{Z},
\end{align} 
where $(\varepsilon_t,t\in\mathbb{Z})$ denotes an i.i.d. innovation process with distribution $G$ having range $\mathbb{N}_0$. We write $\varepsilon_t\sim G$ and identify the distribution $G$ by its probability mass function (pmf), that is, $G=\{G(k),k\in\mathbb{N}_0\}$, where $G(k)=P(\varepsilon_t=k)$. Further, let $\fett\alpha = (\alpha_1, \ldots, \alpha_p)' \in [0,1)^p$ with $\sum_{i=1}^p \alpha_i<1$ denote the vector of model coefficients and 
define
% \vspace{-0.1cm} 
\[\alpha_i \circ X_{t-i} = \sum\limits_{j=1}^{X_{t-i}} Z_j^{(t,i)},\] 
where \enquote{$\circ$} is the binomial thinning operator first introduced by \citet{steutel}. Here, $(Z_j^{(t,i)}, \, j \in \mathbb{N}, \, t \in \mathbb{N}_0 ), \, i \in \{1, \ldots, p\}$, are mutually independent Bernoulli-distributed random variables with $Z_j^{(t,i)} \sim \text{Bin}(1, \alpha_i)$ such that $P(Z_j^{(t,i)}=1)=\alpha_i=1-P(Z_j^{(t,i)}=0)$ independent of $(\varepsilon_t,t\in\mathbb{N}_0)$. Note that, according to this construction, we have $\alpha_i \circ X_{t-i}|X_{t-i}\sim \text{Bin}(X_{t-i}, \alpha_i)$. All the thinning operations \enquote{$\circ$} are independent of each other and of $(\varepsilon_t, t\in\mathbb{Z})$. Furthermore, the thinning operation at time $t$ and $\varepsilon_t$ are both independent of $X_s, \, s < t$. The special case $p=1$ results in the INAR(1) model introduced by  \citet{mck} and \citet{alosh}.

The existing literature mainly deals with \emph{fully} parametric estimation of INAR models \citep[see, e.g.,][]{duli, franke, brhell, freeland, jung, silsil, bu},
% (see for example \citet{duli}, \citet{franke}, \citet{brhell}, \citet{freeland}, \citet{jung}, \citet{silsil} and \citet{bu}), 
i.e., $G$ is assumed to belong to a certain family of parametric distributions $\{G_\gamma \mid \, \gamma \in \Gamma \subset \mathbb{R}^q\}$ for some finite (and typically small) $q\in\N$. A summary of all the standard parametric estimation methods for INAR models as, e.g., moment estimation and (conditional) maximum likelihood estimation can be found in Section 2.2 of \citet{bookweiss}. However, the use of such parametric assumptions considerably restricts the flexibility of the INAR model \eqref{inarp} - we refer to \citet{gof_sp} for a discussion. A way more flexible estimator is the one presented by \citet{drost}. While keeping the \emph{parametric} binomial thinning operation, it is able to estimate the innovation distribution in a completely \emph{non-parametric} way. In the following Section \ref{sec:prelim}, we introduce this semi-parametric estimator that allows to estimate \emph{jointly} the INAR coefficients $\fett\alpha$ and the innovation distribution $G$ and recap some of its theoretical (limiting) properties.
% We introduce this semi-parametric estimator and corresponding theory in the following Section \ref{sec:prelim}. 
% In Section \ref{sec:main}, we introduce 
An appropriate \emph{semi-parametric} bootstrap procedure leading to corresponding bootstrap estimators is proposed in Section \ref{sec:main}, where we also establish asymptotic theory and prove bootstrap consistency. Different application scenarios that
include goodness-of-fit testing, predictive inference and joint dispersion index analysis for time series
of counts are discussed in Section \ref{sec:meth_appl}. We provide simulation results illustrating the finite sample performance of the semi-parametric INAR bootstrap procedure in Section \ref{sec:sim} and discuss a real data application in Section \ref{sec:appl}. 
% Section \ref{sec:meth_appl} contains statistical-methodological applications on goodness-of-fit tests on the whole INAR model class, on prediction and on dispersion indices whose application on a real data set can be found in Section \ref{sec:appl}. 
Section \ref{sec:concl} concludes and the proofs of this paper are deferred to an appendix.

\section{Preliminaries} \label{sec:prelim}
 
By construction, the INAR($p$) process defined by \eqref{inarp} is a $p$th order Markov chain. Under the model parameters $\fett\alpha$ and $G$ and for $x_{t-i}\in\mathbb{N}_0$, $i=0,1,\ldots,p$, its transition probabilities are given by \begin{align} \label{tpinar}
P^{\fett\alpha,G}_{(x_{t-1}, \ldots, x_{t-p}), x_t}
&= \mathds{P}_{\fett\alpha, G} \left( \sum\limits_{i=1}^p \alpha_i \circ X_{t-i}+\varepsilon_t=x_t \mid X_{t-1}=x_{t-1}, \ldots, X_{t-p}=x_{t-p} \right) \\
&= \Big(\text{Bin}(x_{t-1}, \alpha_1) \ast \cdots \ast \text{Bin}(x_{t-p}, \alpha_p) \ast G \Big)\{x_t\}, 
% \quad t \in \mathbb{N}_0 , 
\notag
\end{align}
where $\mathds{P}$ is the underlying probability measure and \enquote{$\ast$} denotes the convolution of distributions. In the special case of an INAR(1) model, the transition probabilities can be written as 
\begin{align} \label{tpinar1} \mathds{P}_{\alpha, G} (X_t=x_t \mid X_{t-1}=x_{t-1})= \sum\limits_{j=0}^{\text{min}(x_t, x_{t-1})} \binom{x_{t-1}}{j} \alpha^j (1-\alpha)^{x_{t-1}-j} G(x_t-j), 
\end{align} 
where $\alpha$ is the coefficient of the INAR(1) model \citep{mck, alosh}. 
When estimating the INAR model as proposed in \citet{drost}, that is, without using any parametric assumption on the family of the innovation distributions, one treats the error distribution as an (infinite-dimensional) parameter and jointly estimates the parameter sequence consisting of the $p$-dimensional vector of INAR model coefficients and the infinite-dimensional pmf of the innovations distribution. In the setup of generalized linear models, \citet{huang} considers a similar approach for the estimation of the error distribution.
% A similar approach for the case of generalized linear models can be found in \citet{huang}. 
For the derivation of asymptotic theory, \citet{drost} impose the following (moment) assumptions on the innovation distribution and the model parameters.

\begin{asum}[Innovation distribution and model parameters; \citet{drost}, Assumption 1] \label{ass1}
Let $\widetilde{\mathcal{G}}$ denote the set of all probability measures on $\mathbb{N}_0$. We assume that $G=\mathcal{L}(\varepsilon_t) \in \mathcal{G}$, where
\begin{align*}
\mathcal{G}=\left\{ G \in \widetilde{\mathcal{G}}: 0 < G(0) < 1, \text{E}_G (\varepsilon_t^{p+4}) < \infty \right\}    
\end{align*}
is the set of  all probability measures on $\mathbb{N}_0$ with $0<G(0)<1$ and finite $(p+4)$th moments. Furthermore, we assume that $\fett\alpha = (\alpha_1,\ldots,\alpha_p) \in \Theta= \{\fett\alpha \in (0,1)^p: \sum_{i=1}^p \alpha_i <1 \}$.
\end{asum} 
For some of the results that \citet{drost} derive in their paper, weaker conditions than listed in Assumption \ref{ass1} are sufficient. Nevertheless, the conditions included in Assumption \ref{ass1} are not restrictive: The condition $0 < G(0) <1$ makes sure that the innovations can be equal to zero, but that they are not always equal to zero, which is suitable for basically every practical application. The existence of the $(p+4)$th moment of the innovation distribution $G$ is required to ensure the weak convergence of certain empirical processes; see \citet{drost} for details. The assumption $\fett\alpha \in (0,1)^p$ with $\sum_{i=1}^p \alpha_i <1$ entails the standard assumption to ensure the stationarity of the INAR($p$) process 
% and is often directly included in the definition of an INAR($p$) model.
and avoids boundary issues by ruling out $\alpha_i=0$ for all $i=1,\ldots,p$.

Formally, \citet{drost} base their work on the experiments
\[\mathcal{E}^{(n)} = \left(\mathbb{N}_0^{n+1+p}, \mathcal{P}(\mathbb{N}_0^{n+1+p}), \mathds{P}^{(n)}_{\nu_{\fett\alpha,G},\fett\alpha,G} | \fett\alpha \in \Theta, G \in \mathcal{G}\right), \quad n \in \mathbb{N}_0,\] 
where $\mathcal{P}(\mathbb{N}_0^{n+1+p})$ denotes the power set of $\mathbb{N}_0^{n+1+p}$, $\mathds{P}^{(n)}_{\nu_{\fett\alpha,G},\fett\alpha,G}$ denotes the law of $(X_{-p}, \ldots, X_n)$ under $\mathds{P}_{\nu_{\fett\alpha,G},\fett\alpha,G}$, on the measurable space $(\mathbb{N}_0^{n+1+p}, \mathcal{P}(\mathbb{N}_0^{n+1+p}))$ and with stationary initial distribution $\nu_{\fett\alpha,G}$ of $(X_{-p}, \ldots, X_0)$.\footnote{Although a sample $X_1,\ldots,X_n$ plus $p$ pre-sample values $X_{-(p-1)},\ldots,X_0$ would be sufficient, throughout this paper, we stick to the notation in \citet{drost}, who suppose availability of $(X_{-p},\ldots,X_{-1}),X_0,\ldots,X_n$, but nevertheless use a pre-factor of $1/n$ when taking sample averages.} %, whose existence is established in \citet{drost}.
As one might expect, this semi-parametric model is more complicated to deal with than a parametric approach, but it is way more flexible with respect to the innovation distribution \citep[see][for the parametric counterparts]{drost2}.
% (see \citet{drost2} for the parametric counterparts).

The estimator is derived using a non-parametric (conditional) maximum likelihood approach. % which is feasible since the INAR model only allows for discrete innovation distributions $G$.
%As proposed in \citet{drost}, 
For any fixed $n\in\mathbb{N}_0$ and observations $(X_{-p},\ldots,X_n)$ at hand, a non-parametric maximum likelihood estimator (NPMLE) $\widehat \theta_n:=(\widehat{\fett\alpha}_n, \widehat{G}_n)=(\widehat{\alpha}_{n,1}, \ldots, \widehat{\alpha}_{n,p}, \widehat{G}_n(0), \widehat{G}_n(1), \ldots)$ of $(\fett\alpha,G)$ is defined to maximize the conditional likelihood, i.e., 
\begin{align}\label{NPMLE_maximization_problem} 
% \forall n \in \mathbb{Z}_+: 
(\widehat{\fett\alpha}_n, \widehat{G}_n) \in \underset{(\fett\alpha,G) \in [0,1]^p \times \widetilde{\mathcal{G}}}{\text{argmax}} \left( \prod\limits_{t=0}^n P^{\fett\alpha,G}_{(X_{t-1}, \ldots, X_{t-p}),X_t}  \right).
\end{align}  
% which leads to a sequence of NPMLEs $((\widehat{\alpha}_n, \widehat{G}_n), n\in\mathbb{Z}_+)$.
To guarantee its existence, note that they allow for values of $(\widehat{\fett\alpha}_n, \widehat{G}_n)$ outside of $\Theta \times \mathcal{G}$. They further state that all the mass of $\widehat{G}_n$ is assigned to a subset $\{u_-, \dots, u_+\}\subset \mathbb{N}_0$, where
\[
u_- = \max\left\{0,\min_{t = 0, \dots, n} \left\{ X_t - \sum_{i = 1}^{p} X_{t - i} \right\}\right\},
\quad
u_+ = \max_{t = 0, \dots, n} \{X_t\}.
\]
Then, $(\widehat{\theta}_n, \widehat{G}_n)$ maximizes the likelihood if and only if the following conditions hold:
\begin{itemize}
    \item[i)] $\widehat{G}_n(k) = 0$ for $k < u_-$ and $k > u_+$, and
    \item[ii)] $(\widehat{\alpha}_{n,1}, \dots, \widehat{\alpha}_{n,p}, \widehat{G}_n(u_-), \dots, \widehat{G}_n(u_+))$ is a solution to the (constrained) polynomial optimization problem
\end{itemize}
\begin{align}\label{eq:restictied_optimization_problem}
\max_{x_1, \dots, x_p \atop z_{u_-}, \dots, z_{u_+}} 
\left\{ 
\prod_{t = 0}^n 
 \sum_{e=0 \vee X_t - \sum_{i = 1}^{p} X_{t - i}}^{X_t} z_e 
\sum_{ \substack{ 0 \leq k_l \leq  X_{t-l}, \, l=1, \ldots, p \\ k_1 + \ldots + k_p = X_t - e } } 
\prod_{l = 1}^p \binom{X_{t - l}}{k_l} x_l^{k_l}(1 - x_l)^{X_{t - l} - k_l}
\right\}
\end{align}
subject to
\[
0 \leq x_k \leq 1 \quad \text{for } k = 1, \ldots, p, \quad
z_j \geq 0 \quad \text{for } j = u_-, \ldots, u_+, \quad
z_{u_-} + \ldots + z_{u_+} = 1.
\]
They stress that they do not impose the uniqueness of such a maximum location.
A modification of the likelihood in \eqref{NPMLE_maximization_problem} is proposed by \textcite{sp_penal}, who add a penalization term to account for the ``smoothness'' of most (discrete) innovation distributions, i.e., that $G(k+1)-G(k)$, $k\in\mathbb{N}_0$ is usually ``small'', which leads to an improved estimation performance for small sample sizes.

In their paper, \textcite{drost} prove consistency, asymptotic normality and efficiency of their NPMLE under suitable regularity conditions. However, the practical use of their asymptotic theory is rather restricted as the derived limiting distribution follows a (transformed) Gaussian process, which is cumbersome to work with. Hence, with the goal of constructing a suitable and asymptotically valid semi-parametric INAR bootstrap procedure,  in the following, we recap the consistency and asymptotic normality results established in \citet{drost}. 
% However, our focus is on the consistency and the limiting distribution of the estimator, which we will highlight in the following. 

\begin{Theorem}[\citet{drost}, Theorem 1] \label{cons}
Let Assumption \ref{ass1} hold.
%\footnote{\citet{drost} state that Assumption \ref{ass1} is stronger than needed. Actually, it is sufficient to assume a finite mean for the innovation distribution $G$ and $G(0) < 1$.}. 
For all $(\fett\alpha_0, G_0) \in \Theta \times \mathcal{G}$ and all initial probability measures  $\nu_{\fett\alpha_0,G_0}$ on $\mathbb{N}_0^p$, any NPMLE $(\widehat{\fett\alpha}_n, \widehat{G}_n)$ 
% of $(\alpha,G)$ 
defined in \eqref{NPMLE_maximization_problem} is consistent (as $n\rightarrow \infty$) in the following sense: 
\[\widehat{\fett\alpha}_n \overset{p}{\rightarrow} \fett\alpha_0 \quad \text{and} \quad \sum\limits_{k=0}^\infty \left|\widehat{G}_n(k) - G_0(k)\right| \overset{p}{\rightarrow} 0, \quad \text{under} \, \, \, \mathds{P}_{\nu_{\fett\alpha_0,G_0}, \fett\alpha_0, G_0}.\]
\end{Theorem}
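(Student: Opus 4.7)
The plan is a Wald-type consistency argument for the NPMLE, tailored to the Markovian, non-parametric setup. Writing the normalized conditional log-likelihood as
\[
\ell_n(\fett\alpha, G) := \frac{1}{n+1} \sum_{t=0}^n \log P^{\fett\alpha, G}_{(X_{t-1},\ldots,X_{t-p}), X_t},
\]
I would first invoke an ergodic theorem for the $p$-th order Markov chain $(X_t)$, which under Assumption \ref{ass1} is stationary, positive Harris recurrent and aperiodic (thanks to $0 < G_0(0) < 1$ together with $\sum_i \alpha_{0,i} < 1$). For each fixed $(\fett\alpha, G)$ this gives $\ell_n(\fett\alpha, G) \to \ell_\infty(\fett\alpha, G) := \mathds{E}_{\fett\alpha_0, G_0}[\log P^{\fett\alpha, G}_{(X_{-1},\ldots,X_{-p}), X_0}]$ almost surely, and a standard Kullback--Leibler argument shows that $(\fett\alpha_0, G_0)$ maximises $\ell_\infty$.

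The first decisive step is \emph{identifiability}: if $\ell_\infty(\fett\alpha, G) = \ell_\infty(\fett\alpha_0, G_0)$, then the conditional law of $X_0$ given the history agrees under $(\fett\alpha, G)$ and $(\fett\alpha_0, G_0)$ on the support of the stationary distribution. Since $G_0(0) > 0$, histories with bounded entries carry positive stationary probability; specializing to $(0,\ldots,0)$ one reads off $G = G_0$ directly from \eqref{tpinar}, and then choosing histories with a single $1$ in coordinate $i$ yields $\mathrm{Bin}(1,\alpha_i) \ast G_0 = \mathrm{Bin}(1,\alpha_{0,i}) \ast G_0$, which forces $\alpha_i = \alpha_{0,i}$.

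The second decisive step is tightness of $\widehat{G}_n$: as $\widetilde{\mathcal{G}}$ is non-compact, asymptotic escape of mass to infinity must be excluded. Although $\widehat{G}_n$ is supported on $\{u_-,\ldots,u_+\}$, the upper endpoint $u_+ = \max_t X_t$ is random and grows with $n$. This is where the $(p{+}4)$-th moment assumption enters: the empirical moments $\frac{1}{n+1}\sum_t X_t^r$ are $O_P(1)$ for $r \leq p+4$, and combined with the likelihood inequality $\ell_n(\widehat{\fett\alpha}_n, \widehat{G}_n) \geq \ell_n(\fett\alpha_0, G_0)$ one obtains a uniform-in-$n$ tail bound on $\widehat{G}_n$. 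Pointwise convergence $\widehat{G}_n(k) \to G_0(k)$ (from the Wald argument and continuity of the transition kernel in $G$) then upgrades, via uniform integrability, to convergence in total variation, which is $\sum_k |\widehat{G}_n(k) - G_0(k)| \overset{p}{\to} 0$. Consistency of $\widehat{\fett\alpha}_n$ follows analogously from identifiability and upper semi-continuity of $\ell_\infty$ in $\fett\alpha$.

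The principal obstacle I expect is the interplay between the infinite-dimensional parameter and the fact that the NPMLE is \emph{allowed} to leave $\Theta \times \mathcal{G}$, so that several degenerate scenarios have to be excluded simultaneously: boundary behaviour $\widehat\alpha_{n,i} \to 0$ or $\to 1$, collapse $\widehat G_n(0) \to 0$, and mass of $\widehat G_n$ drifting to infinity. The standard remedy is a truncation of the support of $\widehat G_n$ at a slowly growing deterministic cutoff (justified by the moment bound), showing that the perturbation is negligible in total variation, and applying a Glivenko--Cantelli-type argument on the resulting compact truncated class. Controlling the thinning boundary requires a separate argument relying on the fact that a degenerate $\alpha_i \in \{0,1\}$ assigns zero likelihood to certain transitions that occur with positive empirical frequency under $(\fett\alpha_0, G_0)$, which would contradict the defining maximality of the NPMLE.
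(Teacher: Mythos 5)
First, note that the paper does not reprove this statement: it is quoted as Theorem~1 of \citet{drost}, and the only place its proof structure surfaces is in the appendix proof of the bootstrap analogue (Theorem~\ref{cons_bs_est}), which explicitly ``follows the proof idea'' of the original. Your skeleton --- Wald consistency, ergodic LLN for the $p$th order Markov chain, a Kullback--Leibler/identifiability step, and a separate treatment of the non-compactness of $\widetilde{\mathcal{G}}$ --- matches that structure, and your concrete identification argument (reading off $G=G_0$ from the all-zero history and then $\alpha_i=\alpha_{0,i}$ from single-one histories) is a correct instantiation of the identification property used there. The genuine divergence is in how non-compactness is handled. The paper (following Drost et al.) \emph{compactifies}: it works on $\overline{E}=[0,1]^p\times\overline{\mathcal{G}}$ with $\overline{\mathcal{G}}$ the probability measures on $\mathbb{N}_0\cup\{\infty\}$, compact under the weighted norm $\|a\|=\sum_k 2^{-k}|a(k)|$, and extends the transition kernel by $P^{\fett\alpha,G}_{x,\infty}=G(\infty)$. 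Escape of mass to infinity, $\widehat G_n(0)\to 0$, and boundary behaviour of $\widehat{\fett\alpha}_n$ are then all subsumed in the single Hellinger/KL step (via $\log x\le 2(\sqrt{x}-1)$) showing that any limit point other than the truth has strictly smaller limiting log-likelihood; convergence in the compact metric is pointwise convergence of the pmf, which upgrades to $\ell^1$ by Scheff\'e's lemma alone, since both $\widehat G_n$ and $G_0$ sum to one. You instead propose to prove tightness of $\widehat G_n$ directly from the $(p+4)$th-moment assumption and the likelihood inequality, truncate at a growing cutoff, and upgrade to total variation via uniform integrability. This is where your argument is thinnest: the claim that the likelihood inequality ``combined with'' empirical moment bounds yields a uniform-in-$n$ tail bound on $\widehat G_n$ is asserted rather than derived, and making it rigorous essentially amounts to redoing the KL comparison that the compactification packages cleanly. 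Two smaller points: the moment assumption is not really what drives consistency (the paper remarks that weaker conditions suffice for this result; the $(p+4)$th moment is needed for the empirical-process arguments behind the CLT), and the uniform-integrability step is superfluous --- pointwise convergence of pmfs with total mass one already gives $\ell^1$ convergence. So the route is workable but heavier than the paper's, with one load-bearing step left as an assertion.
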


To derive the limiting distribution of the NPMLE, \citet{drost} show that their NPMLE is actually an infinite dimensional $Z$-estimator (see, e.g., \citet{kosorok} for a definition), i.e., that it solves an infinite number of moment conditions. 
%The Z-estimator is introduced in the subsequent definition.
% \begin{Definition}[\citet{kosorok}]
% Let $\Theta$ be a (possibly infinite dimensional) normed parameter space and $\mathbb{L}$ a normed space, where the respective norms are $||\cdot ||$ and $||\cdot ||_\mathbb{L}$. Let $\Psi_n:\Theta \rightarrow \mathbb{L}$ be data-dependent functions. A quantity $\widehat{\theta}_n \in \Theta$ is a Z-estimator if $||\Psi_n(\widehat{\theta}_n)||_\mathbb{L} \overset{p}{\rightarrow} 0$. Colloquially, a Z-estimator $\widehat{\theta}_n$ is said to be the approximate zero of a data-dependent function. The $\Psi_n$ are often called estimating equations. %\red{change theta-notation?}
% \end{Definition}
We get these \emph{infinitely many} moment conditions, because we allow for unbounded innovation distributions having support $\mathbb{N}_0$. This makes the setting flexible, but also complicates the derivation of the limiting distribution. \citet{drost} handle this by constructing (artificial) probability distributions on $\mathbb{N}_0$ bounded in direction $h: \mathbb{N}_0 \rightarrow \mathbb{R}$. They only use moment conditions arising from $h \in \mathcal{H}_1$ with $\mathcal{H}_1$ being the unit ball of $\ell^\infty(\mathbb{N}_0)$, where the latter denotes the Banach space of bounded sequences equipped with the supremum norm, i.e., they only consider functions $h:\mathbb{N}_0 \rightarrow \mathbb{R}$ with $\sup_{e \in \mathbb{N}_0}|h(e)| \leq 1$. Finally, the estimating equations of the NPMLE are derived as $\Psi_n=(\Psi_{n1}, \Psi_{n2}):(0,1)^p \times \widetilde{\mathcal{G}} \rightarrow \mathbb{R}^p \times \ell^\infty(\mathcal{H}_1)$ defined by \begin{align} 
\Psi_{n1}(\fett\alpha, G) &= \frac{1}{n} \sum\limits_{t=0}^n \dot l_{\fett\alpha}(X_{t-p}, \ldots, X_t; \fett\alpha, G), \label{eq:psi_n1} \\
\Psi_{n2}(\fett\alpha, G) h &= \frac{1}{n} \sum\limits_{t=0}^n \left(A_{\fett\alpha,G} h(X_{t-p}, \ldots, X_t)-\int h dG  \right), \quad h \in \mathcal{H}_1, \label{eq:psi_n2} 
\end{align} 
where, for $x_{t-p}, \ldots, x_t \in \mathbb{N}_0$, 
\begin{align} \label{dot_l}
\dot l_{\fett\alpha} (x_{t-p}, \ldots, x_t; \fett\alpha, G) = \frac{\partial}{\partial \fett\alpha} \log \left(P^{\fett\alpha, G}_{(x_{t-1}, \ldots, x_{t-p}), x_t} \right)
\end{align}
with the convention that $\dot l_{\fett\alpha} (x_{t-p}, \ldots, x_t; \fett\alpha, G)=0$ if $P^{\fett\alpha, G}_{(x_{t-1}, \ldots, x_{t-p}), x_t}=0$ and 
\begin{align} \label{Ah}
A_{\fett\alpha,G} h(x_{t-p}, \ldots, x_t)=\text{E}_{\fett\alpha,G}\big(h(\varepsilon_t)| X_t=x_t, \ldots, X_{t-p}=x_{t-p}\big). 
\end{align}
Additionally, for $(\fett\alpha_0, G_0)$, they introduce the population counterparts of these estimating equations, $\Psi^{\fett\alpha_0,G_0}=(\Psi^{\fett\alpha_0,G_0}_1,\Psi^{\fett\alpha_0,G_0}_2):(0,1)^p\times \mathcal{G} \rightarrow \mathbb{R}^p \times \ell^\infty(\mathcal{H}_1)$, where
\begin{align}
\Psi_1^{\fett\alpha_0,G_0}(\fett\alpha,G)&=\text{E}_{\nu_{\fett\alpha_0,G_0}, \fett\alpha_0, G_0} \left( \dot l_{\fett\alpha} (X_{t-p}, \ldots, X_t;\fett\alpha, G) \right), \label{eq:psi_1} \\
\Psi_2^{\fett\alpha_0,G_0}(\fett\alpha,G) h &=\text{E}_{\nu_{\fett\alpha_0,G_0}, \fett\alpha_0, G_0} \left( A_{\fett\alpha,G} \, h(X_{-p}, \ldots, X_0)-\int h \; dG \right), \, h \in \mathcal{H}_1. \label{eq:psi_2}
\end{align}
By exploiting that their NPMLE $(\widehat{\fett\alpha}_n,\widehat G_n)$ provides a solution to the estimating equations, i.e., that $\Psi_n(\widehat{\fett\alpha}_n,\widehat G_n)=0$ holds approximately with $\Psi_n$ defined in equations \eqref{eq:psi_n1} and \eqref{eq:psi_n2}, they derive a weak convergence result
\begin{align}\label{eq:S_n_weak_convergence}
\mathcal{S}_n^{\fett\alpha_0, G_0} = \sqrt{n} \left(\Psi_n(\fett\alpha_0, G_0) -\Psi^{\fett\alpha_0, G_0}(\fett\alpha_0, G_0) \right) \leadsto \mathcal{S}^{\fett\alpha_0, G_0} %\quad \text{in} \, \mathbb{R}^p \times \ell^\infty(\mathcal{H}_1), \, \text{under} \, \mathds{P}_{\nu_0, \fett\alpha_0, G_0},
\end{align}
in $\mathbb{R}^p \times \ell^\infty(\mathcal{H}_1)$, under $\mathds{P}_{\nu_0, \fett\alpha_0, G_0}$, where \enquote{$\leadsto$} indicates weak convergence and $\mathcal{S}^{\fett\alpha_0, G_0}$ is a tight, Borel measurable, Gaussian process \citep[see][Eq. (15)]{drost}. Altogether, \citet{drost} get the following result for the limiting distribution of an NPMLE $\widehat \theta_n=(\widehat{\fett\alpha}_n,\widehat G_n)$.

\begin{Theorem}[\citet{drost}, Theorem 2] \label{clt}
Suppose Assumption \ref{ass1} holds. For $\theta_0= (\fett\alpha_0, G_0) \in \Theta \times \mathcal{G}$, any NPMLE $\widehat{\theta}_n=(\widehat{\fett\alpha}_n, \widehat{G}_n)$ satisfies 
\begin{align}\label{eq:CLT}    
\sqrt{n} \left( \widehat{\theta}_n-\theta_0 \right) \leadsto -\dot\Psi^{-1}_{\theta_0} \left(\mathcal{S}^{\theta_0}\right),
\end{align}
in $\mathbb{R}^p \times \ell^1 (\mathbb{N}_0)$, under $\mathds{P}_{\nu_{\theta_0}, \theta_0}$, where $\dot\Psi^{-1}_{\theta_0}$ is the continuous inverse of the Fréchet derivative of $\Psi$ at $\theta_0$ and  $\mathcal{S}^{\theta_0}$ is the tight, Borel measurable, Gaussian process determined by \eqref{eq:S_n_weak_convergence}.
\end{Theorem}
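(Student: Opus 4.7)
The plan is to invoke the standard master theorem for infinite-dimensional $Z$-estimators (cf.\ \citet{kosorok}, Chap.~13), using the estimating equations \eqref{eq:psi_n1}--\eqref{eq:psi_2}, the score-process convergence in \eqref{eq:S_n_weak_convergence}, and the consistency established in Theorem~\ref{cons}. Three ingredients are required: (i) an $o_p(n^{-1/2})$ bound on $\Psi_n(\widehat\theta_n)$, which follows from the first-order optimality conditions of the NPMLE, since consistency places $\widehat\theta_n$ eventually in the interior of $\Theta \times \mathcal{G}$; (ii) Fréchet differentiability of $\Psi^{\theta_0}$ at $\theta_0$ with continuously invertible derivative $\dot\Psi_{\theta_0}$; (iii) stochastic equicontinuity of the empirical process $\sqrt{n}(\Psi_n - \Psi^{\theta_0})$ around $\theta_0$, uniformly over $h \in \mathcal{H}_1$.

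Given these ingredients, I would write the basic decomposition
\[
\sqrt{n}\bigl(\Psi^{\theta_0}(\widehat\theta_n) - \Psi^{\theta_0}(\theta_0)\bigr)
= -\sqrt{n}\bigl(\Psi_n(\widehat\theta_n) - \Psi^{\theta_0}(\widehat\theta_n)\bigr) + o_p(1),
\]
and use (iii) together with $\widehat\theta_n \overset{p}{\rightarrow} \theta_0$ to replace $\widehat\theta_n$ by $\theta_0$ in the empirical-process term, identifying the latter with $-\mathcal{S}_n^{\theta_0}$ up to $o_p(1)$. On the left-hand side, Fréchet differentiability from (ii) gives
\[
\Psi^{\theta_0}(\widehat\theta_n) - \Psi^{\theta_0}(\theta_0)
= \dot\Psi_{\theta_0}(\widehat\theta_n - \theta_0) + o_p\bigl(\|\widehat\theta_n - \theta_0\|\bigr).
\]
A standard rate argument---inverting $\dot\Psi_{\theta_0}$ on both sides, invoking its continuous invertibility, and closing the loop with the $O_p(1)$-tightness of $\mathcal{S}_n^{\theta_0}$---yields the preliminary rate $\|\widehat\theta_n - \theta_0\| = O_p(n^{-1/2})$, so the linearization remainder is in fact $o_p(n^{-1/2})$. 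Combining the two displays one obtains
\[
\sqrt{n}(\widehat\theta_n - \theta_0) = -\dot\Psi_{\theta_0}^{-1}\bigl(\mathcal{S}_n^{\theta_0}\bigr) + o_p(1),
\]
and the continuous mapping theorem together with \eqref{eq:S_n_weak_convergence} delivers the asserted weak limit in $\mathbb{R}^p \times \ell^1(\mathbb{N}_0)$.

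The hard part will be step (ii): showing that $\dot\Psi_{\theta_0}$ is continuously invertible as a bounded linear map from $\mathbb{R}^p \times \ell^1(\mathbb{N}_0)$ into $\mathbb{R}^p \times \ell^\infty(\mathcal{H}_1)$. Because the innovation component $G_0$ is infinite-dimensional with unbounded support $\mathbb{N}_0$, invertibility amounts to an efficient-score computation for the semi-parametric model: one has to prove that the information operator associated with $\theta_0$ is boundedly invertible, and that the scores along $\fett\alpha$-directions are not approximable by scores along bounded $h$-directions for $G$, so that the orthogonal decomposition separating these components is non-degenerate. A secondary technical hurdle is verifying the Donsker property underlying (iii) uniformly in $h \in \mathcal{H}_1$; this is where Assumption~\ref{ass1} earns its keep, the $(p+4)$th-moment condition on $G_0$ being used to control envelopes and bracketing entropy for the classes $\{\dot l_{\fett\alpha}(\cdot;\fett\alpha,G)\}$ and $\{A_{\fett\alpha,G}h - \int h\,dG : h \in \mathcal{H}_1\}$ on a neighborhood of $\theta_0$.
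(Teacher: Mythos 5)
Your proposal follows essentially the same route as the paper, which does not prove this theorem itself but imports it from \citet{drost}; both the cited source and the paper's own bootstrap analogue (Theorem \ref{bs_clt}) rest on exactly the $Z$-estimator master theorem (Theorem 3.3.1 of \citet{vdvw}) with the three ingredients you list: approximate zero of $\Psi_n$ at $\widehat\theta_n$, Fr\'echet differentiability with continuously invertible $\dot\Psi_{\theta_0}$ (Lemma 2 in \citet{drost}), and the weak convergence \eqref{eq:S_n_weak_convergence} plus stochastic equicontinuity. You also correctly identify the genuinely hard steps (continuous invertibility of the information operator and the Donsker verification over $\mathcal{H}_1$ using the $(p+4)$th-moment condition), so the outline is sound.
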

According to Theorem \ref{clt}, the limiting distribution of the NPMLE is a transformed Gaussian process. However, due to the infinite dimensional parameter space, this transformation is rather complicated and relies on the (inverted) Fréchet derivative $\dot\Psi^{-1}_{\theta_0}$ of $\Psi^{\theta_0}$ given in \eqref{eq:psi_1} and \eqref{eq:psi_2} and given in Section 3.2 of \citet{drost}, which is cumbersome in practical applications \citep[see also][for a discussion]{huang}.
% (see, e.g., \citet{huang} for a discussion). 
This generally motivates to use a suitable bootstrap procedure for statistical inference instead of using asymptotic approximations that requires the cumbersome explicit estimation of many nuisance parameters. We propose and investigate a semi-parametric INAR bootstrap procedure in the following section.

%\red{To justify the use of bootstrap inference, we have to prove that the bootstrap version of the NPMLE provides in probability the same limiting distribution as the NPMLE (see Theorem \ref{clt}). Corresponding asymptotic theory will be derived in the subsequent Section \ref{sec:main}. In Section \ref{sec:sim}, we provide simulation results illustrating the finite sample performance of the semi-parametric INAR bootstrap procedure compared to the asymptotic results \red{?}. Section \ref{sec:concl} concludes and some of the proofs are deferred to the Appendix \red{?}.}

\section{Semi-parametric INAR Bootstrap} \label{sec:main}

In this section, to enable suitable semi-parametric bootstrap inference in INAR models, we propose to use the semi-parametric version of the INAR bootstrap as proposed in \citet{jewe}, which they proved to be consistent for statistics belonging to the large class of functions of generalized means of $(X_t,t\in\mathbb{Z})$ under mild assumptions. However, this class of statistics does not contain statistics depending on the estimated innovation distribution. For the purpose of extending the existing theory and to cover also such statistics, in Section \ref{sec:boot_algo}, we introduce the semi-parametric INAR bootstrap scheme and provide the required notation. Further, in Section \ref{sec:boot_theory},
% to justify the use of corresponding bootstrap inference, 
we prove bootstrap consistency by showing that the bootstrap version of the NPMLE leads in probability (conditional on the data) to the same limiting distribution as obtained for the NPMLE given in Theorem \ref{clt}.

\subsection{The semi-parametric INAR Bootstrap Scheme}\label{sec:boot_algo}

For bootstrap inference of the NPMLE, the semi-parametric INAR bootstrap scheme, as proposed in \citet{jewe} for functions of generalized means and implemented in the R package \textit{spINAR} by \citep{faymonville2024spinar}, is defined as follows:

\begin{itemize}
    \item[Step 1.)] Given a sample (including pre-sample values) of count data $(X_{-p},\ldots,X_{-1}),X_0,\ldots,X_n$, fit semi-parametrically an INAR($p$) process \eqref{inarp}
    % $X_t = \sum\limits_{i=1}^p \alpha_i \circ X_{t-i} + \varepsilon_t$ 
    using the NPMLE \eqref{NPMLE_maximization_problem} as proposed by \citet{drost} to get estimators $\widehat{\fett\alpha}_n=(\widehat{\alpha}_{n,1}, \ldots, \widehat{\alpha}_{n,p})'$ and $\widehat{G}_{n} = (\widehat{G}_{n}(k), \, k \in \mathbb{N}_0)$ for the INAR coefficients and for the pmf of the innovation distribution, respectively.
    \item[Step 2.)] Generate bootstrap observations $(X_{-p}^*,\ldots,X_{-1}^*),X_0^*, \ldots, X_n^*$ according to \[X_t^* = \widehat{\alpha}_{n,1} \circ^* X_{t-1}^*+ \ldots + \widehat{\alpha}_{n,p} \circ^* X_{t-p}^* + \varepsilon_t^*, \] where \enquote{$\circ^*$} denotes (mutually independent) bootstrap binomial thinning operations and $(\varepsilon_t^*,t\in\mathbb{Z})$ denotes an i.i.d.~bootstrap innovation process with $\varepsilon_t^*\sim \widehat G_n$ (conditionally on the data).
    % $\varepsilon_t^* \overset{\text{i.i.d.}}{\sim} \widehat{G}_{n} $.
    \item[Step 3.)] Compute the bootstrap NPMLE $\widehat \theta_n^*=(\widehat{\fett\alpha}_n^*,\widehat G_n^*)$, where $\widehat{\fett\alpha}_n^*=(\widehat{\alpha}_{n,1}^*, \ldots, \widehat{\alpha}_{n,p}^*)$ and $\widehat{G}_n^*=(\widehat{G}_n^*(k), \, k\in \mathbb{N}_0)$, according to equation \eqref{NPMLE_maximization_problem}, but applied to the bootstrap sample $(X_{-p}^*,\ldots,X_{-1}^*),X_0^*,\ldots,X_n^*$. That is, for any fixed $n\in\mathbb{N}_0$, a bootstrap NPMLE (bootNPMLE) $\widehat \theta_n^*:=(\widehat{\fett\alpha}_n^*, \widehat{G}_n^*)=(\widehat{\alpha}_{n,1}^*, \ldots, \widehat{\alpha}_{n,p}^*, \widehat{G}_n^*(0), \widehat{G}_n^*(1), \ldots)$ is defined to maximize the bootstrap version of the conditional likelihood in \eqref{NPMLE_maximization_problem}, i.e., 
\begin{align}\label{NPMLE_maximization_problem_boot} 
% \forall n \in \mathbb{Z}_+: 
(\widehat{\fett\alpha}_n^*, \widehat{G}_n^*) \in \underset{(\fett\alpha,G) \in [0,1]^p \times \widetilde{\mathcal{G}}}{\text{argmax}} \left( \prod\limits_{t=0}^n P^{\fett\alpha,G}_{(X_{t-1}^*, \ldots, X_{t-p}^*),X_t^*}  \right),
\end{align}  
under analogous conditions as 
%stated after \eqref{NPMLE_maximization_problem} and 
described in the restricted optimization problem \eqref{eq:restictied_optimization_problem}.
    % \item[Step 4.)] Repeat Steps 2.) and 3.) $B$-times, where $B$ is large, to get 
\end{itemize}

% Note that \citet{jewe} proved consistency of this semi-parametric bootstrap procedure already for statistic belonging to the class of functions of generalized means. However, this class of statistics does not contain statistics depending on the estimated innovation distribution. 

% \subsection{Bootstrap notation and assumptions}\label{sec:boot_notation}

Analogous to \citet{drost}, for the bootstrap NPMLE (bootNPMLE) $\widehat \theta_n^*=(\widehat{\fett\alpha}_n^*,\widehat G_n^*)$, which fulfills a \enquote{bootstrap version} of Assumption \ref{ass1} by Lemma \ref{lemma_rng}, we get a $Z$-estimator representation with corresponding estimating equations $\Psi_n^*=(\Psi_{n_1}^*, \Psi_{n_2}^*):(0,1)^p \times \widetilde{\mathcal{G}} \rightarrow \mathbb{R}^p \times \ell^\infty(\mathcal{H}_1)$ defined by  \begin{align} 
\Psi_{n,1}^*(\fett\alpha, G) &= \frac{1}{n} \sum\limits_{t=0}^n \dot l_{\fett\alpha}(X_{t-p}^*, \ldots, X_t^*; \alpha, G),  \label{Psi_n1*} \\
\Psi_{n,2}^*(\fett\alpha, G) h &= \frac{1}{n} \sum\limits_{t=0}^n \left( A_{\fett\alpha, G} h(X_{t-p}^*, \ldots, X_t^*)-\int h \; dG  \right), \quad h \in \mathcal{H}_1.  \label{Psi_n2*}
\end{align}
Throughout this paper, as usual in bootstrap literature, we make use of the following notation. With $\text{E}^*(\cdot), \,\text{Var}^*(\cdot)$ and $\text{Cov}^*(\cdot)$, we denote the bootstrap expected value and (co)variance, respectively, given the original observations. That is, for $\mathbb{X}=(X_{-p},\ldots,X_{-1},X_0, \ldots, X_n)$, we have $\text{E}^*(\cdot)=\text{E}(\cdot|\mathbb{X})$, $\text{Var}^*(\cdot)=\text{Var}(\cdot|\mathbb{X})$ and $\text{Cov}^*(\cdot,\cdot)=\text{Cov}(\cdot,\cdot|\mathbb{X}).$ To be in-line with the notation of \citet{drost}, we clarify $\text{E}(\cdot) \, \widehat{=} \, \text{E}_{\fett\alpha_0,G_0}(\cdot)$, $\text{E}^*(\cdot) \, \widehat{=} \allowbreak \, \text{E}_{\widehat{\fett\alpha}_n, \widehat{G}_n}(\cdot |\mathbb{X})$ as well as $P\widehat{=}\mathds{P}_{\fett\alpha, G}$ and $P^*(\cdot)\widehat{=}\mathds{P}_{\fett\alpha, G}(\cdot|\mathbb{X})$.

\subsection{Bootstrap theory}\label{sec:boot_theory}

For the subsequent asymptotic theory for the semi-parametric INAR bootstrap, we 
% need to 
impose the following regularity condition. 

\begin{asum}[Innovation distribution regularity] \label{asum_mix_gu} \phantom \\
% \begin{itemize}
%     \item[(i)] Suppose that the INAR process $(X_t, t \in \mathbb{Z})$ is geometrically strong mixing ($\alpha$-mixing) around $\theta_0=(\alpha_0,G_0)$. That is, in an $\epsilon$-ball around $\theta_0=(\alpha_0,G_0)$, the sequence of mixing coefficients $\alpha_{\text{mix}}$ fulfills $\alpha_{\text{mix}}(h)\leq const. \,  \rho^h$  for some $\rho\in[0,1)$ and for all $h\in\mathbb{Z}_+$ such that $\alpha_{\text{mix}}(h)\rightarrow 0$ exponentially fast as $h \rightarrow \infty$.
%     \item[(ii)] 
    Suppose that the pmf $G$ is bounded away from zero by an exponentially decaying function. That is, let $G\in \mathcal{G}_u$, where
\begin{align}    
\label{eq:exp_decay}
\mathcal{G}_u =&\Big\{G \in \mathcal{G}: \exists c_1 \in (0,1], \, c_2 \in (0,\infty) \text{ such that } G(k) \geq c_1 e^{-c_2k} \, \forall k \in \mathbb{N}_0  \Big\}.
\end{align}
% \end{itemize}
\end{asum}
The previous Assumption \ref{asum_mix_gu} is rather mild and not restrictive in practice. 
% As stated by \citet{schweer_weiss} based on \citet{bradley_mix} each stationary, aperiodic and ergodic Markov chain with a countable range is necessarily $\alpha$-mixing with certain weights $\alpha(h) \rightarrow 0$. For some INAR(1) processes and variations thereof (geometric) strong mixing has been proven; see, e.g., \citet{schweer_weiss} for a general proof for CP-INAR(1) processes which, among others, contain Poi-INAR(1) and NB-INAR(1) processes. See also \citet{bradley_mix_inar} for a general discussion of mixing properties of INAR processes.
%While $\mathcal{G}_{finite}$ covers all innovation distributions with finite support such as, e.g., the binomial, multinomial or discrete uniform distributions, 
$\mathcal{G}_u$ contains all innovation distributions for which there exists a strictly positive and exponentially decaying function that always \enquote{lies below} the pmf of the innovation distribution. This property is fulfilled by several innovation distributions like, e.g., the Poisson, the negative binomial and the geometric distributions. In particular, it guarantees $G(k)>0$ for all $k\in\mathbb{N}_0$ such that boundary issues are ruled out. Note that \citet{drost} do not make such an assumption. However, in view of the limiting distribution that they obtained in Theorem \ref{cons}, asymptotic normality of $\widehat G(k)$ cannot hold due to $z_k\geq 0$ by construction in \eqref{eq:restictied_optimization_problem}. Hence, it appears that \citet{drost} do require such an assumption as well. 
% Note that for the case of an innovation distribution with bounded support $\mathcal{N} \subset \mathbb{N}_0$ \eqref{eq:exp_decay} only needs to be fulfilled $\forall k \in\mathcal{N}$.

% \subsection{Bootstrap theory}\label{sec:boot_theory}

Along the lines of the asymptotic theory established for the NPMLE in \citet{drost}, we need to first derive ``estimation consistency'' of the bootNPMLE. Such a result constitutes the bootstrap version of Theorem \ref{cons} (i.e., of Theorem 1 in \citet{drost}) and is required to argue that also the bootNPMLE is away from the boundary of the parameter space $\Theta \times \mathcal{G}$. Under the assumptions introduced above, we get the following result.
% we need some results beforehand, which we state in Lemma \ref{frechet}-\ref{asympt_norm}. 
% Although it is later implied by the bootstrap version of the limiting distribution result, we need some version of consistency of our bootstrap estimates, see Theorem \ref{cons_bs_est}, beforehand since, as usual in ML estimation, we need to exploit it in the proof for asymptotic normality. We impose the following assumption.

\begin{Theorem}[Estimation consistency of the bootNPMLE] \label{cons_bs_est}
Suppose Assumptions \ref{ass1} and \ref{asum_mix_gu} hold and we observe data $X_{-p},\ldots,X_n$ from an INAR($p$) process $(X_t,t \in \mathbb{Z})$ with INAR coefficients $\fett\alpha_0$ and innovation distribution $G_0$ such that $\theta_0=(\fett\alpha_0, G_0) \in \Theta \times \mathcal{G}$. Let $\text{E}(X_1^k) < \infty$ for some $k>2(p+4)$ and $\thn=(\widehat{\fett\alpha}_n, \widehat{G}_n)$ be an NPMLE of $(\fett\alpha_0, G_0)$. Suppose the bootstrap proposal from Section \ref{sec:boot_algo} is used to get a bootNPMLE $\thn^*=(\widehat{\fett\alpha}_n^*, \widehat{G}_n^*)
% =  (\widehat{\alpha}_{n,1}^*,\ldots,\widehat{\alpha}_{n,p}^*, \widehat{G}_n^*(0), \widehat{G}_n^*(1), \ldots)
$. Then, for 
% For all $ (\widehat{\alpha}_n, \widehat{G}_n)\in \Theta \times \mathcal{G}$ and 
all initial probability measures $\nu_{\fett\alpha_0, G_0}$ on $\mathbb{N}^p_0$, 
% and for any bootNPMLE $\widehat{\theta}_n^*=(\widehat{\alpha}_n^*, \widehat{G}_n^*)$, 
we have 
% of $(\widehat{\alpha}_n, \widehat{G}_n)$
\[ \widehat{\fett\alpha}_n^* - \widehat{\fett\alpha}_n  \overset{p^*}{\rightarrow} 0 \quad \text{and} \quad \sum \limits_{k=0}^\infty | \widehat{G}_n^*(k) - \widehat{G}_n(k) |  \overset{p^*}{\rightarrow} 0 \] 
under $P^*(\cdot)= \mathds{P}_{\nu_{\widehat{\fett\alpha}_n,\widehat{G}_n}, \widehat{\fett\alpha}_n, \widehat{G}_n}(\cdot|\mathbb{X})$ 
% (conditional on the data) 
in $\mathds{P}_{\nu_{\fett\alpha_0,G_0}, \fett\alpha_0, G_0}$-probability. %, where $P^*= \mathds{P}_{\nu_{\widehat{\alpha}_n,\widehat{G}_n}, \widehat{\alpha}_n, \widehat{G}_n}$ (conditional on the data).
\end{Theorem}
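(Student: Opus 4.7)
The plan is to view the statement as a bootstrap counterpart of \citet{drost}'s original consistency result (our Theorem \ref{cons}): conditionally on the data $\mathbb{X}$, the bootstrap sample $X_{-p}^*,\ldots,X_n^*$ is a genuine INAR($p$) process with (given $\mathbb{X}$) fixed parameters $(\widehat{\fett\alpha}_n, \widehat{G}_n)$, and the bootNPMLE $\widehat{\theta}_n^*$ is defined by the same maximization \eqref{NPMLE_maximization_problem_boot} as the NPMLE. Hence the strategy is to invoke Theorem \ref{cons} in the bootstrap world, with $(\widehat{\fett\alpha}_n, \widehat{G}_n)$ playing the role of $(\fett\alpha_0, G_0)$ and the initial distribution $\nu_{\widehat{\fett\alpha}_n, \widehat{G}_n}$ playing the role of $\nu_{\fett\alpha_0,G_0}$, so that $\widehat{\fett\alpha}_n^* \to \widehat{\fett\alpha}_n$ and $\sum_k |\widehat{G}_n^*(k)-\widehat{G}_n(k)| \to 0$ in $P^*$-probability; the outer ``in $P$-probability'' statement then follows.

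First, I would verify the prerequisites for applying Theorem \ref{cons} to the bootstrap DGP. By Theorem \ref{cons} applied to the original data, $\widehat{\fett\alpha}_n \overset{p}{\to} \fett\alpha_0 \in \Theta$ and $\sum_k|\widehat{G}_n(k)-G_0(k)| \overset{p}{\to} 0$, so on a set of $P$-probability tending to one $\widehat{\fett\alpha}_n$ lies in a compact subset of the open set $\Theta$ and $0<\widehat{G}_n(0)<1$. The moment condition $\text{E}_{\widehat{G}_n}(\varepsilon_t^{p+4})<\infty$ is a.s.\ automatic since $\widehat{G}_n$ has finite support, but I need a uniform-integrability statement so that this moment is $O_P(1)$ as $n \to \infty$; this is exactly where the strengthened assumption $\text{E}(X_1^k)<\infty$ for some $k>2(p+4)$ enters, controlling the maximum $u_+ = \max_t X_t$ and hence the tail of $\widehat{G}_n$ relative to $G_0$. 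Assumption \ref{asum_mix_gu} additionally ensures the bootstrap innovation distribution inherits the exponential lower bound, so on an event of large $P$-probability we have $(\widehat{\fett\alpha}_n, \widehat{G}_n) \in \Theta \times \mathcal{G}_u$ (this is presumably what Lemma \ref{lemma_rng}, referenced in the paper, formalizes). On this event the bootstrap INAR chain is geometrically ergodic with stationary distribution $\nu_{\widehat{\fett\alpha}_n, \widehat{G}_n}$.

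The main obstacle is that Theorem \ref{cons} is stated for a fixed $\theta_0$, whereas here the ``truth'' $\widehat{\theta}_n$ is random. I would therefore revisit the Drost et al.\ consistency proof and make it uniform in $\theta$ over compact subsets of $\Theta \times \mathcal{G}_u$: identifiability of the INAR($p$) model is a global property of the parameter space, and the Wald-type ULLN for the log-likelihood contrasts can be made uniform in $\theta$ because, under Assumption \ref{asum_mix_gu} and the stronger moment condition, the geometric mixing rate and the moment bounds on the relevant envelopes are uniform over any such compact neighborhood of $\theta_0$. A clean way to organize this is the usual subsequence argument: along any subsequence along which $\widehat{\theta}_n$ takes values in a fixed compact $K\subset \Theta\times \mathcal{G}_u$ containing $\theta_0$, the conditional bootstrap consistency follows from the uniform version of Theorem \ref{cons}, and the $P^*$-convergence in $P$-probability then follows from the standard ``almost-sure representation along subsequences'' device. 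Once this uniformization is in place, both $\widehat{\fett\alpha}_n^* - \widehat{\fett\alpha}_n \overset{p^*}{\to} 0$ and the total-variation convergence of $\widehat{G}_n^*$ to $\widehat{G}_n$ are delivered by the bootstrap analogue of Theorem \ref{cons}, completing the proof.
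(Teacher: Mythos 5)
Your high-level strategy -- treat the bootstrap sample, conditionally on $\mathbb{X}$, as an INAR($p$) process with ``true'' parameter $\widehat{\theta}_n$ and rerun the consistency argument there -- is exactly the route the paper takes, and you correctly identify both the role of the moment condition $\mathrm{E}(X_1^k)<\infty$, $k>2(p+4)$ (controlling $\max_t X_t$ and hence $\sum_k k^{p+4}\widehat{G}_n(k)$, cf.\ Lemma \ref{lemma_rng}) and the fact that the real obstacle is that the ``truth'' $\widehat{\theta}_n$ is a random, $n$-dependent center. However, the step that carries all the weight -- a consistency theorem valid for a triangular array of true parameters $\widehat{\theta}_n\to\theta_0$ -- is asserted (``the Wald-type ULLN \ldots can be made uniform in $\theta$'') rather than proved, and no such uniform version of Theorem \ref{cons} exists in \citet{drost} to be invoked. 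The paper supplies precisely this missing content by redoing Wald's argument for the bootstrap process: compactifying to $\overline{E}=[0,1]^p\times\overline{\mathcal{G}}$ with the weighted norm $\sum_k 2^{-k}|a(k)|$ (note that $\mathcal{G}$ has no compact neighborhoods in $\ell^1$, so your ``compact subsets of $\Theta\times\mathcal{G}_u$ containing $\theta_0$'' need this compactification to make sense); proving $M_n^*(\fett\alpha,G)-M_n(\fett\alpha,G)\overset{p^*}{\to}0$ pointwise on $\overline{E}$ via $\mathrm{E}^*(M_n^*)=M_n$ and a covariance bound that uses the geometric mixing of the bootstrap chain together with explicit bounds on $\mathrm{E}^*|\log P^{\fett\alpha,G}_{X_0^*,X_1^*}|^q$, where Assumption \ref{asum_mix_gu} is what tames $-\log G(y)\le -\log c_1+c_2 y$; and verifying the identification property and the unique maximization of $\mathrm{E}^*(M_n^*)$ at $\widehat{\theta}_n$ via the Hellinger-type inequality $\log x\le 2(\sqrt{x}-1)$. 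None of this is carried out in your proposal.

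There is also one concrete error: the claim that ``Assumption \ref{asum_mix_gu} ensures the bootstrap innovation distribution inherits the exponential lower bound, so $(\widehat{\fett\alpha}_n,\widehat{G}_n)\in\Theta\times\mathcal{G}_u$ with high probability'' is false. The NPMLE $\widehat{G}_n$ assigns all its mass to $\{u_-,\ldots,u_+\}$, so $\widehat{G}_n(k)=0$ for $k>u_+$ and $\widehat{G}_n$ can never satisfy $\widehat{G}_n(k)\ge c_1e^{-c_2k}$ for all $k\in\mathbb{N}_0$. Lemma \ref{lemma_rng} only gives $\widehat{\theta}_n\in\Theta\times\mathcal{G}$ with probability tending to one. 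Consequently your compact sets $K\subset\Theta\times\mathcal{G}_u$ cannot contain the random centers $\widehat{\theta}_n$, and the subsequence/uniformization scheme as you have set it up does not get off the ground; in the paper Assumption \ref{asum_mix_gu} is applied to the (fixed) evaluation point $G$ of the criterion function, not to $\widehat{G}_n$.
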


% Note that Theorems \ref{cons} and \ref{cons_bs_est} together imply
% \begin{align*}
% |\widehat{\alpha}_n^* - \alpha_n| &\leq |\widehat{\alpha}_n^* - \widehat{\alpha}_n| + |\widehat{\alpha}_n - \alpha_n| \overset{p^*}{\rightarrow} 0,  \\
% \sum \limits_{k=0}^\infty | \widehat{G}_n^*(k) -  G_n(k) | &\leq \sum \limits_{k=0}^\infty | \widehat{G}_n^*(k) - \widehat{G}_n(k) | + \sum \limits_{k=0}^\infty | \widehat{G}_n(k) - G_n(k) | \overset{p^*}{\rightarrow} 0
% \end{align*}
% such that $\thn^*=(\widehat{\alpha}_n^*, \widehat{G}_n^*)\in \Theta \times \mathcal{G}$ holds with probability tending to one.

Now, we get to the main result of this paper, the limiting distribution of the bootNPMLE. As the resulting limiting distribution does coincide with the limiting distribution derived in Theorem \ref{clt} \citep[i.e., in Theorem 2 in][]{drost}, the following Theorem \ref{bs_clt} proves bootstrap consistency of the semi-parametric INAR bootstrap for the NPMLE.

\begin{Theorem}[Bootstrap consistency of bootNPMLE] \label{bs_clt}
Suppose Assumptions \ref{ass1} and \ref{asum_mix_gu} hold and we observe data $X_{-p},\ldots,X_n$ from an INAR($p$) process $(X_t, t \in \mathbb{Z})$ with INAR coefficients $\fett\alpha_0$ and innovation distribution $G_0$ such that $\theta_0=(\fett\alpha_0, G_0) \in \Theta \times \mathcal{G}$. Let $\text{E}(X_1^k) < \infty$ for some $k>2(p+4)$, $\text{E}((X_t^3(1+\rho)^{X_t}))^{1+\delta}<\infty$ for some $\rho, \delta>0$ and $\widehat{\theta}_n=(\widehat{\fett\alpha}_n, \widehat{G}_n)$ be an NPMLE of $\theta_0=(\fett\alpha_0, G_0)$. Suppose the bootstrap proposal from Section \ref{sec:boot_algo} is used to get a bootNPMLE $\thn^*=(\widehat{\fett\alpha}_n^*, \widehat{G}_n^*)
% =  (\widehat{\alpha}_{n,1}^*,\ldots,\widehat{\alpha}_{n,p}^*, \widehat{G}_n^*(0), \widehat{G}_n^*(1), \ldots)
$.
% Suppose Lemma \ref{frechet}-\ref{asympt_norm} hold. 
Then, for all initial probability measures $\nu_{\fett\alpha_0, G_0}$ on $\mathbb{N}^p_0$, 
we have 
% For $\widehat{\theta}_n \in \Theta \times \mathcal{G}$, 
% any NPMLE $\widehat{\theta}_n^*=(\widehat{\alpha}_n^*, \widehat{G}_n^*)$ satisfies 
\begin{align}\label{eq:bootstrap_CLT}    
\sqrt{n} \left( \widehat{\theta}_n^*-\widehat{\theta}_n \right) = \sqrt{n} \left( (\widehat{\fett\alpha}_n^*, \widehat{G}_n^*)-(\widehat{\fett\alpha}_n, \widehat{G}_n) \right) \leadsto^* -\dot\Psi^{-1}_{\theta_0} \left(\mathcal{S}^{\theta_0}\right)  
\end{align} 
in $\mathbb{R}^p \times \ell^1 (\mathbb{N}_0)$, under $P^*(\cdot)= \mathds{P}_{\nu_{\widehat{\fett\alpha}_n,\widehat{G}_n}, \widehat{\fett\alpha}_n, \widehat{G}_n}(\cdot|\mathbb{X})$ 
% (conditional on the data)
in $\mathds{P}_{\nu_{\theta_0}, \theta_0}$-probability.
% in $P_{\nu_{\theta_0}, \theta_0}$-probability in $\mathbb{R}^p \times \ell^1 (\mathbb{N}_0)$, where \enquote{$\leadsto^*$} indicates again weak convergence   (conditional on the data).%\footnote{\red{explain arrow?}}
\end{Theorem}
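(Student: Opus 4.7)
The plan is to recast the bootNPMLE as the (approximate) solution of the bootstrap estimating equations $\Psi_n^*(\widehat\theta_n^*) = o_{p^*}(n^{-1/2})$ and to mirror the $Z$-estimator linearization underlying the proof of Theorem \ref{clt} in \citet{drost}, with the original sample replaced by the bootstrap sample and $\theta_0$ replaced by $\widehat\theta_n$. Theorem \ref{cons_bs_est} guarantees that $\widehat\theta_n^*$ lies in the interior of $\Theta\times\mathcal{G}$ with $P^*$-probability tending to one, so the estimating equations genuinely hold in the limit. Combining $\Psi^{\widehat\theta_n}(\widehat\theta_n)=0$ with (i) asymptotic equicontinuity of $\sqrt{n}(\Psi_n^* - \Psi^{\widehat\theta_n})$ on a $P^*$-shrinking neighborhood of $\widehat\theta_n$, and (ii) Fréchet differentiability of $\Psi^{\widehat\theta_n}$ at $\widehat\theta_n$, I would obtain the linearization
\begin{equation*}
\sqrt{n}\bigl(\widehat\theta_n^* - \widehat\theta_n\bigr) = -\dot\Psi_{\widehat\theta_n}^{-1}\bigl(\sqrt{n}\,\Psi_n^*(\widehat\theta_n)\bigr) + o_{p^*}(1).
\end{equation*}
Consistency of $\widehat\theta_n$ (Theorem \ref{cons}) together with continuity of $\theta\mapsto\dot\Psi_\theta^{-1}$ at $\theta_0$, established in Section 3.2 of \citet{drost}, then yields $\dot\Psi_{\widehat\theta_n}^{-1}\to\dot\Psi_{\theta_0}^{-1}$ in operator norm.

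The core of the proof is therefore the bootstrap score CLT
\begin{equation*}
\mathcal{S}_n^{*,\widehat\theta_n} := \sqrt{n}\bigl(\Psi_n^*(\widehat\theta_n) - \Psi^{\widehat\theta_n}(\widehat\theta_n)\bigr) \ \leadsto^* \ \mathcal{S}^{\theta_0}
\end{equation*}
in $\mathbb{R}^p\times\ell^\infty(\mathcal{H}_1)$, conditional on $\mathbb{X}$ in $\mathds{P}_{\nu_{\theta_0},\theta_0}$-probability; the continuous mapping theorem then delivers the stated limit $-\dot\Psi_{\theta_0}^{-1}(\mathcal{S}^{\theta_0})$. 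To establish this CLT I would rerun the argument used by \citet{drost} for \eqref{eq:S_n_weak_convergence} in the bootstrap world: finite-dimensional convergence via a martingale/Markov-chain CLT applied to the bootstrap chain $(X_t^*)$ with innovations $\widehat G_n$ and thinning parameters $\widehat{\fett\alpha}_n$, and asymptotic tightness over $\mathcal{H}_1$ via bracketing-entropy control of the class $\{A_{\widehat{\fett\alpha}_n,\widehat G_n}h - \int h\,d\widehat G_n : h\in\mathcal{H}_1\}$ centered by its bootstrap mean.

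The main obstacle will be the \emph{conditional} nature of these ingredients: the bracketing bounds of \citet{drost} depend quantitatively on the exponential tail of the innovation distribution and on the $(p+4)$-th moment of $X_t$, and in the bootstrap world these must be controlled uniformly along a $P$-probability-one sequence of data-driven parameters $\widehat\theta_n$. I would handle this in three steps. First, Theorem \ref{cons} together with Lemma \ref{lemma_rng} shows that, with $P$-probability tending to one, $\widehat G_n\in\mathcal{G}_u$ with uniform exponential-decay constants and that the bootstrap stationary law $\nu_{\widehat{\fett\alpha}_n,\widehat G_n}$ inherits the geometric ergodicity of $\nu_{\theta_0}$; this is precisely why Assumption \ref{asum_mix_gu} is imposed. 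Second, the strengthened moment conditions $E(X_1^k)<\infty$ for some $k>2(p+4)$ and $E((X_t^3(1+\rho)^{X_t})^{1+\delta})<\infty$ are exactly what is needed to transfer the bracketing envelopes to the bootstrap chain: they ensure that the relevant bootstrap moments of order $p+4$, together with an exponentially-weighted moment, converge in $P$-probability to their population counterparts, which yields the uniform control of brackets over $\mathcal{H}_1$. Third, a conditional martingale CLT in the spirit of Bickel--Freedman, adapted to a stationary Markov chain with data-driven transition kernel, combined with this uniform bracketing control delivers the conditional weak convergence; the non-stationary initial block $(X_{-p}^*,\ldots,X_0^*)$ contributes only an $o_{p^*}(1)$ remainder by geometric ergodicity. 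Assembling the linearization, the conditional score CLT, and the operator-norm convergence $\dot\Psi_{\widehat\theta_n}^{-1}\to\dot\Psi_{\theta_0}^{-1}$ via Slutsky's lemma concludes the proof.
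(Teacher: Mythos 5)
Your skeleton matches the paper's: both proofs run the $Z$-estimator master theorem (Theorem 3.3.1 of \citet{vdvw}, as in Theorem 2 of \citet{drost}) with $\theta_0$ replaced by $\widehat\theta_n$, i.e.\ a linearization $\sqrt{n}\,\dot\Psi^{\theta_0}(\widehat\theta_n^*-\widehat\theta_n)=\sqrt{n}(\Psi_n-\Psi_n^*)(\widehat\theta_n)+o_{p^*}(1)$ obtained from the approximate score equations, an asymptotic-equicontinuity step (the paper's Lemma \ref{negl}), convergence of the derivative to $\dot\Psi_{\theta_0}$ (the paper does this via the finite-sample derivative $\dot\Psi_n^{\xi_n}$ at an intermediate point and Lemma \ref{unif_zws}, rather than your operator-norm continuity of $\theta\mapsto\dot\Psi_\theta^{-1}$ at $\theta_0$ — a cosmetic difference), and the continuous mapping theorem. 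Where you genuinely diverge is the core ingredient, the conditional CLT for the bootstrap score. You propose to re-derive the functional CLT for the bootstrap empirical process from scratch: finite-dimensional convergence via a conditional martingale/Markov-chain CLT for the chain with kernel $(\widehat{\fett\alpha}_n,\widehat G_n)$, plus tightness over $\mathcal{H}_1$ via bracketing-entropy bounds transferred to the data-driven parameter. The paper instead observes that $\sqrt{n}(\Psi_n^*(\widehat\theta_n)-\Psi_n(\widehat\theta_n))$ is (up to $o_{p^*}(1)$) a centered function of generalized means of the bootstrap sample and simply invokes the already-established bootstrap consistency for such statistics (Corollary 4.2 of \citet{jewe}) in Lemma \ref{asympt_norm}. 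The paper's reduction is far more economical and avoids exactly the uniformity issues your plan must confront; your route is self-contained in principle but its hardest steps are asserted rather than executed.

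There is also a concrete flaw in your first step. You claim that Theorem \ref{cons} and Lemma \ref{lemma_rng} give $\widehat G_n\in\mathcal{G}_u$ with uniform exponential-decay constants, and you make this the basis for transferring the bracketing envelopes to the bootstrap world. This cannot be right: by construction the NPMLE assigns all its mass to the finite set $\{u_-,\dots,u_+\}$ with $u_+=\max_t X_t$, so $\widehat G_n(k)=0$ for all $k>u_+$ and no bound of the form $\widehat G_n(k)\geq c_1e^{-c_2k}$ for all $k\in\mathbb{N}_0$ can hold for any sample size. Lemma \ref{lemma_rng} only yields $\widehat G_n\in\mathcal{G}$ (interior value of $\widehat G_n(0)$ and bounded $(p{+}4)$-th moment), not membership in $\mathcal{G}_u$. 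Since your tightness argument is premised on this uniform lower bound for the bootstrap innovation pmf, the step as written fails; you would need either to regularize $\widehat G_n$ before resampling or to find bracketing bounds that do not require a pointwise lower bound on the resampled pmf. This is precisely the difficulty the paper's reduction to \citet{jewe} sidesteps.
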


\begin{proof}
Along the lines of the proof of Theorem 2 in \citet{drost}, we make use of Theorem 3.3.1 in \citet{vdvw} in the following to prove asymptotic normality in \eqref{eq:bootstrap_CLT}. By construction of the (boot)NPMLE as $Z$-estimator, we approximately have 
\begin{align} \label{eq:approxzero}
\Psi_n(\thn)=0 \quad \text{and} \quad \Psi_n^*(\thns)=0.
\end{align}
Using (\ref{eq:approxzero}), we get \begin{align}\label{frst}
\sqrt{n}(\Psi_n(\thns)-\Psi_n(\thn)) = \sqrt{n}(\Psi_n(\thns)-\Psi_n^*(\thns)). %\red{+ o_p(?)}
\end{align}
Knowing from Lemma \ref{frechet} that $\Psi_n$ is Fréchet-differentiable, according to \eqref{taylor_lemma}, we can replace the left-hand side of \eqref{frst} by  
\begin{align} \label{taylor} \sqrt{n}\left(\dot\Psi_{n}^{\xi_n}(\thns-\thn)\right),\end{align} 
where $||\xi_n - \thn|| \leq ||\thns - \thn||$. Using the uniform convergence of $\dot\Psi_{n}^{\xi_n}$ to $\dot\Psi^{\theta_0}$ established in Lemma \ref{unif_zws}, \eqref{taylor} is asymptotically equivalent to
\begin{align}
 \sqrt{n}\left(\dot\Psi^{\theta_0}(\thns-\thn)\right).
\end{align}
Further, using the result of Lemma \ref{negl}, we can rewrite also the right-hand side of \eqref{frst}. Altogether, we have 
\begin{align} \label{eq:last}
\sqrt{n}\left(\dot\Psi^{\theta_0}(\thns-\thn)\right) = \sqrt{n}\left(\Psi_n(\thn)-\Psi_n^*(\thn)\right) +o_{p^*}(1).
\end{align} 
Finally, using the asymptotic normality result from 
% Lemma \ref{centr} (correct centering), 
Lemma \ref{asympt_norm} and the fact that $\dot\Psi_{\tz}$ is continuously invertible according to Lemma 2 in \citet{drost}, making use of the continuous mapping theorem, the assertion follows.

\end{proof}

Together with the bootstrap consistency for functions of generalized means  derived in \citet{jewe}, the asymptotic results established in Theorems \ref{clt} and \ref{bs_clt} generalize the validity of the semi-parametric INAR bootstrap for a broader class of statistics. In particular, as the limiting distributions in Theorems \ref{clt} and \ref{bs_clt} coincide, we obtain the following corollary.

\begin{Cor}[First-order semi-parametric INAR bootstrap consistency]
Let $d$ be an appropriate metric on (the distributions of) random elements in $\mathbb{R}^p \times \ell^1 (\mathbb{N}_0)$. Under the conditions of Theorem \ref{clt} and \ref{bs_clt}, we have 
\begin{align*}
    d\bigg(\mathcal{L}(\sqrt{n}(\thn-\tz)), \mathcal{L}^*(\sqrt{n}(\thns-\thn))\bigg) \underset{n \rightarrow \infty}{\rightarrow} 0 
\end{align*}
in $P_{\nu_{\theta_0}, \theta_0}$-probability. % in $\mathbb{R}^p \times \ell^1 (\mathbb{N}_0)$ (conditional on the data).
\end{Cor}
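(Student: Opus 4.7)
The plan is to observe that Theorems \ref{clt} and \ref{bs_clt} already do all the heavy lifting: both statements identify the very same limiting law $Z := -\dot\Psi^{-1}_{\theta_0}(\mathcal{S}^{\theta_0})$ — the first unconditionally under $P_{\nu_{\theta_0},\theta_0}$ and the second conditionally on the data, in $P_{\nu_{\theta_0},\theta_0}$-probability. Consequently, the corollary should follow from a single triangle-inequality comparison of $\mathcal{L}(\sqrt{n}(\thn-\tz))$ and $\mathcal{L}^*(\sqrt{n}(\thns-\thn))$ through the common limit $\mathcal{L}(Z)$, once an appropriate metric $d$ has been fixed.

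First, I would choose $d$ as any metric metrizing weak convergence of Borel probability measures on the target space $\mathbb{R}^p \times \ell^1(\mathbb{N}_0)$; concrete candidates are the bounded Lipschitz metric $d_{BL}$ or the Prokhorov metric. This is well-defined, since $\ell^1(\mathbb{N}_0)$ is a separable Banach space, so the product $\mathbb{R}^p \times \ell^1(\mathbb{N}_0)$ is Polish and the standard metrizations of weak convergence apply.

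Second, Theorem \ref{clt} translates into $d(\mathcal{L}(\sqrt{n}(\thn-\tz)), \mathcal{L}(Z)) \to 0$ as $n \to \infty$ (deterministically), and Theorem \ref{bs_clt} translates into $d(\mathcal{L}^*(\sqrt{n}(\thns-\thn)), \mathcal{L}(Z)) \to 0$ in $P_{\nu_{\theta_0},\theta_0}$-probability. The triangle inequality then yields
$$d\bigl(\mathcal{L}(\sqrt{n}(\thn-\tz)), \mathcal{L}^*(\sqrt{n}(\thns-\thn))\bigr) \leq d\bigl(\mathcal{L}(\sqrt{n}(\thn-\tz)), \mathcal{L}(Z)\bigr) + d\bigl(\mathcal{L}(Z), \mathcal{L}^*(\sqrt{n}(\thns-\thn))\bigr),$$
where the first summand tends to zero deterministically and the second tends to zero in $P_{\nu_{\theta_0},\theta_0}$-probability, so that the left-hand side vanishes in probability, as claimed.

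Since Theorems \ref{clt} and \ref{bs_clt} have already done the serious analytic work, there is no real obstacle here. The only minor care point is the mismatched modes of convergence in the two inputs — deterministic weak convergence on one side and weak convergence in probability on the other — but these combine cleanly once $d$ metrizes weak convergence, because deterministic convergence trivially implies convergence in probability and the triangle inequality for $d$ is preserved by the outer expectation used to formalize convergence in probability of a real-valued random variable.
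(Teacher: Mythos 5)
Your proposal is correct and follows essentially the same route as the paper, which justifies the corollary solely by noting that the limiting distributions in Theorems \ref{clt} and \ref{bs_clt} coincide; you merely make explicit the standard triangle-inequality argument through the common limit law under a metric (e.g.\ bounded Lipschitz or Prokhorov) that metrizes weak convergence on the Polish space $\mathbb{R}^p \times \ell^1(\mathbb{N}_0)$. No gap.
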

Furthermore, it is possible to make use of suitable delta methods to extend the bootstrap consistency result also to smooth functionals applied to the (boot)NPMLE. More precisely, for the bootNPMLE result \eqref{eq:bootstrap_CLT}, we can employ the delta method introduced in Theorem 3.1 of \citet{beutner_zaehle} which - in contrast to the conventional functional delta method as given by Theorems 3.9.11 and 3.9.13 of \citet{vdvw} - does not make use of concepts of integrals and outer probabilities. Similarly, for the result of Theorem \ref{clt}, Theorem 3.9.4. of \citet{vdvw} can be applied which is valid for mappings between linear metric spaces.  The paper of \citet{beutner} discusses several variants of functional delta methods. Consequently, we get the following result. 

% \begin{Cor}[Asymptotics for smooth functions of NPMLEs] \label{cor:delta_method_orig}
%     In the setting of Theorem \ref{clt} and for a function $\Xi: \Theta \times \mathcal{G} \rightarrow \mathbb{R}^p \times \ell^1 (\mathbb{Z}_+)$, we additionally have 
%     \[ \sqrt{n} \left( \Xi \left( \widehat{\theta}_n \right) - \Xi \left(\theta_0 \right) \right) \leadsto \dot\Xi_{\tz} \left( -\dot\Psi^{-1}_{\theta_0} \mathcal{S}^{\theta_0}\right) \quad \text{under} \, P_{\nu_{\theta_0}, \theta_0} \, in \, \mathbb{R}^p \times \ell^1 (\mathbb{Z}_+). \]
% \end{Cor}

% \begin{Cor}[Asymptotics for smooth functions of (boot)NPMLEs] \label{cor:delta_method}
%     In the setting of Theorem \ref{bs_clt} and for a function $\Xi: \Theta \times \mathcal{G} \rightarrow \mathbb{R}^p \times \ell^1 (\mathbb{Z}_+)$, we additionally have 
%     \[ \sqrt{n} \left( \Xi \left( \widehat{\theta}_n^* \right) - \Xi \left(\widehat{\theta}_n \right) \right) \leadsto^* \dot\Xi_{\tz} \left( -\dot\Psi^{-1}_{\theta_0} \mathcal{S}^{\theta_0}\right)  \] in $P_{\nu_{\theta_0}, \theta_0}$-probability in $\mathbb{R}^p \times \ell^1 (\mathbb{Z}_+)$.
% \end{Cor}

\begin{Cor}[First-order bootstrap consistency for smooth functionals] \label{cor:delta_method_boot_consistency}
    Let $\Xi: [0,1]^p \times \tilde{\mathcal{G}} \rightarrow \mathbb{R}^q$ be a sufficiently smooth functional such that the conditions in Theorem 3.1 of \citet{beutner_zaehle} are fulfilled and let $d$ be an appropriate metric on (the distributions of) random elements in $\mathbb{R}^q$. %$\mathbb{R}^p \times \ell^1 (\mathbb{N}_0)$. 
    Under the conditions of Theorem \ref{clt} and \ref{bs_clt}, we have 
\begin{align*}
    d\left(\mathcal{L}\left(\sqrt{n} \left( \Xi ( \widehat{\theta}_n ) - \Xi \left(\theta_0 \right) \right)\right), \mathcal{L}^*\left(\sqrt{n} \left( \Xi ( \widehat{\theta}_n^* ) - \Xi (\widehat{\theta}_n ) \right)\right)\right) \underset{n \rightarrow \infty}{\rightarrow} 0 
\end{align*}
in $P_{\nu_{\theta_0}, \theta_0}$-probability. % in $\mathbb{R}^p \times \ell^1 (\mathbb{N}_0)$ (conditional on the data).
\end{Cor}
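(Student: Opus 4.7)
The plan is to derive the corollary as a direct consequence of Theorems \ref{clt} and \ref{bs_clt} combined with two versions of the functional delta method. Since $\Xi$ is assumed to be sufficiently smooth in the sense required by Theorem 3.1 of \citet{beutner_zaehle}, in particular Hadamard differentiable at $\theta_0$ tangentially to an appropriate subspace of $\mathbb{R}^p \times \ell^1(\mathbb{N}_0)$, the derivative $\dot\Xi_{\theta_0}$ is a continuous linear map. Both legs of the proof will produce the same Gaussian-type limit $\dot\Xi_{\theta_0}\bigl(-\dot\Psi^{-1}_{\theta_0}(\mathcal{S}^{\theta_0})\bigr)$, which then yields the convergence of distances in the chosen metric $d$.

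First I would handle the unconditional part. Starting from Theorem \ref{clt}, which gives
\[
\sqrt{n}\bigl(\widehat{\theta}_n - \theta_0\bigr) \leadsto -\dot\Psi^{-1}_{\theta_0}(\mathcal{S}^{\theta_0})
\]
in $\mathbb{R}^p \times \ell^1(\mathbb{N}_0)$, I would apply the classical functional delta method (Theorem 3.9.4 of \citet{vdvw}, which is formulated for mappings between linear metric spaces and hence avoids outer probability complications in our $\ell^1$-setting). This immediately yields
\[
\sqrt{n}\bigl(\Xi(\widehat{\theta}_n) - \Xi(\theta_0)\bigr) \leadsto \dot\Xi_{\theta_0}\bigl(-\dot\Psi^{-1}_{\theta_0}(\mathcal{S}^{\theta_0})\bigr)
\]
in $\mathbb{R}^q$. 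The continuous mapping theorem, together with linearity of $\dot\Xi_{\theta_0}$, guarantees that the limit is well-defined and Borel measurable.

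Next I would treat the bootstrap leg. Starting from Theorem \ref{bs_clt}, which gives the conditional weak convergence
\[
\sqrt{n}\bigl(\widehat{\theta}_n^* - \widehat{\theta}_n\bigr) \leadsto^* -\dot\Psi^{-1}_{\theta_0}(\mathcal{S}^{\theta_0})
\]
under $P^*$ in $\mathds{P}_{\nu_{\theta_0}, \theta_0}$-probability, I would invoke the bootstrap delta method of Theorem 3.1 in \citet{beutner_zaehle}. This variant is precisely tailored to conditional weak convergence and, crucially, does not rely on outer probability or on integrals with respect to the bootstrap law, so it matches our setting on $\mathbb{R}^p \times \ell^1(\mathbb{N}_0)$. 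Applying it to $\Xi$ at $\widehat{\theta}_n$ (which lies in a neighborhood of $\theta_0$ by Theorem \ref{cons} and Theorem \ref{cons_bs_est}) produces
\[
\sqrt{n}\bigl(\Xi(\widehat{\theta}_n^*) - \Xi(\widehat{\theta}_n)\bigr) \leadsto^* \dot\Xi_{\theta_0}\bigl(-\dot\Psi^{-1}_{\theta_0}(\mathcal{S}^{\theta_0})\bigr)
\]
in $\mathds{P}_{\nu_{\theta_0}, \theta_0}$-probability, with the same limit as in the unconditional case.

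Combining the two displays, the two laws $\mathcal{L}(\sqrt{n}(\Xi(\widehat{\theta}_n) - \Xi(\theta_0)))$ and $\mathcal{L}^*(\sqrt{n}(\Xi(\widehat{\theta}_n^*) - \Xi(\widehat{\theta}_n)))$ both converge (the latter in probability) to the common law of $\dot\Xi_{\theta_0}(-\dot\Psi^{-1}_{\theta_0}(\mathcal{S}^{\theta_0}))$. Since $d$ is by assumption an appropriate metric metrizing weak convergence on $\mathbb{R}^q$, the triangle inequality gives $d(\mathcal{L}(\sqrt{n}(\Xi(\widehat{\theta}_n) - \Xi(\theta_0))), \mathcal{L}^*(\sqrt{n}(\Xi(\widehat{\theta}_n^*) - \Xi(\widehat{\theta}_n)))) \to 0$ in $P_{\nu_{\theta_0}, \theta_0}$-probability, which is the desired conclusion. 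The main obstacle I anticipate is technical rather than conceptual: verifying that the smoothness hypothesis on $\Xi$ (Hadamard differentiability tangentially to the relevant subspace of $\mathbb{R}^p \times \ell^1(\mathbb{N}_0)$, uniformly over compact sets in the direction of tangency) is genuinely compatible with the mode of conditional convergence provided by Theorem \ref{bs_clt}. This is precisely why the authors invoke the Beutner--Zähle framework instead of the standard van der Vaart--Wellner bootstrap delta method, and I would be careful to check that their assumptions (most notably the continuity of $\dot\Xi_{\theta_0}$ and the uniform approximation property near $\theta_0$) are satisfied for the envisaged applications (e.g., moment functionals, dispersion index), so that the abstract corollary translates into usable inference.
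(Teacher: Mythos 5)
Your proposal matches the paper's own argument: the authors likewise obtain the unconditional leg by applying the functional delta method of Theorem 3.9.4 of \citet{vdvw} to Theorem \ref{clt} and the bootstrap leg by applying Theorem 3.1 of \citet{beutner_zaehle} to Theorem \ref{bs_clt}, then conclude from the coincidence of the two limits $\dot\Xi_{\theta_0}\bigl(-\dot\Psi^{-1}_{\theta_0}(\mathcal{S}^{\theta_0})\bigr)$ that the distance $d$ vanishes in probability. The reasoning is correct and essentially identical to the paper's.
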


% The previous corollary is obtained by applying the delta method introduced in Theorem 3.1 of \citet{beutner_zaehle} which - in contrast to the conventional functional delta method as given by Theorems 3.9.11 and 3.9.13 of \citet{vdvw} - does not make use of concepts of integrals and outer probabilities. Here, Hadamard differentiability is implied by Fréchet differentiability. A similar result can be derived for the CLT given in Theorem \ref{clt} by applying Theorem 3.9.4. of \citet{vdvw} which is valid for mappings between linear metric spaces. The paper of \citet{beutner} discusses several variants of it. 
The bootstrap consistency result from Corollary \ref{cor:delta_method_boot_consistency} finds application in diverse setups. For instance, a joint consistency result that combines the bootstrap consistency for functions of generalized means derived in Theorem 3.2 and Corollary 3.7 in \citet{jewe} and the bootstrap consistency for smooth functionals of $\widehat \theta_n$ from Corollary \ref{cor:delta_method_boot_consistency} above, can be applied for semi-parametric INAR goodness-of-fit testing discussed in Section \ref{subsec:gof}. Further examples include, e.g., predictive inference in count time series setups covered in Section \ref{subsec:pp} or the analysis of the (joint) dispersion index covered in Section \ref{subsec:disp}.

%Psis in der Bootstrap-Welt einführen (Psi_n_*) # wird erst in Beweis gebraucht, da erst einführen?
%BS-CLT + Beweis 

\section{Methodological Applications} \label{sec:meth_appl}

In the following subsections, wo discuss three methodological applications of Theorem \ref{bs_clt} and Corollary \ref{cor:delta_method_boot_consistency} for different statistical tasks. They cover goodness-of-fit testing, predictive inference, and joint dispersion index analysis for (semi-parametrically estimated) INAR processes.

\subsection{Goodness-of-fit test on the whole INAR model class} \label{subsec:gof}

Together with the results from \citet{jewe}, the result of our main Theorem \ref{bs_clt} provides the theoretical foundation of the bootstrap-based semi-parametric goodness-of-fit test introduced in \textcite{gof_sp}. They consider the null hypothesis ``$H_0$: $(X_t,t\in\mathbb{Z})$ is an INAR($p$) process'' and their test is characterized by the lack of any parametric assumption on the innovation distribution which credits the flexibility of the INAR model class. The test statistic consists of an $L_2$-type distance of probability generating functions and can be represented as a degenerate $V$-statistic. This leads to a cumbersome $\chi^2$-type limiting distribution requiring an appropriate bootstrap technique to make the testing procedure practicable. The semi-parametric INAR bootstrap described in Section \ref{sec:boot_algo} is used for simulations leading to good finite sample performance under the null and under the alternative. In this paper, we provide the theoretical justification the use of this bootstrap procedure for goodness-of-fit testing of semi-parametric INAR null hypotheses.

\subsection{Predictive inference for count time series} \label{subsec:pp}

In the setup of count time series and more general for discrete-valued time series, it is not straightforward to perform predictive inference. \citet{pred_paper} discuss this topic and propose to solve this issue by transforming the prediction problem into a parameter estimation problem. That is, given $(X_{-p},\ldots,X_{-1}),X_0,\ldots,X_n$ and for the case of an underlying Markov process of order one, they construct \emph{confidence} intervals for the probability that $X_{n+1}$ falls into a certain set $S$ conditional on $X_n=x_n$ for some $x_n\in\mathbb{N}_0$, i.e., for $P_{S,x_n} = P(X_{n+1} \in S |X_n=x_n)$. % or its parametric counterpart $P_{S,x_n}(\theta_0)$, respectively.
% and the task of one step ahead prediction using an INAR(1) model \emph{without} specifying the parametric family of the innovation distribution, the parameter of interest is the probability that $X_{n+1}$ falls into a certain set $S$ conditional on $X_n=x_n$ for some $x_n\in\mathbb{N}_0$, i.e., $P_{S,x_n} = P(X_{n+1} \in S |X_n=x_n)$. 
They introduce asymptotic and bootstrap approaches and illustrate their proposed procedure on INAR and INARCH models. For this purpose, they consider parametric as well as non-parametric approaches and, moreover, they also discuss the practically important case of model misspecification, which we do not touch here.

In the setup of \citet{pred_paper}, also a semi-parametric INAR(1) variant \emph{without} specifying the parametric family of the innovation distribution
% that only assumes the INAR model class but does not specify any innovation distribution. T
%to construct a confidence interval for $P_{S,x_n}$ 
%that makes use of the semi-parametric INAR bootstrap from Section \ref{sec:boot_algo}, as described in the following, 
becomes suitable. %Here, $\theta_0=(\alpha_0,G_0)$ is the parameter from the parameter space $\Theta \times \mathcal{G}$ which best fits the underlying process. Consequently, only if the assumption of an INAR(1) model holds, $P_{S,x_n}$ and $P_{S,x_n}(\theta_0)$ coincide. 
Hence, assuming that the data is generated from an INAR(1) model, the procedure makes use of the semi-parametric INAR bootstrap from Section \ref{sec:boot_algo} as described in the following: 
\begin{itemize}
    \item[Step 1.)] Given $X_1,\ldots, X_n$, calculate the NPMLE $\thn=(\widehat{\fett\alpha}_n,\widehat G_n)$ as well as the semi-parametric estimator $ \widehat{P}_{S,x_n}^{sp} :=P_{S,x_n}(\thn)$ for the predictive probability $P_{S,x_n}$.
    \item[Step 2.)] Use the semi-parametric INAR bootstrap from Section \ref{sec:boot_algo} to get bootstrap observations $X_1^*,\ldots,X_n^*$ and to calculate the bootNPMLE $\thn^*=(\widehat{\fett\alpha}_n^*,\widehat G_n^*)$ as well as the semi-parametric bootstrap estimator $ \widehat{P}_{S,x_n}^{sp,*} :=P_{S,x_n}(\thn^*)$. 
    \item[Step 3.)] Repeat Step 2.) $B$-times, where $B$ is large, to get $L_n^{*,b}=\widehat{P}_{S,x_n}^{sp,*,b} - \widehat{P}_{S,x_n}^{sp}$, $b=1, \ldots, B$, and construct the $(1-\delta)$-confidence interval for $P_{S,x_n}$ as $[\hpsxn - q^*_{1-\delta/2} , \hpsxn - q^*_{\delta/2}]$,
where $q_\alpha^*$ denotes the $\alpha$-quantile of the empirical distribution of $L_n^{*,b}$, $b=1,\ldots,B$.
\end{itemize}
% Note that, we do not address the case of model misspecification which would be beyond the scope of this paper.
In Section \ref{sec:pred_inf_sims}, we provide some simulations to numerically validate the above procedure. 

\subsection{Joint dispersion index analysis for observations and innovations} \label{subsec:disp}

When it comes to the question of a suitable innovation distribution in the INAR model, dispersion plays a major role. In practice, one usually estimates the dispersion index $ID_X$ by the estimator $\widehat{ID}_X=S^2/\overline{X}$, where $S^2$ is the sample variance and $\overline{X}$ is the sample mean based on the observations $X_1,\ldots,X_n$. Then, one makes use of the INAR model to derive also an estimator for the dispersion index $ID_\epsilon$ of the innovations denoted by $\widehat{ID}_\varepsilon$. For instance, in the case of an INAR(1) model with coefficient $\alpha$, we have the relationship
\begin{align*}
    ID_X = \frac{ID_\varepsilon + \alpha}{1+\alpha}.
\end{align*}
Hence, we can estimate the innovation dispersion $ID_\varepsilon$ index by computing $\widehat{ID}_\varepsilon=
\widehat{ID}_X(1+\widehat \alpha)-\widehat \alpha$. However, the observations are over-/equi-/underdispersed if and only if the innovations are over-/equi-/underdispersed \citep[see Section 2.2.1 in][]{bookweiss}. Based on $\widehat{ID}_X$ one usually commits to a parametric family of innovation distributions and thus determines if the innovations are over-/equi-/underdispersed. However, for model diagnostics, the question remains whether a dispersion index deviates enough from $1$ to suggest under- or overdispersion instead of equidispersion. While there are tests based on $\widehat{ID}_X$ proposed, e.g., by \citet{schweer_weiss}, so far, there is no way to approach the innovations and their corresponding dispersion directly that is \emph{without} imposing a parametric assumption on the innovation distribution. However, by assuming a certain family of innovations for testing purposes, the dispersion is (often) fixed from the outset. For example, if a Poi-INAR model is assumed, the innovations will be equidispersed, while in the case of an NB-INAR model, they will be overdispersed. 

With the NPMLE estimation approach proposed of \citet{drost}, the innovation distribution and thus also its dispersion can be estimated directly without a parametric assumption. As described in the following, we exploit this approach together with the semi-parametric INAR bootstrap from Section \ref{sec:boot_algo} to construct confidence intervals for the dispersion index of the innovations (i.e., for $ID_\epsilon$) and of the observations (i.e., for $ID_X$) in one shot. The following algorithm is valid in case of an INAR(1) model: 
\begin{itemize}
    \item[Step 1.)] Given $X_1,\ldots, X_n$, calculate the NPMLE $\thn=(\widehat \alpha_n,\widehat G_n)$ and estimate (semi-parametrically) the dispersion indices of the innovations $\widehat{ID}_\varepsilon^{sp}$ and of the observations $\widehat{ID}_X^{sp}$ by computing
    \begin{align*}
    \widehat{ID}_\varepsilon^{sp} = \frac{\sum_{j \in \mathbb{N}_0}j^2\widehat{G}_n(j) - \left( \sum_{j \in \mathbb{N}_0}j\widehat{G}_n(j) \right)^2}{\sum_{j \in \mathbb{N}_0}j\widehat{G}_n(j)} \quad   \text{and}  \quad
    \widehat{ID}_X^{sp}= \frac{\widehat{ID}_\varepsilon^{sp} +\widehat{\alpha}_n}{1+\widehat{\alpha}_n}.
    \end{align*}
    \item[Step 2.)] Use the semi-parametric INAR bootstrap from Section \ref{sec:boot_algo} to get bootstrap observations $X_1^*,\ldots,X_n^*$ and to calculate the bootNPMLE $\thn^*=(\widehat \alpha_n^*,\widehat G_n^*)$ and estimate (semi-parametrically) the bootstrap dispersion indices of the innovations $\widehat{ID}_\varepsilon^{sp,*}$ and of the observations $\widehat{ID}_X^{sp,*}$ by
    \begin{align*}
    \widehat{ID}_\varepsilon^{sp,*} = \frac{\sum_{j \in \mathbb{N}_0}j^2\widehat{G}_n^*(j) - \left( \sum_{j \in \mathbb{N}_0}j\widehat{G}_n^*(j) \right)^2}{\sum_{j \in \mathbb{N}_0}j\widehat{G}_n^*(j)} \quad   \text{and}  \quad
    \widehat{ID}_X^{sp,*}= \frac{\widehat{ID}_\varepsilon^{sp,*} +\widehat{\alpha}_n^*}{1+\widehat{\alpha}_n^*}. 
    \end{align*}    
    \item[Step 3.)] Repeat Step 2.) $B$-times, where $B$ is large, to get $L_{\varepsilon,n}^{*,b} = \widehat{ID}_\varepsilon^{sp,*,b} - \widehat{ID}_\varepsilon^{sp}$ and  $L_{X,n}^{*,b} = \widehat{ID}_X^{sp,*,b} - \widehat{ID}_X^{sp}$, $b=1,\ldots,B$, and construct the $(1-\delta)$-confidence intervals for $ID_\varepsilon$ and $ID_X$ as $\left[\widehat{ID}_\varepsilon^{sp} - q^*_{\varepsilon,1-\delta/2}, \widehat{ID}_\varepsilon^{sp}-q^*_{\varepsilon,\delta/2}\right]$ and $\left[\widehat{ID}_X^{sp} - q^*_{X,1-\delta/2}, \widehat{ID}_X^{sp}-q^*_{X,\delta/2}\right]$, respectively, where $q_{\varepsilon,\alpha}^*$ and $q_{X,\alpha}^*$ denote the $\alpha$-quantiles of the empirical distributions of $L_{\varepsilon,n}^{*,b}$, $b=1,\ldots,B$ and of $L_{X,n}^{*,b}$, $b=1,\ldots,B$, respectively.
\end{itemize}
In Section \ref{sec:dispersion_sims}, we provide some simulations to validate the above procedure.

%how reliable , CI would be good

%For the observations, this task is rather easy since we can just compute the dispersion index of the available data. Additionally, one could also apply a test: see \citet{schweer_weiss} and \citet{weiss_schweer}. However, this is not possible for the innovations, as they are not observed. Therefore, one needs to consider the residuals. In order to obtain these, however, an estimation procedure must be used in advance, which is usually of parametric nature and contains the assumption of a distributional family of innovations, i.e., that the dispersion is fixed from the outset. If, for example, a Poi-INAR model is assumed, the innovations will be equidispersed; in the case of an NB-INAR model, they will be overdispersed. With the semi-parametric estimation approach of \citet{drost}, on the other hand, the innovation distribution and thus also its dispersion can be estimated without a parametric assumption. \red{tbc}

\section{Simulations}\label{sec:sim}

In this section, we investigate the finite sample performance of the semi-parametric INAR bootstrap from Subsection \ref{sec:boot_algo} and the methodological applications from Sections \ref{subsec:pp} and \ref{subsec:disp} by simulations.

\subsection{Bootstrap confidence intervals for $\theta_0=(\fett\alpha_0,G_0)$}

First, we illustrate the bootstrap performance for the task of confidence interval construction for the true model parameter $\theta_0=(\fett\alpha_0,G_0)$. That is, in a simulation study with $K=500$ Monte Carlo samples and $B=500$ bootstrap repetitions, we use the semi-parametric INAR bootstrap to 
construct Hall's confidence intervals  
for the entries of $\theta_0=(\fett\alpha_0,G_0)$. Its validity is justified by Theorem \ref{bs_clt} as it uses that $\sqrt{n}(\thns-\thn)$ provides in probability the same limiting distribution as $\sqrt{n}(\thn-\theta_0)$. Consequently, for significance level $\delta=5\%$, this results in the confidence interval
\begin{align}
\left[\thn - q_{1-\delta/2}^*, \thn - q_{\delta/2}^* \right],
\end{align}
where $q_{1-\delta/2}^*$ and $q_{\delta/2}^*$ are the corresponding quantiles of $\thns-\thn$. We consider sample sizes $n \in \{100,500,1000 \}$ and INAR(1) DGPs (i.e., $p=1$ in \eqref{inarp}) with different INAR coefficients $\alpha \in \{ 0.1,0.3,0.5,0.9 \}$ and with innovations following either a Poisson or a negative binomial distribution resulting in equi- and overdispersion, respectively. 
For implementing the simulation study, we use the R package \textit{spINAR} \citep{faymonville2024spinar}.
% of the statistical programming language R 4.2.2 \citep{r}.

At first, we consider INAR(1) DGPs with different $\alpha$ parameter and Poi($\lambda$) distributed innovations with $\lambda \in \{1,3\}$. Table \ref{tab:poi_sp} contains the coverage and the average length of the computed confidence intervals in case of $\alpha= 0.5$ and $\lambda=1$. We see that the coverage increases for increasing $n$ approaching the desired coverage level of $95\%$, while the average length decreases. In this parameter setup, where the mean of the observations equals two and the values of $\widehat G(k)$ for $k > 4$ become tiny (not larger than 0.0031), we only consider the first six entries of the parameter vector, that is, $\alpha$ and $G(k)$, $k \in \{0,1,2,3,4\}$. As $G(k)$ already becomes small for rather small $k$, this also explains the comparably low coverage for $G(4)$. These small entries of the parameter vector are difficult to estimate since the corresponding values are rarely observed. In this setup, where we have $G$=Poi(1), we exemplary get $G(4) = 0.015$. The results for the other parameterizations can be found in Tables \ref{tab:poi_sp_1_01}, \ref{tab:poi_sp_3_01}, \ref{tab:poi_sp_1_03}, \ref{tab:poi_sp_3_03}, \ref{tab:poi_sp_3_05} and \ref{tab:poi_sp_1_09} in the appendix. Overall, we get comparable results. However, it has to be noted that the semi-parametric INAR bootstrap approach slightly loses in terms of coverage performance, when the observations' mean increases (which is the case for larger $\alpha$ and $\lambda$). However, this could have been expected as the number of parameters, i.e., the entries of the pmf, to be estimated also increases, when the mean of the innovations increases. In addition, the considered first five entries of the pmf, i.e., $G(k)$, $k \in \{0,1,2,3,4\}$, then only cover a smaller portion of the whole probability mass in cases with large innovations' (and observations') mean.

\begin{table}[t]
\centering
\begin{tabular}{c|ccc|ccc|ccc}
& \multicolumn{3}{|c|}{$\alpha$} & \multicolumn{3}{|c|}{$G(0)$} & \multicolumn{3}{|c}{$G(1)$} \\
$n$ & 100 & 500 & 1000 & 100 & 500 & 1000& 100 & 500 & 1000 \\
\hline
%perc & 0.890&0.914&0.944&0.890&0.940&0.938&0.964&0.936&0.948 \\
coverage & 0.896&0.932&0.960&0.802&0.914&0.918&0.880&0.924&0.942 \\
average length & 0.369&0.147&0.102&0.385&0.177&0.123&0.391&0.171&0.119 \\
%\hline
& \multicolumn{3}{|c|}{$G(2)$} & \multicolumn{3}{|c|}{$G(3)$} & \multicolumn{3}{|c}{$G(4)$} \\
$n$ & 100 & 500 & 1000 & 100 & 500 & 1000& 100 & 500 & 1000 \\
\hline
%perc & 0.952&0.944&0.956&0.892&0.948&0.940&0.890&0.884&0.912 \\
coverage & 0.850&0.928&0.950&0.698&0.946&0.950&0.446&0.730&0.852 \\
average length & 0.328&0.137&0.094&0.181&0.082&0.056&0.072&0.035&0.027 \\
\end{tabular}
\vspace{0.5cm}
\caption{Coverage and average length of the bootstrap confidence intervals based on the semi-parametric INAR bootstrap from Section \ref{sec:boot_algo} for $\alpha, G(0), \ldots, G(4)$ in case of a Poi(1)-INAR(1) DGP with $\alpha=0.5$ for different sample sizes.}
\label{tab:poi_sp}
\end{table}

Next, we consider an INAR(1) DGP with $\alpha=0.5$ but now with an overdispered negative binomial innovation distribution NB($N, \, \pi$) with parameters ($N=10$, $\pi=10/11$), ($N=2$, $\pi=2/3$) and ($N=1$, $\pi=1/2$). These parameter choices also lead to an observations' mean of two, but they cover different levels of overdispersion (increasing in mentioned order). Table \ref{tab:nb_sp} displays the results for the case of moderate overdispersion ($N=2$, $\pi=2/3$). We see that for increasing $n$, we again approach the desired coverage level of $95\%$, whereas the average length of the confidence intervals decreases. In this setting of overdispersion, we have even better coverage and average length of the intervals compared to the results in Table \ref{tab:poi_sp}, where the used innovation distribution is equidispersed. For the parameterizations ($N=1$, $\pi=1/2$) and ($N=10$, $\pi=10/11$), we get similar results shown in Tables \ref{tab:nb_sp_112_05} and $\ref{tab:nb_p_101011_05}$ in the appendix.

\begin{table}[t]
\centering
\begin{tabular}{c|ccc|ccc|ccc}
& \multicolumn{3}{|c|}{$\alpha$} & \multicolumn{3}{|c|}{$G(0)$} & \multicolumn{3}{|c}{$G(1)$} \\
$n$ & 100 & 500 & 1000 & 100 & 500 & 1000& 100 & 500 & 1000 \\
\hline
%perc & 0.912&0.958&0.942&0.902&0.942&0.924&0.952&0.942&0.926 \\
coverage & 0.946&0.946&0.948&0.848&0.936&0.924&0.886&0.942&0.924 \\
average length & 0.337&0.130&0.090&0.380&0.164&0.115&0.361&0.157&0.110 \\
%\hline
& \multicolumn{3}{|c|}{$G(2)$} & \multicolumn{3}{|c|}{$G(3)$} & \multicolumn{3}{|c}{$G(4)$} \\
$n$ & 100 & 500 & 1000 & 100 & 500 & 1000& 100 & 500 & 1000 \\
\hline
%perc & 0.910&0.942&0.936&0.870&0.962&0.950&0.904&0.882&0.942 \\
coverage & 0.846&0.934&0.934&0.734&0.958&0.948&0.568&0.814&0.924 \\
average length & 0.270&0.115&0.081&0.170&0.078&0.054&0.093&0.048&0.036 \\
\end{tabular}
\vspace{0.5cm}
\caption{Coverage and average length of the bootstrap confidence intervals based on the semi-parametric INAR bootstrap from Section \ref{sec:boot_algo} for $\alpha, G(0), \ldots, G(4)$ in case of a NB(2,2/3)-INAR(1) DGP with $\alpha=0.5$ for different sample sizes.}
\label{tab:nb_sp}
\end{table}

\begin{table}[t]
\centering
\begin{tabular}{c|ccc|ccc|ccc}
& \multicolumn{3}{|c|}{$\alpha$} & \multicolumn{3}{|c|}{$G(0)$} & \multicolumn{3}{|c}{$G(1)$} \\
$n$ & 100 & 500 & 1000 & 100 & 500 & 1000& 100 & 500 & 1000 \\
\hline
%perc & 0.950&0.930&0.960&0.938&0.946&0.938&0&0&0 \\
coverage & 0.926&0.948&0.964&0.906&0.928&0.936&0.984&0.978&0.978 \\
average length & 0.298&0.128&0.090&0.230&0.103&0.074&0.039&0.008&0.006 \\
%\hline
& \multicolumn{3}{|c|}{$G(2)$} & \multicolumn{3}{|c|}{$G(3)$} & \multicolumn{3}{|c}{$G(4)$} \\
$n$ & 100 & 500 & 1000 & 100 & 500 & 1000& 100 & 500 & 1000 \\
\hline
%perc & 0.938&0.946&0.938&0.938&0.946&0.938&0.938&0.946&0.938 \\
coverage & 0.848&0.918&0.938&0.866&0.924&0.938&0.846&0.908&0.930 \\
average length & 0.108&0.051&0.037&0.078&0.035&0.025&0.033&0.013&0.010 \\
\end{tabular}
\vspace{0.5cm}
\caption{Coverage and average length of the parametrically constructed bootstrap confidence intervals based on \emph{parametric} ML estimation and a \emph{parametric} Poi-INAR bootstrap for $\alpha, G(0), \ldots, G(4)$ in case of a Poi(1)-INAR(1) DGP with $\alpha=0.5$ for different sample sizes.}
\label{tab:poi_p}
\end{table}

\begin{table}[t]
\centering
\begin{tabular}{c|ccc|ccc|ccc}
& \multicolumn{3}{|c|}{$\alpha$} & \multicolumn{3}{|c|}{$G(0)$} & \multicolumn{3}{|c}{$G(1)$} \\
$n$ & 100 & 500 & 1000 & 100 & 500 & 1000& 100 & 500 & 1000 \\
\hline
%perc & 0.870&0.678&0.406&0.510&0.038&0&0.144&0&0 \\
coverage & 0.964&0.760&0.490&0.446&0.044&0&0.034&0&0 \\
average length & 0.320&0.137&0.096&0.223&0.101&0.071&0.046&0.013&0.008 \\
%\hline
& \multicolumn{3}{|c|}{$G(2)$} & \multicolumn{3}{|c|}{$G(3)$} & \multicolumn{3}{|c}{$G(4)$} \\
$n$ & 100 & 500 & 1000 & 100 & 500 & 1000& 100 & 500 & 1000 \\
\hline
%perc & 0.546&0.046&0&0.900&0.794&0.722&0.892&0.676&0.456 \\
coverage & 0.422&0.050&0&0.884&0.848&0.782&0.668&0.512&0.330 \\
average length & 0.102&0.049&0.035&0.086&0.039&0.027&0.041&0.017&0.012 \\
\end{tabular}
\vspace{0.5cm}
\caption{Coverage and average length of the parametrically constructed bootstrap confidence intervals based on \emph{parametric} ML estimation and a \emph{parametric} Poi-INAR bootstrap for $\alpha, G(0), \ldots, G(4)$ in case of a NB(2,2/3)-INAR(1) DGP with $\alpha=0.5$ for different sample sizes.}
\label{tab:nb_p}
\end{table}

In this paragraph, in Tables \ref{tab:poi_p} and \ref{tab:nb_p}, we compare the results of the semi-parametric INAR bootstrap that does not rely on any parametric assumption on the innovation distribution with a fully parametric approach that assumes that the INAR(1) data at hand follows a Poisson distribution. In the latter case, we use the parametric maximum-likelihood method for parameter estimation \citep[see, e.g.,][for details]{bookweiss} together with the \emph{parametric} INAR bootstrap of \textcite{jewe}. Both are also implemented in the R package \textit{spINAR} \citep{faymonville2024spinar}. In Tables \ref{tab:poi_p} as well as in Tables \ref{tab:poi_p_1_01}, \ref{tab:poi_p_3_01}, \ref{tab:poi_p_1_03}, \ref{tab:poi_p_3_03}, \ref{tab:poi_p_3_05} and \ref{tab:poi_p_1_09} in the appendix, we see the results for the already considered Poisson DGPs, but now using the parametric approach. As one could expect, on the one hand, the procedure leads to (considerably) smaller confidence intervals while providing similar or even better coverage. This is not surprising since we use additional \emph{true} information of the DGP. In contrast, on the other hand, Tables \ref{tab:nb_p}, \ref{tab:nb_p_112_05} and \ref{tab:nb_p_101011_05} show what happens, when we use this additional distributional assumption in cases where it does not hold. The displayed results arise from applying parametric estimation and bootstrapping assuming a Poi($\lambda$) innovation distribution in a case, where the innovation distribution is actually \emph{not} Poisson, but negative binomial. Although the considered NB($N, \pi$) innovation distribution has the same mean as before, the procedure breaks down. As expected, the confidence intervals do not hold the prescribed coverage rate. In particular, the coverage performance depends on the level of overdispersion. While the parameterization of $N=10$ and $\pi=10/11$ only leads to slight overdispersion, close to equidispersion, and not so inferior results (see Table \ref{tab:nb_p_101011_05}), the results for the parameterizations with higher overdispersion become worse (see Table \ref{tab:nb_p_112_05}). Moreover, the coverage even decreases for increasing $n$, in some cases drastically. This again underlines the benefit of the semi-parametric approach. Without having to rely on any distribution assumptions that may not hold in practice, we achieve good results even compared to the parametric approach when a correct distribution assumption is used. 

\begin{figure}[t]
\centering
\includegraphics[scale=0.65]{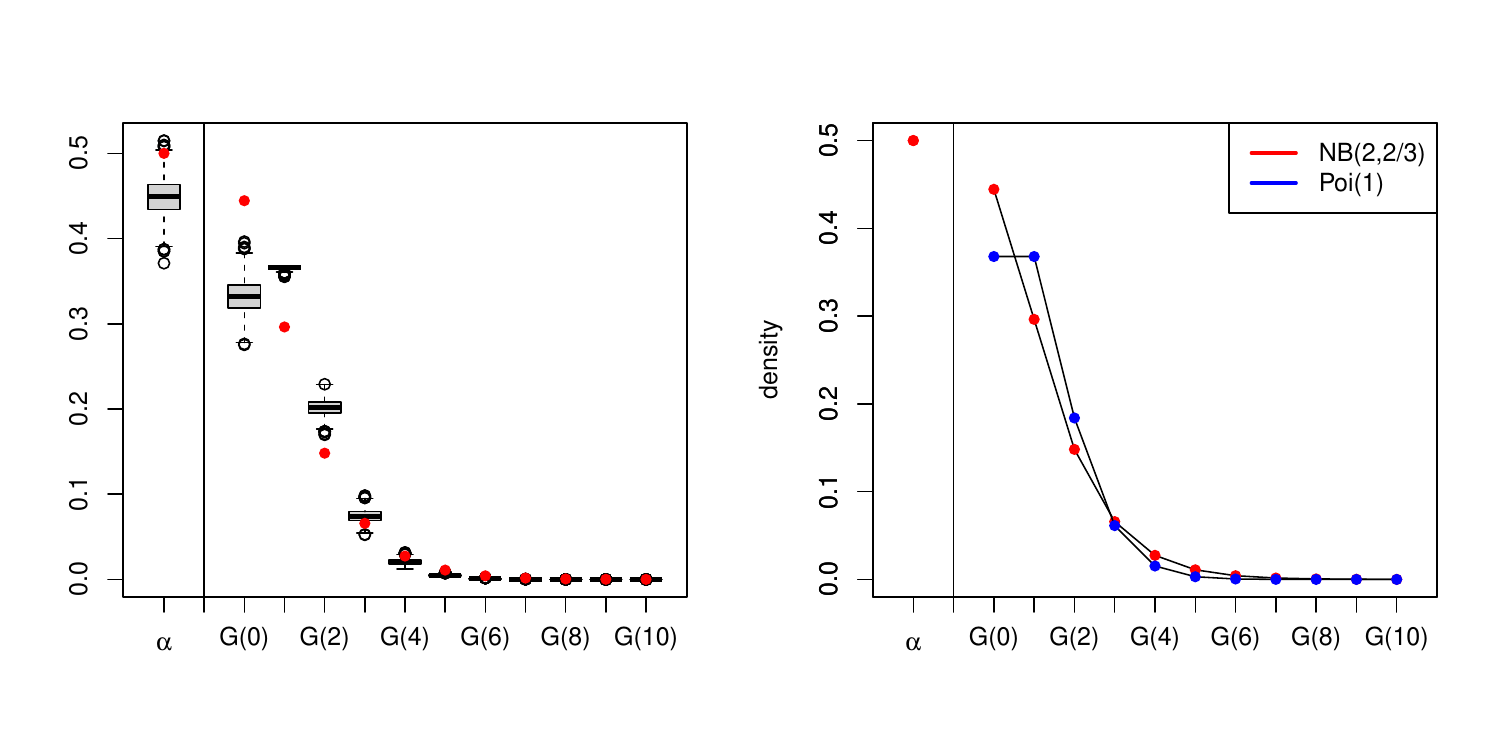} 
\vspace{-0.8cm}
\caption{Left panel: Boxplots of the point estimators for the first twelve entries of the parameter vector $\theta$ for each Monte Carlo sample in case of $n=1000$ with the true values being displayed in red. Right panel: pmfs of the NB(2,2/3) distribution (red) and the Poi(1) distribution (blue).}
\label{fig:gemZD}
\end{figure}

To get an intuition why the coverages may be extremely poor in some cases (even being equal to 0), consider the case of negative binomially distributed innovations with parameters $N=2$ and $\pi=2/3$, see Figure \ref{fig:gemZD}. In the left panel, we see the true parameter values ($\alpha=0.5$ and the first eleven entries of the pmf in case of a NB(2,2/3) distribution) in red. The boxplots show the corresponding point estimates for each of the 500 Monte Carlo samples in case of $n=1000$ when assuming an underlying Poisson distribution. It can be seen that some of these point estimators drastically over- or underestimate the true values. This is particularly the case for $G(0), G(1)$ and $G(2)$. This can be explained by the right panel, where, along with the true value of $\alpha$, the pmf of a NB(2,2/3) distribution (red) and the one of a Poisson distribution providing the same mean (blue) are displayed. As can be seen, these two pmfs differ mainly (absolutely) in their first three entries. 

Another finding is that although only the innovation distribution is misspecified, this misspecification also has a tremendous negative effect on the (point) estimation of the INAR coefficient $\alpha$ itself, as can be seen in Figure \ref{fig:gemZD}. Consequently, this also leads to a (too) low coverage of its confidence intervals. This is due to the fact that we use parametric (conditional)  maximum likelihood estimation for implementing the parametric INAR bootstrap that estimates $\alpha$ and $\lambda$ simultaneously. If we were to use moment estimators, i.e., least-squares or Yule-Walker estimation, to get an estimator for $\alpha$ in the first step and then to get an estimator for $\lambda$ in the second step, this bias effect when estimating $\alpha$ would be avoided.
% since the two parameters are estimated separately. 
However, in setups of a correctly specified Poi-INAR(1) process, the Yule-Walker estimation method is usually inferior to the maximum likelihood approach in finite samples such that the latter is mostly preferred in practice. In view of the results shown in Table \ref{tab:nb_p}, however, this should be done with care when doing inference.

\begin{table}[t]
\centering
\begin{tabular}{lr|rrrr}
            &       & \multicolumn{4}{c}{} \\
 DGP &$n$& $100$ & $500$ & $1000$ & $5000$ \\ 
\hline
Poi-INAR & coverage & 0.918&0.938&0.944&0.946 \\ 
  &average length & 0.210&0.089&0.066&0.028 \\ 
  \hline
   NB-INAR & coverage & 0.916&0.948&0.946&0.944    \\ 
  &average length & 0.216&0.090&0.064&0.029  \\ 
  \hline
  INARCH & coverage & 0.916&0.752&0.558&0.372  \\ 
 &average length & 0.215&0.094&0.065&0.029  \\  
\end{tabular}
\vspace{0.3cm}
\caption{Coverage and average length of the confidence intervals for $P_{S,x_n}$ for different sample sizes and different true DGPs.}
%along with the corresponding results of the percentile version.}
\label{tab:pp_sp}
\end{table}

\subsection{Predictive inference for count time series}\label{sec:pred_inf_sims}

To validate the procedure described in Section \ref{subsec:pp}, we set up simulations covering the same DGPs as in \citet{pred_paper}, namely Poi(1)-INAR(1) and NB(2,2/3)-INAR(1) with both $\alpha=0.5$ and INARCH(1) with $\alpha=0.5$ and $\beta=1$. 
%We also implemented the corresponding percentile version of the described approach which has also been introduced in \citet{pred_paper} (Subsection 4.1), where they immediately use the (ordered) $B$ bootstrap estimators as bounds of the intervals in order to definitely keep the range of the parameter of interest, i.e., $[0,1]$. 
The results are displayed in Table \ref{tab:pp_sp} and are as expected. In the case, when the made assumption of an INAR model is true, the coverage of the confidence intervals increases towards $95\%$ for increasing $n$, while the average interval length decreases. However, in case of the INARCH DGP, the INAR assumption is violated resulting in decreasing coverage for increasing $n$. %The percentile versions always lead to similar results. 

\subsection{Joint dispersion index analysis for observations and innovations}\label{sec:dispersion_sims}

To validate the performance of the bootstrap algorithm proposed in Section \ref{subsec:disp}, we applied it again on a Poi(1)-INAR(1) DGP and an NB(2,2/3)-INAR(1) DGP both with $\alpha=0.5$. Table \ref{tab:disp_inds} displays the results. We see that the coverage increases towards the desired coverage level of $95\%$, while the average length decreases. In the setup of the INAR model with negative binomial innovations, the intervals are systematically larger which can be explained by the underlying overdispersion of both the innovations and the observations compared to the equidispersion in case of the Poi-INAR model. 

\begin{table}[t]
\centering
\begin{tabular}{lr|rrrr|rrrr}
            &       & \multicolumn{4}{c|}{innovations}& \multicolumn{4}{c}{observations} \\
 DGP &$n$& $100$ & $500$ & $1000$ & $5000$& $100$ & $500$ & $1000$ & $5000$ \\ 
\hline
Poi-INAR & cov &0.838&0.944&0.942&0.942 &0.876&0.950&0.940&0.942\\
& ave &0.778&0.365&0.260&0.118 &0.545&0.246&0.175&0.079 \\
\hline
NB-INAR & cov &0.854&0.910&0.928&0.940 &0.878&0.918&0.932&0.938\\
 & ave &1.139&0.550&0.393&0.179 &0.771&0.361&0.257&0.117
\end{tabular}
\vspace{0.3cm}
\caption{Coverage (cov) and average length (ave) of the confidence intervals for the dispersion indices for the innovations and the observations for different sample sizes and different true DGPs.}
\label{tab:disp_inds}
\end{table}

\section{Real-Data Application} \label{sec:appl}

%\red{Application of Subsection \ref{subsec:disp} on transaction counts.}

As a real-data application, based on the results of Theorem \ref{bs_clt} and Corollary \ref{cor:delta_method_boot_consistency}, we apply the procedures described in Sections \ref{subsec:pp} and \ref{subsec:disp} to semi-parametrically perform predictive inference and to construct confidence intervals for observations' and innovations' dispersion indices. 

In one of their real-data applications, \citet{pred_paper} considered a data set of the Deutsche Börse Group containing $n=404$ transaction counts of structured products per trading day. The data along with the corresponding (P)ACF can be found in Figure \ref{fig:transa}. Based on the test result of \citet{gof_sp} (see also Section \ref{subsec:gof}) to not reject the semi-parametric null of an INAR(1) model at $5\%$ level, they performed non-parametric as well as parametric predictive inference by assuming an INAR(1) model. For the parametric approaches, they separately imposed either a Poisson distribution or a geometric distribution for the innovations. They constructed $95\%$ confidence intervals for $P_{S,x_n}:=P(X_{n+1} \in S | X_n =x_n)$. As sets of interest $S$, they chose $S=\{0\}$ and $S=\{2\}$. We now apply the semi-parametric procedure described in Section \ref{subsec:pp} and give the point estimates along with confidence intervals for $P_{S,x_n}$. We display our results along with the ones of \citet{pred_paper} (in italic) in Table \ref{tab:pp_realdata}. With their results, \citet{pred_paper} could underline the results obtained in \citet{gof_sp} that a Geo-INAR(1) model might be a good fit to the data. Our semi-parametric result fits well into this picture. The corresponding confidence intervals are nearly disjoint with the Poi-INAR ones and include the Geo-INAR ones, while being slightly shorter than the non-parametric ones. 

To illustrate the procedure proposed in Section \ref{subsec:disp}, we additionally construct confidence intervals for the dispersion indices of the innovations and the observations, see Table \ref{tab:disp_inds_realdata} for the result. Both confidence intervals do not include the 1. Hence, both intervals suggest that the data and also the innovations are overdispersed.

\begin{figure}[t]
\centering
\includegraphics[scale=0.65]{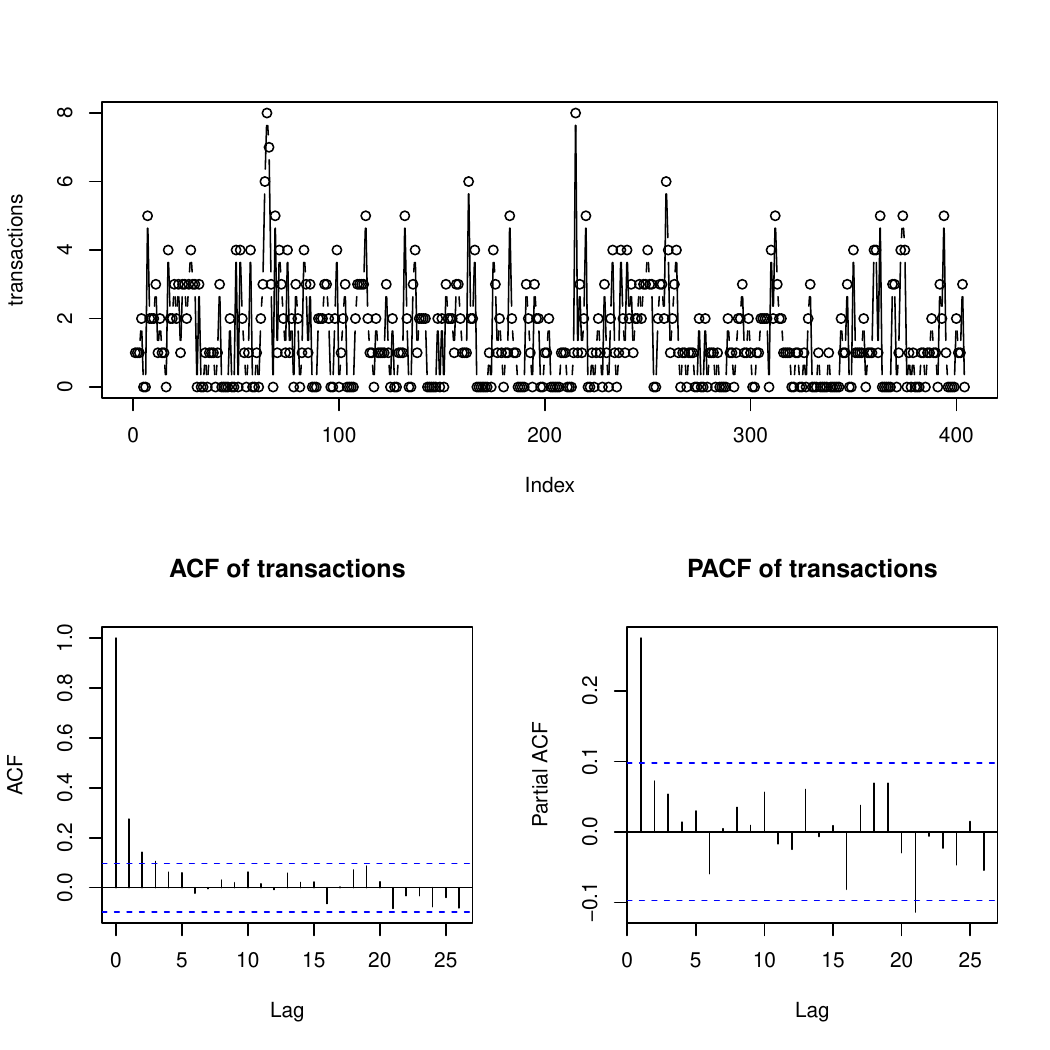} 
\caption{Plot of transaction counts and the corresponding (P)ACF (reproduced from \citet{gof_sp}.}
\label{fig:transa}
%Drost_simergeb_ext
\end{figure}

\begin{table}[t]
\centering
\begin{tabular}{cc|cc}
set & assumption  & point estimation & CI \\ 
\hline
$S=\{0\}$ & semi-param. INAR & 0.4593  & [0.3859, 0.5386] \\
%\hline
&\textit{Poi-INAR}  & \textit{0.3203 }& \textit{[0.2720,0.3676]} \\ 
&\textit{Geo-INAR}  & \textit{0.4929} & \textit{[0.4469,0.5290]} \\ 
& \textit{non-param.}  & \textit{0.4884} & \textit{[0.4023,0.5752]} \\
\hline
$S=\{2\}$ & semi-param. INAR & 0.1425 & [0.0897, 0.1926] \\
%\hline
&\textit{Poi-INAR}  & \textit{0.2076 }& \textit{[0.1841, 0.2343]} \\ 
&\textit{Geo-INAR}  & \textit{0.1267} & \textit{[0.1189, 0.1365]} \\ 
& \textit{non-param.}  & \textit{0.1318} & \textit{[0.0676, 0.1824]} 
% \hline  %%% SP %%%
% semi-param. INAR & sp & 0.1425 & [0.0897, 0.1926] \\ 
%  & np &  & [0.0870, 0.1943] \\ 
\end{tabular} 
\vspace{0.2cm}
\caption{Point estimations and confidence intervals resulting from the semi-parametric procedure described in Section \ref{subsec:pp} for two different sets $S$  along with the (non-)parametric results of \citet{pred_paper} displayed in italic.}
\label{tab:pp_realdata}
\end{table}

% auch pp!!!

\begin{table}[t]
\centering
\begin{tabular}{l|cc}
            &  point estimation     & CI \\
            \hline
innovations & 1.6180 & [1.3342, 1.8496] \\
observations &  1.4832   & [1.2660, 1.6593]
\end{tabular}
\vspace{0.3cm}
\caption{Point estimation and confidence intervals for the dispersion indices of innovations and observations.}
\label{tab:disp_inds_realdata}
\end{table}

\section{Conclusion}\label{sec:concl}
%\blue{Work in progress}
We proposed a semi-parametric INAR bootstrap procedure that allows for joint inference of the INAR model coefficients and the innovation distribution and circumvents the need for a cumbersome estimation of the limiting distribution. By using such a semi-parametric setting, we avoid the use of any parametric assumptions on the innovation distribution. Indeed, such parametric assumptions may be too restrictive in practice and can have a large negative impact on the conducted inference, when they do not hold. Based on the semi-parametric INAR bootstrap proposed by \textcite{jewe}, which they showed to be consistent for functions of generalized means, we established bootstrap consistency results by proving a corresponding bootstrap central limit theorem. 
% Especially, we dealt with empirical process theory and estimating equations depending on an infinite dimensional parameter.
We illustrated the usefulness of our results by several methodological applications including goodness-of-fit testing, predictive inference and joint dispersion index analysis.
In simulations, we illustrated our theoretical results by analyzing the coverage and the length of confidence intervals for different INAR model parameters in various setups. Concluding, the semi-parametric INAR bootstrap performs well in comparison to both non-parametric and parametric approaches. In particular, it turns out to be a practically relevant alternative that allows for robust inference in cases, where parametric assumptions are falsely imposed.
% we highlighted the relevance of our central limit theorems for applications on real-data sets of different nature.

\section*{Acknowledgments}

Financial support by the Deutsche Forschungsgemeinschaft (DFG, German Research Foundation) Project ID 437270842 (Model Diagnostics for Count Time Series) and 520388526 (TRR 391: Spatio-temporal Statistics for the Transition of Energy and Transport, Project A03) is gratefully acknowledged. Additionally, the authors gratefully acknowledge the computing time provided on the Linux HPC cluster
at TU Dortmund University (LiDO3), partially funded in the course of the Large-Scale Equipment
Initiative by the German Research Foundation (DFG) as project 271512359.

\newpage                           % erzwingt Seitenumbruch
\renewcommand*{\refname}{Quellen}  % Umbenennen des Literaturverzeichnises
\printbibliography[title={References}]

\newpage

\appendix
%\renewcommand*{\refname}{Appendix}
%\appendixheaderon
%\appendixpagename{Appendix}
%\appendixpage

\pagestyle{plain}

%\newpage

%\pagestyle{empty}

%\newpage

\section{Proofs of the Main Results}\label{sec:appendixA}

%\section*{Appendix A: Proofs}

% \begin{Theorem}[Estimation consistency of the bootNPMLE] \label{cons_bs_est}
% Suppose Assumptions \ref{ass1} and \ref{asum_mix_gu} hold and we observe data $X_{-p},\ldots,X_n$ from an INAR($p$) process $(X_t,t \in \mathbb{Z})$ with INAR coefficients $\alpha_0$ and innovation distribution $G_0$ such that $\theta_0=(\alpha_0, G_0) \in \Theta \times \mathcal{G}$. Let $\text{E}(X_1^k) < \infty$ for some $k>2(p+4)$ and $\thn=(\widehat{\alpha}_n, \widehat{G}_n)$ be an NPMLE of $(\alpha_0, G_0)$. Suppose the bootstrap proposal from Section \ref{sec:boot_algo} is used to get a bootNPMLE $\thn^*=(\widehat{\alpha}_n^*, \widehat{G}_n^*)
% % =  (\widehat{\alpha}_{n,1}^*,\ldots,\widehat{\alpha}_{n,p}^*, \widehat{G}_n^*(0), \widehat{G}_n^*(1), \ldots)
% $. Then, for 
% % For all $ (\widehat{\alpha}_n, \widehat{G}_n)\in \Theta \times \mathcal{G}$ and 
% all initial probability measures $\nu_{\alpha_0, G_0}$ on $\mathbb{N}^p_0$, 
% % and for any bootNPMLE $\widehat{\theta}_n^*=(\widehat{\alpha}_n^*, \widehat{G}_n^*)$, 
% we have 
% % of $(\widehat{\alpha}_n, \widehat{G}_n)$
% \[ \widehat{\alpha}_n^* - \widehat{\alpha}_n  \overset{p^*}{\rightarrow} 0 \quad \text{and} \quad \sum \limits_{k=0}^\infty | \widehat{G}_n^*(k) - \widehat{G}_n(k) |  \overset{p^*}{\rightarrow} 0  
% \] 
% in $\mathds{P}_{\nu_{\alpha_0,G_0}, \alpha_0, G_0}$-probability, where $P^*= \mathds{P}_{\nu_{\widehat{\alpha}_n,\widehat{G}_n}, \widehat{\alpha}_n, \widehat{G}_n}$ (conditional on the data).
% \end{Theorem}

\subsection{Proof of Theorem \ref{cons_bs_est}}
For proving the estimation consistency of the bootstrap estimator $\thns$, we follow the proof idea of Theorem \ref{cons} that is given as Theorem 1 in \citet{drost}. For this purpose, we adopt the notation introduced in Section B.2 of \textcite{drost2009appendix} and make use of similar arguments outlined in the following. Let $\thn=(\widehat{\fett\alpha}_n, \widehat{G}_n)$ be an NPMLE of $(\fett\alpha_0, G_0)$ based on the sample $X_{-p},\ldots,X_n$ and let $\thn^*=(\widehat{\fett\alpha}_n^*, \widehat{G}_n^*)$  be a bootNPMLE based on a bootstrap sample $X_{-p}^*,\ldots,X_n^*$ that is obtained from the bootstrap proposal from Section \ref{sec:boot_algo}. Hence, the goal is to show that
\[ \widehat{\fett\alpha}_n^* - \widehat{\fett\alpha}_n  \overset{p^*}{\rightarrow} 0 \quad \text{and} \quad \sum \limits_{k=0}^\infty | \widehat{G}_n^*(k) - \widehat{G}_n(k) |  \overset{p^*}{\rightarrow} 0  
\] 
holds in $\mathds{P}_{\nu_{\fett\alpha_0,G_0}, \fett\alpha_0, G_0}$-probability, where $P^*(\cdot)= \mathds{P}_{\nu_{\widehat{\fett\alpha}_n,\widehat{G}_n}, \widehat{\fett\alpha}_n, \widehat{G}_n}(\cdot|\mathbb{X})$.
% (conditional on the data). 
To prove this result, 
% as $\widehat{\alpha}_n^*$ is of finite dimension and
using that both $(\widehat G_n^*(k),k\in\mathbb{N}_0)$ and $(\widehat G_n(k),k\in\mathbb{N}_0)$ sum up to one by construction, i,e., $\sum_{k=0}^\infty \widehat G_n^*(k)=1$ and $\sum_{k=0}^\infty \widehat G_n(k)=1$ according to \textcite{drost2009appendix}, it suffices to prove $\widehat{\fett\alpha}_n^* - \widehat{\fett\alpha}_n  \overset{p^*}{\rightarrow} 0$ as well as $\widehat{G}_n^*(k) - \widehat{G}_n(k)\overset{p^*}{\rightarrow} 0$ for all $k\in\mathbb{N}_0$. We prove the latter by following the arguments of Wald's consistency theorem \citep[see, e.g., the proof of Theorem 5.14 in][]{vandervaart}
% (see, e.g., the proof of Theorem 5.14 in \textcite{vandervaart} 
and extend it to the bootstrap case. For this purpose, we have to consider the compactification of the parameter space analogously to \citet{drost}. First, we introduce $\overline{\mathcal{G}}$, which denotes the class of all probability measures on $\mathbb{N}_0 \cup \infty$ and identify each $G\in\overline{\mathcal{G}}$ by its probability mass function (pmf) sequence $(G(k),k\in\mathbb{N}_0)$. This correspondence is 1-to-1 by the relationship $G(\infty) = 1-\sum_{k\in\mathbb{N}_0} G(k)$. Hence, $\overline{\mathcal{G}}$ is a subset of $[0,1]^{\mathbb{N}_0}$ equipped with the norm $||a|| = \sum_{k=0}^\infty 2^{-k}|a(k)|$ for $a=(a(k),k\in\mathbb{N}_0)$. Following the arguments in \citet{drost} and \citet{drost2009appendix}, $\overline{\mathcal{G}}$ is a compact subset of $[0,1]^{\mathbb{N}_0}$. Further, for $G\in\overline{\mathcal{G}}$, we define $P_{x, \infty}^{\fett\alpha, G} = 1 - \sum_{j \in \mathbb{N}_0} P_{x, j}^{\fett\alpha, G} = G(\infty)$ for $x \in \mathbb{N}_0^p$ and $P_{x, \infty}^{\fett\alpha, G}=1$ if $\text{max}_{i=1}^p x_i = \infty$. Second, by considering the compactification of the parameter space $\Theta=(0,1)^p$ of $\theta$ as well, i.e., $[0,1]^p$, we define $\overline{E} := [0,1]^p\times \overline{\mathcal{G}}$ equipped with the ``sum distance'' $d((\fett\alpha, G),(\fett\alpha', G'))= |\fett\alpha-\fett\alpha'| + ||(G(k),k \in \mathbb{N}_0) - (G'(k),k \in \mathbb{N}_0) ||$. As $\overline{E}$ is the product of two compact spaces, it is itself compact.

Further, for $x_{-p}, \ldots, x_0\in\mathbb{N}_0^{p+1}$, we define $m^{\fett\alpha, G}(x_{-p}, \ldots, x_0) = \log P^{\fett\alpha, G}_{(x_{-1}, \ldots, x_{-p}), x_0}$, where $P^{\fett\alpha, G}_{(x_{-1}, \ldots, x_{-p}), x_0}$ is defined in \eqref{tpinar}, and introduce the (random) function $M_n: \overline{E} \rightarrow [-\infty, \infty)$ by
\begin{align*}
M_n(\fett\alpha, G)=\frac{1}{n} \sum\limits_{t=0}^n m^{\fett\alpha, G}(X_{t-p}, \ldots, X_t).    
\end{align*}
and its bootstrap analogue $M_n^*: \overline{E} \rightarrow [-\infty, \infty)$ by
\begin{align*}
M_n^*(\fett\alpha, G)=\frac{1}{n} \sum\limits_{t=0}^n m^{\fett\alpha, G}(X^*_{t-p}, \ldots, X_t^*).    
\end{align*}
We have to show that $M_n^*(\fett\alpha, G) - M_n(\fett\alpha, G)\overset{p^*}{\rightarrow} 0$ in probability for any $(\fett\alpha, G)\in \overline{E}$. Noting that
\begin{align*}
\text{E}^*(M_n^*(\fett\alpha, G))=M_n(\fett\alpha, G)
\end{align*}
holds by construction \citep[see, e.g., (A.6) in ][]{pred_paper},
% (see, e.g., (A.5) in \textcite{pred_paper}, 
we need to show that $M_n^*(\fett\alpha, G) - \text{E}^*(M_n^*(\fett\alpha, G))\overset{p^*}{\rightarrow}$ 0 in probability. For this purpose, it remains to show that $\text{Var}^*(M_n^*(\fett\alpha, G)) \overset{p^*}{\rightarrow} 0$ in probability as $n \rightarrow \infty$, where
\begin{align*}
\text{Var}^*(M_n^*(\fett\alpha, G)) = \text{Var}^* \left(   \frac{1}{n} \sum\limits_{t=0}^n m^{\fett\alpha, G} (X^*_{t-p}, \ldots, X_t^*)      \right)= \text{Var}^* \left(   \frac{1}{n} \sum\limits_{t=0}^n \log P^{\fett\alpha, G}_{(X^*_{t-1}, \ldots, X_{t-p}^*),X_t^*}      \right).
\end{align*}
For notational convenience, we focus on the case of an INAR(1) model, i.e., $p=1$ exclusively, but the following can be extended analogously to the case of higher order INAR models. Using (\ref{tpinar1}), we get
\begingroup
\allowdisplaybreaks
\begin{align} %\label{eq:Cov_big}
& \quad \text{Var}^*(M_n^*(\alpha, G)) \notag \\ 
&=\text{Var}^* \left( \frac{1}{n} \sum\limits_{t=0}^n \log \left( \sum\limits_{j=0}^{\text{min}(X_t^*,X_{t-1}^*)} \binom{X_{t-1}^*}{j}\alpha^j(1-\alpha)^{X_{t-1}^*-j} G(X_t^*-j) \right)   \right) \notag \\
&= \frac{1}{n^2} \sum\limits_{t_1,t_2=0}^n \text{Cov}^* \left( \log \left( \sum\limits_{j_1=0}^{\text{min}(X_{t_1}^*,X_{t_1-1}^*)} \binom{X_{t_1-1}^*}{j_1}\alpha^{j_1}(1-\alpha)^{X_{t_1-1}^*-j_1} G(X_{t_1}^*-j_1) \right)\right. , \notag \\*
& \qquad \qquad \qquad \quad \quad \; \left.\log \left( \sum\limits_{j_2=0}^{\text{min}(X_{t_2}^*,X_{t_2-1}^*)} \binom{X_{t_2-1}^*}{j_2}\alpha^{j_2}(1-\alpha)^{X_{t_2-1}^*-j_2} G(X_{t_2}^*-j_2)      \right) \right) \notag \\
&= \frac{1}{n}\sum\limits_{h=-(n-1)}^{n-1} \frac{1}{n} \sum\limits_{t=\text{max}(1,1-h)}^{\text{min}(n,n-h)} \text{Cov}^* \left(  \log \left( \sum\limits_{j_1=0}^{\text{min}(X_t^*,X_{t-1}^*)} \binom{X_{t-1}^*}{j_1}\alpha^{j_1}(1-\alpha)^{X_{t-1}^*-j_1}G(X_t^*-j_1) \right)\right. , \notag \\
& \qquad \qquad \qquad \qquad \quad \left.\log \left( \sum\limits_{j_2=0}^{\text{min}(X_{t+h}^*,X_{t+h-1}^*)} \binom{X_{t+h-1}^*}{j_2}\alpha^{j_2}(1-\alpha)^{X_{t+h-1}^*-j_2}G(X_{t+h}^*-j_2)  \right) \right). \notag 
\end{align}
Due to the (strict) stationarity of $(X^*_t,t\in\mathbb{Z})$ (conditional on the data), the covariances on the last right-hand side only depend on $h$ (and not on $t$). Further, due to $\min(n,n-h)-\max(1,1-h)+1=n-|h|\leq n$, the last right-hand side can be bounded by
\begin{align} 
& \frac{1}{n}\sum\limits_{h=-(n-1)}^{n-1}  \text{Cov}^* \left(  \log \left( \sum\limits_{j_1=0}^{\text{min}(X_1^*,X_0^*)} \binom{X_0^*}{j_1}\alpha^{j_1}(1-\alpha)^{X_0^*-j_1} G(X_1^*-j_1) \right)\right. , \label{cov_def2} \\
 & \qquad  \qquad \quad \quad \;  \; \; \left.\log \left( \sum\limits_{j_2=0}^{\text{min}(X_{h+1}^*,X_h^*)} \binom{X_h^*}{j_2}\alpha^{j_2}(1-\alpha)^{X_h^*-j_2}G(X_{h+1}^*-j_2)  \right) \right).\notag
\end{align}
\endgroup
As argued by \citet{drost} in their Lemma 1(b), the INAR($p$) process (here we focus on $p=1$) is (geometrically) $\beta$-mixing under $P_{\nu_{\theta,G},\theta,G}$ for $(\theta,G)\in\Theta\times \mathcal{G}$, which is imposed by Assumption \ref{ass1}. Consequently, according to Lemma \ref{lemma_rng}, with probability tending to one, the same holds for the bootstrap process $(X_t^*,t \in \mathbb{Z})$. Next, we define 
\begin{align}Y_h^* := f(X_h^*, X_{h+1}^*):=  \log \left(   \sum\limits_{j=0}^{\min(X_{h+1}^*, X_h^*)} \binom{X_h^*}{j} \alpha^j(1-\alpha)^{X_h^*-j}  G(X_{h+1}^*-j)  \right)    \label{def_Y_0*}
\end{align} 
with an obvious notation for $f(\cdot,\cdot)$. As the argument of $f$ consists of finitely many $X_t^*$'s, using Theorem 14.1 of \citet{davidson} and that $\beta$-mixing implies $\alpha$-mixing \citep{bradley_mix}, we get that $(f(X_h^*, X^*_{h+1},h\in\mathbb{Z})$ is also geometrically $\alpha$-mixing with probability tending to one.
% as $\beta$-mixing implies $\alpha$-mixing \citep{bradley_mix}. 
Hence, using Corollary 14.3 in \citet{davidson}, we get 
\begin{align} \label{cov_ineq}
|\text{Cov}^*(Y_0^*, Y_{h}^* )| \leq 2\, (2^{1-1/q}+1) \, \alpha_{\text{mix},n}(h)^{1-1/q-1/r} \, ||Y_0^*||^*_q\; ||Y_{h}^*||^*_r
\end{align}
with $q>1, r>q/(q-1)$, where $||X_i^*||^*_s=(\text{E}^*(|X_i^*|^s))^{1/s}$, $s \in \{q,r\}$ and $ \alpha_{\text{mix},n}(h)$ denotes the $\alpha$-mixing coefficient of the bootstrap process  $(X_t^*,t \in \mathbb{Z})$ at lag $h$, not to confuse with the coefficient estimator $\widehat \alpha_n$ %=(\widehat \alpha_{n,1},\ldots,\widehat\alpha_{n,p})$ 
of the INAR model. Then, as the bootstrap process is geometrically mixing with probability tending to one according to Lemma \ref{lemma_rng} and the sequence $(\alpha_{\text{mix},n}(h)^{1-1/q-1/r},h\in\mathbb{N}_0)$ converges also exponentially fast to zero for $h\rightarrow \infty$, the covariances $\text{Cov}^*(Y_0^*, Y_{h}^* )$ become absolutely summable with probability tending to one, whenever $||Y_0^*||^*_q$ and $||Y_{h}^*||^*_r$ are bounded in probability for some suitable $q$ and $r$. The above term $2\, (2^{1-1/q}+1)$ is just a constant. 

Hence, to conclude that the covariances in \eqref{cov_ineq} are absolutely summable with probability tending to one, it remains to show that the moments of the function of the bootstrap data are bounded in probability, i.e.
\begin{align} \label{eq:qr}
\exists q > 1, r > q/(q-1):||Y_0^*||^*_q=O_p(1) \quad \text{and} \quad ||Y_{h}^*||^*_r=O_p(1).
\end{align} 
Again, due to the strict stationarity of $(X_t^*,t \in \mathbb{Z})$ (conditional on the data), it suffices to show the previous for $h=0$ and for some $q>2$. If we want to prove that $\text{E}^*(|Y_0^*|^q)^{1/q}$ is bounded in probability, it remains to prove that $\text{E}^*(|Y_0^*|^q)$ is bounded in probability. Plugging in for $Y_0^*$ as defined in \eqref{def_Y_0*} leads to
\[
\text{E}^*(|Y_0^*|^q) = \sum\limits_{(x,y) \in \mathbb{N}_0^2} \left|  \log \left( \sum\limits_{j=0}^{\min(x,y)} \binom{x}{j} \alpha^j(1-\alpha)^{x-j}G(y-j) \right) \right|^q P^*\big( (X_0^*,X_1^*)=(x,y)  \big),
\]
with $P^*(\cdot)= \mathds{P}_{\nu_{\widehat{\fett\alpha}_n,\widehat{G}_n}, \widehat{\fett\alpha}_n, \widehat{G}_n}(\cdot|\mathbb{X})$.
% $P^* \widehat{=} P^{\widehat{\alpha}_n, \widehat{G}_n}$(conditional on the data). 
Let $k>0$ and consider the probability that the last right-hand side is larger than $k$, i.e.,
\begin{align*}
& P \left(  \sum\limits_{(x,y) \in \mathbb{N}_0^2} \left|  \log \left( \sum\limits_{j=0}^{\min(x,y)} \binom{x}{j} \alpha^j(1-\alpha)^{x-j}G(y-j) \right) \right|^q P^*( (X_0^*,X_1^*)=(x,y)  ) \geq k   \right) \\
\leq \, & P \left(  \sum\limits_{(x,y) \in \mathbb{N}_0^2, y \leq x} \left|  \log \left( \sum\limits_{j=0}^y \binom{x}{j} \alpha^j(1-\alpha)^{x-j}G(y-j) \right) \right|^q P^*( (X_0^*,X_1^*)=(x,y)  )    \geq k/2  \right) \\ 
& \, + P \left(  \sum\limits_{(x,y) \in \mathbb{N}_0^2, y > x} \left|  \log \left( \sum\limits_{j=0}^x \binom{x}{j} \alpha^j(1-\alpha)^{x-j}G(y-j) \right) \right|^q P^*( (X_0^*,X_1^*)=(x,y)  )   \geq k/2  \right) \\
 =: & I + II.
\end{align*}
Let us consider the first term $I$. As $\sum_{j=0}^y \binom{x}{j} \alpha^j(1-\alpha)^{x-j}G(y-j) = P_{x,y}^{\alpha,G} \in [0,1]$, we have $\log(\sum_{j=0}^y \binom{x}{j} \alpha^j(1-\alpha)^{x-j}G(y-j))\leq 0$. Hence, using that $-\log(\cdot)$ is a decreasing function and that all summands in $\sum_{j=0}^y \binom{x}{j} \alpha^j(1-\alpha)^{x-j}G(y-j)$ are non-negative, we get an upper bound by only considering the last term of the sum, i.e., $-\log(\sum_{j=0}^y \binom{x}{j} \alpha^j(1-\alpha)^{x-j}G(y-j))\leq -\log(\binom{x}{y} \alpha^y (1-\alpha)^{x-y} G(0))$ and we have  
\begin{align*}
\left| \log \left( \sum\limits_{j=0}^y \binom{x}{j} \alpha^j(1-\alpha)^{x-j}G(y-j) \right) \right|^q &=  \left(- \log \left( \sum\limits_{j=0}^y \binom{x}{j} \alpha^j(1-\alpha)^{x-j}G(y-j) \right) \right) ^q \\
& \leq \left( -\log \left( \binom{x}{y} \alpha^y (1-\alpha)^{x-y} G(0)  \right) \right)^q   \\ 
& \leq \Big(  -\log \left( \alpha^y (1-\alpha)^{x-y} G(0) \right)  \Big)^q 
\\
&= \Big(-y \log(\alpha) - (x-y) \log(1-\alpha) - \log(G(0))\Big)^q, 
%& \overset{q > 2}{\red{\leq} } c_1 (-y\log(\alpha))^q + c_2(-(x-y) \log(1-\alpha))^q + c_3( -\log(G(0)) )^q,
\end{align*}
where we used that $\binom{x}{y}\geq 1$ such that $-\log(\binom{x}{y}z)\leq -\log(z)$ holds for all $z\geq 0$. Finally, using that $\log(x)<0$ for $x \in (0,1)$ and that $0<G(0)<1$ as well as $\alpha\in(0,1)$ by Assumption \ref{ass1}, we see that $-y \log(\alpha) - (x-y) \log(1-\alpha) - \log(G(0))$ is strictly positive and finite. Define $g(x) := x^q$ which is convex for $x \geq 0$. For a convex function $g: (0,\infty) \rightarrow (0,\infty)$, we know that $\forall x, \, y \in (0,\infty), \, \forall \theta \in [0,1]: \; g(\theta x + (1-\theta)y) \leq \theta g(x) + (1-\theta)g(y)$. Hence, with $\theta=\frac{1}{2}$ and $g(x)=x^q$, $x\geq 0$, we get
 \begin{align*}
& \quad \Big(-y \log(\alpha) - (x-y) \log(1-\alpha) - \log(G(0))\Big)^q \\
&= \Bigg( \theta \Big(2 \left(  -y \log(\alpha) - (x-y) \log(1-\alpha) \right) \Big) + \left(1-\theta\right)\Big(2\left( - \log(G(0))\right)\Big)  \Bigg)^q \\
& \leq \theta \Big( 2 (  -y \log(\alpha) - (x-y) \log(1-\alpha)  )   \Big)^q + \left( 1- \theta \right) \Big( 2 (- \log(G(0)))  \Big)^q \\
&= \frac{1}{2} \Big( 2 (  -y \log(\alpha) - (x-y) \log(1-\alpha)  )   \Big)^q + \frac{1}{2} \Big( 2 (- \log(G(0)))  \Big)^q.
% \\
% & \leq \frac{1}{4}(4(-y \log(\alpha)))^q + \frac{1}{4}(4(-(x-y) \log(1-\alpha)))^q + \frac{1}{2}(2(-\log(G(0))))^q \\
% & = 4^{q-1}(-y \log(\alpha))^q + 4^{q-1}(-(x-y) \log(1-\alpha))^q +2^{q-1}(-\log(G(0)))^q.
\end{align*}
Similarly, we get
\begin{align*}
\Big( 2 (  -y \log(\alpha) - (x-y) \log(1-\alpha)  )   \Big)^q \leq \frac{1}{2} \Big(4(-y \log(\alpha))\Big)^q + \frac{1}{2}\Big(4(-(x-y) \log(1-\alpha))\Big)^q,
\end{align*}
which altogether leads to
 \begin{align*}
& \quad \Big(-y \log(\alpha) - (x-y) \log(1-\alpha) - \log(G(0))\Big)^q \\
& \leq \frac{1}{4}\Big(4(-y \log(\alpha))\Big)^q + \frac{1}{4}\Big(4(-(x-y) \log(1-\alpha))\Big)^q + \frac{1}{2}\Big(2(-\log(G(0)))\Big)^q \\
& = 4^{q-1}\Big(-y \log(\alpha)\Big)^q + 4^{q-1}\Big(-(x-y) \log(1-\alpha)\Big)^q +2^{q-1}\Big(-\log(G(0))\Big)^q.
\end{align*}
In total, we have 
\begin{align*}
I  \leq& P \Biggl( \sum\limits_{x,y \in \mathbb{N}_0, y \leq x} \Bigg[4^{q-1} y^q (-\log(\alpha))^q + 4^{q-1}(x-y)^q (-\log(1-\alpha))^q + 2^{q-1} (-\log(G(0)))^q\Bigg] \\
 & \qquad \qquad \qquad \qquad  P^*( (X_0^*,X_1^*)=(x,y)  ) \geq k/2  \Biggr) \\
 \leq&  P \left(  \sum\limits_{x,y \in \mathbb{N}_0, y \leq x} 4^{q-1} y^q (-\log(\alpha))^q P^*( (X_0^*,X_1^*)=(x,y)  )   \geq k/6  \right) \\
&+  P \left(  \sum\limits_{x,y \in \mathbb{N}_0, y \leq x} 4^{q-1}(x-y)^q (-\log(1-\alpha))^q P^*( (X_0^*,X_1^*)=(x,y)  )   \geq k/6  \right) \\
&+  P \left(  \sum\limits_{x,y \in \mathbb{N}_0, y \leq x} 2^{q-1} (-\log(G(0)))^q P^*( (X_0^*,X_1^*)=(x,y)  )   \geq k/6  \right) \\
 =:& I_1+I_2+I_3,
\end{align*} 
where
\begin{align*}
I_1 &= P \left( 4^{q-1}(- \log(\alpha))^q  \sum\limits_{x,y \in \mathbb{N}_0, y \leq x} y^q  P^*( (X_0^*,X_1^*)=(x,y)  ) \geq k/6  \right) \\
& \leq P \left( 4^{q-1}(- \log(\alpha))^q  \sum\limits_{x,y \in \mathbb{N}_0} y^q  P^*( (X_0^*,X_1^*)=(x,y)  ) \geq k/6  \right) \\
&= P \Big( 4^{q-1}(- \log(\alpha))^q \text{E}^*((X_1^*)^q) \geq k/6   \Big),    \\
% \end{align*}
%where $E^*((-X_1^*)^q) = O_p(1)$ due to the existence of the corresponding bootstrap moments. I.e., due to the existence of the corresponding bootstrap moments, there exists appropriate $k$ and $\varepsilon$, respectively \red{später sieht man, wie genau sie auszusehen haben}.
% \begin{align*}
I_2 &= P \left(   4^{q-1}(- \log(1-\alpha))^q \sum\limits_{x,y \in \mathbb{N}_0, y \leq x} (x-y)^q  P^*( (X_0^*,X_1^*)=(x,y)  ) \geq k/6  \right) \\
& \leq P \left( 4^{q-1}(- \log(1-\alpha))^q  \sum\limits_{x,y \in \mathbb{N}_0} |x-y|^q P^*( (X_0^*,X_1^*)=(x,y)  ) \geq k/6  \right)\\
&= P \Big(  4^{q-1}(- \log(1-\alpha))^q  \text{E}^*(  |X_0^*-X_1^*|^q   ) \geq k/6   \Big),
\end{align*} 
and
\begin{align*}
I_3 &= P \left( 2^{q-1}(-\log(G(0)))^q \sum\limits_{x,y \in \mathbb{N}_0, y \leq x}   P^*( (X_0^*,X_1^*)=(x,y)  ) \geq k/6  \right) \\
& \leq P \left( 2^{q-1}(-\log(G(0)))^q \sum\limits_{x,y \in \mathbb{N}_0}   P^*( (X_0^*,X_1^*)=(x,y)  ) \geq k/6  \right) \\
& = P \Big( 2^{q-1}(-\log(G(0)))^q \geq k/6  \Big) = \mathds{1}_{ \{ 2^{q-1}(-\log(G(0)))^q\geq k/6\}}.
\end{align*}
Altogether, for the first term $I$, we have
\begin{align*}
I & \leq P \left( 4^{q-1}(- \log(\alpha))^q E^*((X_1^*)^q) \geq k/6   \right) + P \left(  4^{q-1}(- \log(1-\alpha))^q  \text{E}^*(  |X_0^*-X_1^*|^q   ) \geq k/6   \right) \\ & \qquad + \mathds{1}_{ \{ 2^{q-1}(-\log(G(0)))^q\geq k/6\}}.   
 \end{align*}
Now, let us consider the second term $II$. We use similar arguments as employed above for deriving the bound for $I$. But instead of using the last summand (for $j=y$), we now use the first one (for $j=0$) to construct an upper bound for $II$. Hence, we get
\begin{align*}
\left|  \log \left( \sum\limits_{j=0}^x \binom{x}{j} \alpha^j(1-\alpha)^{x-j}G(y-j) \right) \right|^q  &= \left( - \log \left( \sum\limits_{j=0}^x \binom{x}{j} \alpha^j(1-\alpha)^{x-j}G(y-j) \right) \right)^q \\
& \leq \Big(-\log((1-\alpha)^x G(y))\Big)^q \\
&= \Big(-x \log(1-\alpha) - \log(G(y))\Big)^q. 
% \leq \Big(-x \log(1-\alpha)-\log(c_1)+c_2 y  \Big)^q \\ 
% & \leq 4^{q-1}(-x \log(\alpha))^q + 4^{q-1} (- \log(c_1))^q + 2^{q-1}(c_2 y)^q 
\end{align*}
Using Assumption \ref{asum_mix_gu}, we have $G(y)\geq c_1 e^{-c_2 y}$ such that $- \log(G(y))\leq -\log(c_1)+c_2y$. Altogether, by using similar convexity arguments as used for term $I$, we get 
\begin{align*}
\left|  \log \left( \sum\limits_{j=0}^x \binom{x}{j} \alpha^j(1-\alpha)^{x-j}G(y-j) \right) \right|^q \leq 4^{q-1}\Big(-x \log(\alpha)\Big)^q + 4^{q-1} \Big(- \log(c_1)\Big)^q + 2^{q-1}\Big(c_2 y\Big)^q.
\end{align*}
Analogously to the steps conducted for $I$, we get
\begin{align*}
II &\leq  P \Biggl( \sum\limits_{x,y \in \mathbb{N}_0, y > x} \Bigg[4^{q-1}(-x \log(\alpha))^q + 4^{q-1} (- \log(c_1))^q + 2^{q-1}(c_2 y)^q \Bigg] P^*( (X_0^*,X_1^*)=(x,y)  ) \geq k/2  \Biggr)\\
& \leq P \left( 4^{q-1} (-\log(\alpha))^q \text{E}^*((X_0^*)^q) \geq k/6   \right) + \mathds{1}_{\{(4^{q-1}(-\log(c_1))^q \geq k/6\}} + P \left( 2^{q-1} c_2 \text{E}^*((X_1^*)^q) \geq k/6   \right).
\end{align*}
In summary, we showed 
\begin{align*}
& P \left(  \sum\limits_{(x,y) \in \mathbb{N}_0^2} \left|  \log \left( \sum\limits_{j=0}^{\min(x,y)} \binom{x}{j} \alpha^j(1-\alpha)^{x-j}G(y-j) \right) \right|^q P^*( (X_0^*,X_1^*)=(x,y)  ) \geq k   \right) \\
& \leq  P \left( 4^{q-1}(- \log(\alpha))^q \text{E}^*((X_1^*)^q) \geq k/6   \right) 
+ P \left(  4^{q-1}(- \log(1-\alpha))^q  \text{E}^*(  |X_0^*-X_1^*|^q   ) \geq k/6   \right) \\
& \quad + \mathds{1}_{\{ 2^{q-1}(-\log(G(0)))^q \geq k/6\}} +  P \left( 4^{q-1} (-\log(\alpha))^q E^*((X_0^*)^q) \geq k/6   \right) \\
& \quad + \mathds{1}_{\{(4^{q-1}(-\log(c_1))^q \geq k/6\}}  + P \left( 2^{q-1} c_2 \text{E}^*((X_1^*)^q) \geq k/6   \right),
\end{align*}
where $c_1 \in (0,1]$ and $c_2 \in \mathbb{R}_+$. Due to the boundedness of the moments of the original observations, we conclude that the moments of the bootstrap innovations are also bounded with probability tending to one \citep[see Lemma B.1 in][]{jewe}. Using the Minkowski inequality, the same holds for their differences. With the previous arguments, the last right-hand side above becomes arbitrarily small for $k$ chosen sufficiently large. Hence, \eqref{eq:qr} follows and, consequently, the covariances in \eqref{cov_def2} are summable. Hence, the right-hand side in \eqref{cov_def2} is of order $O_P(\frac{1}{n})$ and vanishing in probability as $n\rightarrow \infty$.
% With $\min(n,n-h)-\max(1,1-h)+1 = O(n)$ and therefore $\frac{1}{n} (\min(n,n-h)-\max(1,1-h)+1) \rightarrow 1 \, (n \rightarrow \infty)$, we get that the sum over $h$ in \eqref{cov_def2} is summable and with $\frac{1}{n}$ in front, the whole expression converges against zero. 
In summary, we have shown $\text{Var}^*(M_n^*) \overset{p}{\rightarrow} 0$ as $n \rightarrow \infty$.

As shown by  \citet{drost} in the proof of Theorem \ref{cons}, for fixed $x_{-p}, \ldots, x_0 \in \mathbb{N}_0$, the map $\overline{E} \ni (\fett\alpha, G) \mapsto m^{\fett\alpha, G}(x_{-p}, \ldots, x_0)$ is continuous because there appear only a finite number of $G(j)$'s in $P^{\fett\alpha, G}_{(x_{-1}, \ldots, x_{-p}), x_0} $. Moreover, for all $x_{-p}, \ldots, x_0 \in \mathbb{N}_0$, we have $ m^{\fett\alpha, G}(x_{-p}, \ldots, x_0) \leq \log(1) = 0$. Additionally, for the bootNPMLE, we have $M_n^*(\widehat{\fett\alpha}_n^*, \widehat{G}_n^*) \geq M_n^*(\widehat{\fett\alpha}_n, \widehat{G}_n)$, because $(\widehat{\fett\alpha}_n^*, \widehat{G}_n^*)$ maximizes the likelihood by construction. Moreover, we need to show that the map $\overline{E} \ni (\fett\alpha, G) \mapsto M_n(\fett\alpha, G)$ has a unique maximum at ($\widehat{\fett\alpha}_n, \widehat{G}_n$) with probability tending to one. Since, in analogy to the identification argument used in  Section B.2, part (C) of \textcite{drost2009appendix}, with probability tending to one, we have the identification property (for general $p\in\mathbb{N}_0$), that is, 
\begin{align}
& P^{\fett\alpha, G}_{(X^*_{-1}, \ldots, X^*_{-p}), X_0^*}= P^{\widehat{\fett\alpha}_n, \widehat{G}_n}_{(X^*_{-1}, \ldots, X^*_{-p}), X_0^*} \quad\mathds{P}_{\nu_{\widehat{\fett\alpha}_n, \widehat{G}_n},\widehat{\fett\alpha}_n,\widehat{G}_n}\text{-a.s. (in $\mathds{P}_{\nu_{\fett\alpha_0, G_0},\fett\alpha_0,G_0}$-probability)} \notag \\
&\Rightarrow (\fett\alpha, G) = (\widehat{\fett\alpha}_n, \widehat{G}_n)\quad\mathds{P}_{\nu_{\fett\alpha_0, G_0},\fett\alpha_0,G_0}\text{-a.s.},   \label{identification_property_boot}
\end{align}
the assertion follows from 
\begingroup
\allowdisplaybreaks
\begin{align*}
& \quad \, \, \text{E}^*(M_n^*(\fett\alpha, G))-\text{E}^*(M_n^*(\widehat{\fett\alpha}_n, \widehat{G}_n) \\ &= \text{E}^* \left( \frac{1}{n} \sum\limits_{t=0}^n m^{\fett\alpha, G}(X^*_{t-p}, \ldots, X_t^*)  \right) - \text{E}^* \left( \frac{1}{n} \sum\limits_{t=0}^n m^{\widehat{\fett\alpha}_n, \widehat{G}_n}(X^*_{t-p}, \ldots, X_t^*)  \right) \\
 % &= \text{E}^* \left( \frac{1}{n} \sum\limits_{t=0}^n \log P^{\alpha, G}_{(X^*_{t-1}, \ldots, X_{t-p}^*),X_t^*}  \right) - \text{E}^* \left( \frac{1}{n} \sum\limits_{t=0}^n \log P^{\widehat{\alpha}_n, \widehat{G}_n}_{(X^*_{t-1}, \ldots, X_{t-p}^*),X_t^*}  \right) \\
  &=  \frac{1}{n} \sum\limits_{t=0}^n  \text{E}^* \left( \log P^{\fett\alpha, G}_{(X^*_{t-1}, \ldots, X_{t-p}^*),X_t^*}  \right) - \frac{1}{n} \sum\limits_{t=0}^n \text{E}^* \left(  \log P^{\widehat{\fett\alpha}_n, \widehat{G}_n}_{(X^*_{t-1}, \ldots, X_{t-p}^*),X_t^*}  \right) \\
   &=  \frac{1}{n} \, n \,  \text{E}^* \left( \log P^{\fett\alpha, G}_{(X^*_{-1}, \ldots, X_{-p}^*),X_0^*}  \right) - \frac{1}{n} \, n \, \text{E}^* \left(  \log P^{\widehat{\fett\alpha}_n, \widehat{G}_n}_{(X^*_{-1}, \ldots, X_{-p}^*),X_0^*}  \right) \\
    &\leq 2 \, \text{E}^* \left( \sqrt{\frac{P^{\fett\alpha, G}_{(X^*_{-1}, \ldots, X_{-p}^*),X_0^*} }{P^{\widehat{\fett\alpha}_n, \widehat{G}_n}_{(X^*_{-1}, \ldots, X_{-p}^*),X_0^*}}}  -1 \right) \\
     &= 2 \sum\limits_{y \in \mathbb{N}_0^p} \nu_{\widehat{\fett\alpha}_n, \widehat G} \{ y \} \sum\limits_{x_0 =0}^\infty \sqrt{P_{y, x_0}^{\fett\alpha, G} P_{y,x_0}^{\widehat{\fett\alpha}_n, \widehat{G}_n}} -2 \\ 
     & \leq - \sum\limits_{y \in \mathbb{N}_0^p} \nu_{\widehat{\fett\alpha}_n, \widehat G} \{ y \} \sum\limits_{x_0 =0}^\infty \left( \sqrt{P_{y, x_0}^{\fett\alpha, G}} - \sqrt{P_{y,x_0}^{\widehat{\fett\alpha}_n, \widehat{G}_n}}   \right)^2 \\
      &\leq 0   \quad   \text{in }\mathds{P}_{\nu_{\fett\alpha_0, G_0},\fett\alpha_0,G_0}\text{-probability},
\end{align*}
\endgroup
where we used $\log(x) \leq 2(\sqrt{x}-1)$ for $x \geq 0$ and the (conditional) stationarity of the bootstrap process. Note that the last inequality above holds in $\mathds{P}_{\nu_{\fett\alpha_0, G_0},\fett\alpha_0,G_0}$-probability due to \eqref{identification_property_boot}.

In summary, we showed that all conditions of Wald's consistency theorem \citep{wald} hold (in probability) and we obtain $d((\widehat{\fett\alpha}_n^*, \widehat{G}_n^*),(\widehat{\fett\alpha}_n, \widehat{G}_n)) \overset{p^*}{\rightarrow} 0$ in $\mathds{P}_{\nu_{\fett\alpha_0,G_0}, \fett\alpha_0, G_0}$-probability, which immediately implies $\widehat{\fett\alpha}_n^* - \widehat{\fett\alpha}_n \overset{p^*}{\rightarrow} 0$ in $\mathds{P}_{\nu_{\fett\alpha_0,G_0}, \fett\alpha_0, G_0}$-probability as well as, for all $k \in \mathbb{N}_0$, $\widehat{G}_n^*(k) - \widehat{G}_n(k) \overset{p^*}{\rightarrow} 0$ in $\mathds{P}_{\nu_{\fett\alpha_0,G_0}, \fett\alpha_0, G_0}$-probability. \qed

\bigskip

%\section{Work in Progress}

% Psi_n frechet diffbar XX2

%\tr{Hier weiter! Idee: Direkt die aussage in Lemma 2 in Drost et al. (2009) zeigen für $\widehat \theta_n^*$ und $\widehat \theta_n$. Das braucht man wirklich. wie sieht das dann aus? Das auch in das Lemma hier direkt hinschreiben!}

\begin{Lemma}[Fréchet derivative of $\Psi_n$] \label{frechet}
Let Assumption \ref{ass1} hold true. For fixed $n$ and for $\theta_0=(\fett\alpha_0,G_0)\in\Theta\times \mathcal{G}$, the finite sample moment equations \eqref{eq:psi_n1} and \eqref{eq:psi_n2}, i.e., the maps $\Psi_n:(0,1)^p \times \widetilde{\mathcal{G}} \rightarrow \mathbb{R}^p \times \ell^\infty(\mathcal{H}_1)$, are Fréchet differentiable with derivative $\dot\Psi^{\theta_0}_n : lin([0,1]^p \times \widetilde{\mathcal{G}}) \rightarrow \mathbb{R}^p \times \ell^\infty(\mathcal{H}_1)$ at $\theta_0$ given by
\begin{align}
\dot \Psi^{\theta_0}_n(\theta-\theta_0) = \begin{pmatrix}
  \dot \Psi^{\theta_0}_{n11}(\fett\alpha-\fett\alpha_0) + \dot \Psi^{\theta_0}_{n12}(G-G_0) \\
   \dot \Psi^{\theta_0}_{n21}(\fett\alpha-\fett\alpha_0) + \dot \Psi^{\theta_0}_{n22}(G-G_0)
 \end{pmatrix}
\end{align}
% for $h \in \mathcal{H}_1$ 
with $\dot \Psi^{\theta_0}_{n11}: \mathbb{R}^p \rightarrow \mathbb{R}^p, \dot \Psi^{\theta_0}_{n12}: lin(\mathcal{G}) \rightarrow \mathbb{R}^p, \dot \Psi^{\theta_0}_{n21}: \mathbb{R}^p \rightarrow \ell^\infty(\mathcal{H}_1)$ and $\dot \Psi^{\theta_0}_{n22}: lin(\mathcal{G}) \rightarrow \ell^\infty(\mathcal{H}_1)$ defined in \eqref{psi_dot_n11} - \eqref{psi_dot_n22}, where $lin(\cdot)$ denotes the linear span. That is, for fixed $n$, we have
\begin{align}
\|\Psi_n(\theta)-\Psi_n(\theta_0)-\dot \Psi^{\theta_0}_n(\theta-\theta_0)\| = o_P\left(\|\theta-\theta_0\|\right) %= o_P\left(\frac{1}{\sqrt{n}}\right)   
\end{align}
and 
\begin{align}
\Psi_n(\theta) = \Psi_n(\theta_0) + \dot \Psi^{\xi}_n(\theta-\theta_0)
\end{align}
for some $\xi$, where $||\xi - \theta_0|| \leq ||\theta - \theta_0||$, 
% as $\theta\rightarrow \theta_0$ within $\Theta\times \mathcal{G}$, 
where $\dot\Psi^{\theta_0}_n : lin([0,1]^p \times \widetilde{\mathcal{G}}) \rightarrow \mathbb{R}^p \times \ell^\infty(\mathcal{H}_1)$ is a continuous, linear mapping. Furthermore, as $\widehat \theta_n=(\widehat{\fett\alpha}_n,\widehat G_n)\in\Theta\times \mathcal{G}$ holds with $P$-probability tending to one as well as $\widehat \theta_n^*=(\widehat{\fett\alpha}_n^*,\widehat G_n^*)\in\Theta\times \mathcal{G}$ with $P^*$-probability tending to one (conditional on the data), by plugging-in $\widehat \theta_n$ for $\theta_0$ and $\widehat \theta_n^*$ for $\theta$ in the above, we have 
\begin{align}
\|\Psi_n(\widehat \theta_n^*)-\Psi_n(\widehat \theta_n)-\dot \Psi^{\widehat \theta_n}_n(\widehat \theta_n^*-\widehat \theta_n)\| = o_{P^*}\left(\|\widehat \theta_n^*-\widehat \theta_n\|\right) %= o_{P^*}\left(\frac{1}{\sqrt{n}}\right)
\end{align}
and
\begin{align} \label{taylor_lemma}
\Psi_n(\widehat \theta_n^*)=\Psi_n(\widehat \theta_n)+\dot \Psi^{\xi_n}_n(\widehat \theta_n^*-\widehat \theta_n)
\end{align}
for some $\xi_n$, where $||\xi_n - \thn|| \leq ||\thns - \thn||$.
% uniformly.
% where $\dot\Psi^{\widehat \theta_n}_n : lin([0,1]^p \times \widetilde{\mathcal{G}}) \rightarrow \mathbb{R}^p \times \ell^\infty(\mathcal{H}_1)$ is given by
% \begin{align}
%     \dot \Psi^{\widehat \theta_n}_n(\theta-\thn) = \begin{pmatrix}
%   \dot \Psi^{\widehat \theta_n}_{n11}(\alpha-\widehat\alpha_n) + \dot \Psi^{\widehat \theta_n}_{n12}(G-\widehat G_n) \\
%    \dot \Psi^{\widehat \theta_n}_{n21}(\alpha-\widehat \alpha_n) + \dot \Psi^{\widehat \theta_n}_{n22}(G-\widehat G_n)
%  \end{pmatrix}
% \end{align}
% For fixed $n$, the map $\Psi_n:(0,1)^p \times \widetilde{\mathcal{G}} \rightarrow \mathbb{R}^p \times \ell^\infty(\mathcal{H}_1)$ is Fréchet-differentiable around $\thn$ with derivative $ \dot\Psi^{\thn}_n: lin([0,1]^p \times \widetilde{\mathcal{G}}) \rightarrow \mathbb{R}^p \times \ell^\infty(\mathcal{H}_1)$. %\footnote{\red{die Ableitung muss außerdem stetig und linear in den Differenzen sein?}}%\footnote{\red{CJ: Ich denke, dass hier die Differenzierbarkeit von $\Psi_n$ in einem Punkt nicht ausreicht. Stattdessen sollte das hier vermutlich eine gleichmäßige Differenzierbarkeit der Folge(!) $(\Psi_n,n\in\mathbb{Z}_+)$ in einer Umgebung (!) von $\theta_0$ sein. Dann kann man argumentieren, dass man mit probability tending to one in dieser Umgebung von theta ist für wachsendes $n$ und kann dann dort die gleichmäßige Differenzierbarkeit benutzen.Oder for fixed $n$?}}
\end{Lemma}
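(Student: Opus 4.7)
The plan is to exploit the fact that, for the \emph{fixed} sample $X_{-p},\ldots,X_n$, the maps $\Psi_{n1}$ and $\Psi_{n2}$ depend on the infinite-dimensional parameter $G$ only through the finitely many components $G(0),\ldots,G(K)$, where $K:=\max_{0\le t\le n}X_t$. Indeed, each transition probability $P^{\fett\alpha,G}_{(x_{t-1},\ldots,x_{t-p}),x_t}$ in \eqref{tpinar} is a polynomial in the coordinates of $\fett\alpha$ and linear in the $G(k)$'s up to index $X_t$. Viewed in this finite-dimensional chart, the summands defining $\Psi_{n1}$ and $\Psi_{n2}$ are rational functions composed with $\log$ (for $\Psi_{n1}$) and with division (for $A_{\fett\alpha,G}h$), hence real-analytic on the open set where all transition probabilities are strictly positive. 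Since $\theta_0\in\Theta\times\mathcal{G}$ satisfies $\alpha_{0,i}\in(0,1)$ and $G_0(0)>0$, such an open set contains a neighbourhood $U$ of $\theta_0$; on $U$ I would simply compute the partial derivatives in $\fett\alpha$ and in each $G(k)$ to obtain the four blocks $\dot\Psi^{\theta_0}_{n11},\dot\Psi^{\theta_0}_{n12},\dot\Psi^{\theta_0}_{n21},\dot\Psi^{\theta_0}_{n22}$ of the derivative.

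The key technical point is to promote this finite-dimensional smoothness to Fréchet differentiability for $\Psi_{n2}$, whose codomain $\ell^\infty(\mathcal{H}_1)$ requires control \emph{uniformly} in $h\in\mathcal{H}_1$. Using Bayes' formula, one writes $A_{\fett\alpha,G}h(x_{t-p},\ldots,x_t)=\sum_{e=0}^{x_t}h(e)\,w_e(x_{t-p},\ldots,x_t;\fett\alpha,G)$, where the weights $w_e$ are ratios of the polynomial/linear expressions mentioned above in the finitely many parameters $(\fett\alpha,G(0),\ldots,G(K))$; similarly $\int h\,dG=\sum_{k\ge 0} h(k)G(k)$. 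The constraint $\sup_e|h(e)|\le 1$ lets me pull out the $h$-dependence as a uniformly bounded multiplier, reducing the Fréchet-differentiability bound to a finite-dimensional Taylor remainder on the $w_e$'s and on the linear functional $G\mapsto\sum_k h(k)G(k)$. This yields $\|\Psi_n(\theta)-\Psi_n(\theta_0)-\dot\Psi^{\theta_0}_n(\theta-\theta_0)\|=o(\|\theta-\theta_0\|)$ in the appropriate product norm, where I would equip $\widetilde{\mathcal{G}}$ with the $\ell^1$-norm on the pmf-slot (consistent with the space $\ell^1(\mathbb{N}_0)$ appearing in Theorem \ref{clt}). Linearity of $\dot\Psi_n^{\theta_0}$ is manifest from the block formulas, and continuity of $\dot\Psi_n^{\theta_0}:\mathrm{lin}([0,1]^p\times\widetilde{\mathcal{G}})\to\mathbb{R}^p\times\ell^\infty(\mathcal{H}_1)$ follows by bounding the coefficients appearing in the blocks uniformly on a neighbourhood of $\theta_0$, again using $|h|\le 1$.

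For the representation \eqref{taylor_lemma}, since the same smoothness holds at every interior point of $U$, I would apply the fundamental theorem of calculus to $s\mapsto \Psi_n(\theta_0+s(\theta-\theta_0))$, obtaining $\Psi_n(\theta)-\Psi_n(\theta_0)=\big(\int_0^1 \dot\Psi_n^{\theta_0+s(\theta-\theta_0)}\,ds\big)(\theta-\theta_0)$ and encoding the averaged derivative as $\dot\Psi_n^{\xi}$ for some $\xi$ on the segment joining $\theta_0$ and $\theta$. Finally, to substitute $\thn$ for $\theta_0$ and $\thns$ for $\theta$: by the consistency Theorem \ref{cons} we have $\thn\in U$ with $P$-probability tending to one, and by the bootstrap estimation consistency Theorem \ref{cons_bs_est} the segment joining $\thn$ and $\thns$ lies in $U$ with $P^*$-probability tending to one (in $P$-probability). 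On this event the formulas above apply, yielding both the $o_{P^*}(\|\thns-\thn\|)$ bound and the intermediate point $\xi_n$ with $\|\xi_n-\thn\|\le\|\thns-\thn\|$ required in the subsequent proof of Theorem \ref{bs_clt}. I expect the main obstacle to be precisely the uniform-in-$h$ handling of the $\ell^\infty(\mathcal{H}_1)$-valued component, but as sketched, factoring out $h(e)$ via $|h|\le 1$ reduces it to a routine finite-dimensional Taylor estimate on the Bayes weights $w_e(\fett\alpha,G)$.
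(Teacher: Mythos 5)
Your proof takes essentially the same route as the paper's: the paper likewise computes the four derivative blocks coordinatewise (its Lemma \ref{lem_frechet_derivation}), exploits that only finitely many $G(k)$'s enter each transition probability while $\int h\,dG$ contributes the linear $-h(k)$ term, defers the uniform-in-$h$ remainder estimate to the arguments behind Lemma 2(a) of Drost et al., and then substitutes $\widehat\theta_n$ and $\widehat\theta_n^*$ via the consistency results exactly as you do. The only step to treat with care is passing from the integral form $\int_0^1\dot\Psi_n^{\theta_0+s(\theta-\theta_0)}\,ds$ to a single intermediate point $\xi$ --- the exact mean value equality fails for maps into spaces of dimension greater than one --- but the paper's own statement asserts the same identity without comment, and the downstream use in Lemma \ref{unif_zws} only requires bounds holding uniformly over the segment, so the averaged-derivative form you derive is in fact sufficient.
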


\begin{proof}  %\blue{Work in progress}
For notational convenience, we consider only the case of $p=1$, but the following arguments can be extended to higher order $p>1$. Let $n$ be fixed, $\theta=(\alpha,G)=(\alpha, G(0), G(1), \ldots)$, $\theta_0=(\alpha_0,G_0)=(\alpha_0, G_0(0), G_0(1), \ldots)$ and let $\Psi_n=(\Psi_{n1}, \Psi_{n2})$, where $\Psi_{n1}$ and $\Psi_{n2}$ are defined according to \eqref{eq:psi_n1} and \eqref{eq:psi_n2}. Following Lemma \ref{lem_frechet_derivation}, its Fréchet derivative $\dot \Psi_n^{\theta_0}$ is defined as
\begin{align} \label{eq:dot_psi_n_comp}
\dot \Psi_n^{\theta_0}(\theta-\theta_0) 
%&= \begin{pmatrix}
% \frac{\partial \Psi_{n1}}{\partial \alpha} & \frac{\partial \Psi_{n1}}{\partial G}  \\
%   \frac{\partial \Psi_{n2}}{\partial \alpha} & \frac{\partial \Psi_{n2}}{\partial G}
% \end{pmatrix}
% \begin{pmatrix}
% \alpha-\widehat\alpha_n \\
% G- \widehat G_n
% \end{pmatrix} \notag \\
% &= \begin{pmatrix}
% \frac{\partial \Psi_{n1}}{\partial \alpha} & \frac{\partial \Psi_{n1}}{\partial G(0)} & \frac{\partial \Psi_{n1}}{\partial G(1)} \ldots \\
%   \frac{\partial \Psi_{n2}}{\partial \alpha} & \frac{\partial \Psi_{n2}}{\partial G(0)} & \frac{\partial \Psi_{n2}}{\partial G(1)} \ldots
% \end{pmatrix}
% \begin{pmatrix}
% \alpha-\widehat\alpha_n \\
% G(0) - \widehat G_n(0) \\
% G(1) - \widehat G_n(1) \\
% \vdots
% \end{pmatrix} \notag \\
&= \begin{pmatrix}
\frac{\partial \Psi_{n1}(\alpha,G)}{\partial \alpha}_{|_{\theta=\theta_0}}(\alpha- \alpha_0) + \sum\limits_{k=0}^\infty \frac{\partial \Psi_{n1}(\alpha,G)}{\partial G(k)}_{|_{\theta=\theta_0}} (G(k) - G_0(k)) \\
\frac{\partial \Psi_{n2}(\alpha,G)}{\partial \alpha}_{|_{\theta=\theta_0}} (\alpha- \alpha_0) + \sum\limits_{k=0}^\infty \frac{\partial \Psi_{n2}(\alpha,G)}{\partial G(k)}_{|_{\theta=\theta_0}} (G(k) - G_0(k))
\end{pmatrix} \notag \\
&=: \begin{pmatrix}
\dot \Psi_{n11}^{\theta_0}(\alpha-\alpha_0) + \dot \Psi_{n12}^{\theta_0}(G-G_0) \\
\dot \Psi_{n21}^{\theta_0}(\alpha-\alpha_0) + \dot \Psi_{n22}^{\theta_0}(G-G_0)
\end{pmatrix},
\end{align}
where 
% $\dot \Psi_{n11}: \mathbb{R}^p \rightarrow \mathbb{R}^p$ (here $p=1$), $\dot \Psi_{n12}: lin(\mathcal{G}) \rightarrow \mathbb{R}^p$, $\dot \Psi_{n21}: \mathbb{R}^p \rightarrow \ell^\infty(\mathcal{H}_1)$ and $\dot \Psi_{n22}: lin(\mathcal{G}) \rightarrow \ell^\infty(\mathcal{H}_1)$, according to Lemma \ref{lem_frechet_derivation}, defined by %(for $p=1$)
\begin{align}
\dot \Psi_{n11}^{\theta_0}(\alpha-\alpha_0) &= \left(\frac{1}{n} \sum\limits_{t=0}^n \frac{\frac{\partial^2}{\partial \alpha^2} P_{X_{t-1},X_t}^{\alpha,G}}{P_{X_{t-1},X_t}^{\alpha,G}}%{\text{0? asympt.?}} 
- \frac{1}{n} \sum\limits_{t=0}^n \dot{l}_\alpha^2(X_{t-1},X_t;\alpha,G)\right)_{|_{\theta=\theta_0}} (\alpha-\alpha_0) \label{psi_dot_n11},    \\
\dot \Psi_{n12}^{\theta_0}(G-G_0) &= \sum\limits_{k=0}^\infty \Bigg(\frac{1}{n} \sum\limits_{t=0}^n  \frac{\frac{\partial}{\partial G(k)} \frac{\partial}{\partial \alpha} P_{X_{t-1},X_t}^{\alpha,G} }{P_{X_{t-1},X_t}^{\alpha,G}}  \label{psi_dot_n12}\\
& \quad
- \frac{1}{n} \sum\limits_{t=0}^n  \dot{l}_\alpha(X_{t-1},X_t;\alpha,G) \frac{\frac{\partial}{\partial G(k)}P_{X_{t-1},X_t}^{\alpha,G}}{P_{X_{t-1},X_t}^{\alpha,G}}\Bigg)_{|_{\theta=\theta_0}}
(G(k) - G_0(k)) \notag
\end{align}
and, for $h \in \mathcal{H}_1$,
\begin{align}
& \dot \Psi_{n21}^{\theta_0}(\alpha-\alpha_0)h   \label{psi_dot_n21}  \\
&=
\Bigg(\frac{1}{n} \sum\limits_{t=0}^n   \sum\limits_{j=0}^\infty h(j) \frac{\Big(\frac{\partial}{\partial \alpha} P^{\alpha,G}(\varepsilon_t =j, X_t=x_t|X_{t-1}=x_{t-1})\Big)_{|_{(x_t,x_{t-1})=(X_t,X_{t-1})}}}{P^{\alpha,G}_{X_{t-1},X_t}} \notag \\
& \quad -\frac{1}{n} \sum\limits_{t=0}^n   
% \sum\limits_{j=0}^\infty h(j) 
\dot{l}_\alpha(X_{t-1},X_t;\alpha,G) A_{\alpha,G}h(X_{t-1},X_t)\Bigg)_{|_{\theta=\theta_0}}(\alpha- \alpha_0), \notag 
\end{align}
and
\begin{align}
& \dot \Psi_{n22}^{\theta_0}(G-G_0)h \label{psi_dot_n22} \\
&= \sum\limits_{k=0}^\infty \Bigg(\frac{1}{n} \sum\limits_{t=0}^n  \sum \limits_{j=0}^\infty h(j) \frac{\Big(\frac{\partial}{\partial G(k)} P^{\alpha,G}(\varepsilon_t=j, X_t=x_t | X_{t-1}=x_{t-1})\Big)_{|_{(x_t,x_{t-1})=(X_t,X_{t-1})}}}{P^{\alpha,G}_{X_{t-1},X_t}} -h(k)  \notag  \\
& \quad- \frac{1}{n} \sum\limits_{t=0}^n A_{\alpha,G}h(X_{t-1},X_t) \frac{\Big(\frac{\partial}{\partial G(k)} P^{\alpha,G}(X_t=x_t | X_{t-1}=x_{t-1})\Big)_{|_{(x_t,x_{t-1})=(X_t,X_{t-1})}}}{P^{\alpha,G}_{X_{t-1},X_t})}\Bigg)_{|_{\theta=\theta_0}}(G(k) - \widehat{G}_n(k)),   \notag
\end{align}
where $\dot l_\alpha$ and $A_{\alpha,G}h$ are defined in \eqref{dot_l} and \eqref{Ah}.
Using similar arguments as in \citet{drost} to prove their Lemma 2(a) in  \citet{drost2009appendix}, which makes also heavy use of inequalities derived in \citet{drost2}, the assertion 
\begin{align*}
\|\Psi_n(\theta)-\Psi_n(\theta_0)-\dot \Psi^{\theta_0}_n(\theta-\theta_0)\| = o_P\left(\|\theta-\theta_0\|\right).  
\end{align*}
% as $\theta\rightarrow \theta_0$ within $\Theta\times \mathcal{G}$
follows. 

Similarly, when plugging-in $\widehat \theta_n^*$ and $\widehat \theta_n$ for $\theta$ and $\theta_0$, respectively, making use of Theorem \ref{cons_bs_est}, the same arguments lead to
\begin{align*}
\|\Psi_n(\widehat \theta_n^*)-\Psi_n(\widehat \theta_n)-\dot \Psi^{\widehat \theta_n}_n(\widehat \theta_n^*-\widehat \theta_n)\| = o_{P^*}\left(\|\widehat \theta_n^*-\widehat \theta_n\|\right). %= o_{P^*}\left(\frac{1}{\sqrt{n}}\right).  
\end{align*}
\end{proof}

\begin{Lemma} \label{unif_zws}
Suppose Assumptions \ref{ass1} and \ref{asum_mix_gu} hold. Let $\text{E}(X_1^k) < \infty$ for some $k>2(p+4)$ and $\text{E}((X_t^3(1+\rho)^{X_t}))^{1+\delta}<\infty$ for some $\rho, \delta>0$.  
Then, we have
\begin{align*}
\sqrt{n}\left(\dot\Psi^{\xi_n}_n(\thns-\thn)\right) - \sqrt{n}\left(\dot \Psi^{\theta_0}(\thns-\thn)\right) = o_{p^*}(1).
\end{align*}
%where $||\cdot||$ denotes a suitable norm on $\mathbb{R}^p\times \ell^\infty(\mathcal{H}_1)$. 
Recall that $\dot\Psi^{\xi_n}_n : lin([0,1]^p \times \widetilde{\mathcal{G}}) \rightarrow \mathbb{R}^p \times \ell^\infty(\mathcal{H}_1)$ and $\dot\Psi^{\theta_0} : lin([0,1]^p \times \widetilde{\mathcal{G}}) \rightarrow \mathbb{R}^p \times \ell^\infty(\mathcal{H}_1)$ are the Fr\'echet derivatives of $\Psi_n$ at $\xi_n$ and of $\Psi$ at $\theta_0$, respectively.
\end{Lemma}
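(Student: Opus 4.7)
The plan is to exploit linearity of the Fréchet derivatives and reduce the claim to an operator-norm convergence statement. By linearity,
\[
\sqrt{n}\bigl(\dot\Psi^{\xi_n}_n(\thns-\thn) - \dot\Psi^{\theta_0}(\thns-\thn)\bigr) = \bigl(\dot\Psi^{\xi_n}_n - \dot\Psi^{\theta_0}\bigr)\bigl(\sqrt{n}(\thns - \thn)\bigr).
\]
Using that $\sqrt{n}\|\thns - \thn\| = O_{p^*}(1)$, which follows from the $Z$-estimator identity $\Psi_n^*(\thns) \approx 0$ combined with Lemma \ref{asympt_norm} and the continuous invertibility of $\dot\Psi^{\theta_0}$ on a shrinking neighborhood of $\theta_0$, it suffices to show
\[
\bigl\|\dot\Psi^{\xi_n}_n - \dot\Psi^{\theta_0}\bigr\|_{\mathrm{op}} = o_{p^*}(1)
\]
as operators from $\mathbb{R}^p \times \ell^1(\mathbb{N}_0)$ into $\mathbb{R}^p \times \ell^\infty(\mathcal{H}_1)$.

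I would establish this via the triangle decomposition
\[
\dot\Psi^{\xi_n}_n - \dot\Psi^{\theta_0} = \bigl(\dot\Psi^{\xi_n}_n - \dot\Psi^{\theta_0}_n\bigr) + \bigl(\dot\Psi^{\theta_0}_n - \dot\Psi^{\theta_0}\bigr).
\]
For the second summand (same evaluation point, sample versus population), each of the four blocks in \eqref{eq:dot_psi_n_comp} is a sample average of a known function of $(X_{t-p},\ldots,X_t)$, so pointwise convergence to its $P_{\nu_0,\theta_0}$-expectation is immediate from stationarity of the INAR($p$) process and the geometric $\beta$-mixing property used in the proof of Theorem \ref{cons_bs_est}. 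Because the image lies in $\ell^\infty(\mathcal{H}_1)$, the blocks $\dot\Psi^{\theta_0}_{n21}$ and $\dot\Psi^{\theta_0}_{n22}$ additionally require \emph{uniform} convergence over $h \in \mathcal{H}_1$. I would obtain this via a Glivenko–Cantelli-type argument for stationary, geometrically mixing sequences, using $\sup_{h \in \mathcal{H}_1}|h| \leq 1$ to trivialise the $h$-dependence of the integrands and an envelope polynomial in $X_t$ whose $(1+\delta)$-moment is finite thanks to the assumption $\mathrm{E}(X_t^3(1+\rho)^{X_t})^{1+\delta} < \infty$. The first summand is controlled by establishing a local Lipschitz / equicontinuity property of $\theta \mapsto \dot\Psi^{\theta}_n$ on a shrinking ball around $\theta_0$, combined with $\|\xi_n - \theta_0\| \leq \|\thn - \theta_0\| + \|\thns - \thn\| = o_{p^*}(1)$, which holds by Theorems \ref{cons} and \ref{cons_bs_est}.

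The main technical obstacle will be the uniform-in-$h$ control of the $\Psi_{n2}$-blocks, where the integrands involve $A_{\alpha,G}h$, i.e.\ conditional expectations of $h(\varepsilon_t)$ whose dependence on $(\fett\alpha,G)$ enters through ratios of transition probabilities of the form $P^{\fett\alpha,G}_{(x_{t-1},\ldots,x_{t-p}),x_t}$. To dominate these I would use the explicit Bernoulli–binomial convolution representation of $P^{\fett\alpha,G}_{(x_{t-1},\ldots,x_{t-p}),x_t}$ together with the polynomial-in-$X_t$ bounds on its $\alpha_i$- and $G(k)$-derivatives derived in \citet{drost} and \citet{drost2}. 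Crucially, the lower-bound condition $G(k) \geq c_1 e^{-c_2 k}$ in Assumption \ref{asum_mix_gu} keeps the denominators away from zero, producing envelope functions whose expectations are finite by the moment hypothesis $\mathrm{E}(X_1^k) < \infty$ for some $k > 2(p+4)$. Applying the uniform mixing LLN over the resulting envelope class, and then invoking continuity of the population derivative $\theta \mapsto \dot\Psi^{\theta}$ at $\theta_0$ (Lemma 2 in \citet{drost}) to dispose of the first triangle summand, yields both summands as $o_{p^*}(1)$ in operator norm in $\mathds{P}_{\nu_{\theta_0},\theta_0}$-probability, which completes the proof.
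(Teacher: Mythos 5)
Your proposal is correct and follows essentially the same route as the paper: the identical add-and-subtract decomposition into $(\dot\Psi^{\xi_n}_n-\dot\Psi^{\theta_0}_n)+(\dot\Psi^{\theta_0}_n-\dot\Psi^{\theta_0})$, with the first piece handled by local Lipschitz-type bounds in $\theta$ on a shrinking ball around $\theta_0$ (the paper's Lemmas \ref{Lipschitz_resultsII} and \ref{Lipschitz_resultsI}) combined with a mixing law of large numbers for the envelope, and the second piece by a stationarity-plus-geometric-mixing LLN argument, after factoring out $\sqrt{n}\|\thns-\thn\|=O_{P^*}(1)$ exactly as the paper does. The only cosmetic slip is the closing remark that continuity of the \emph{population} derivative disposes of the first summand, whereas that summand concerns two evaluations of the \emph{sample} derivative; the correct mechanism is the one you state earlier.
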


\begin{proof} 
% Let $|| \cdot ||$ denote a suitable norm on $\mathbb{R}^p\times \ell^\infty(\mathcal{H}_1)$
% $|| \cdot ||$ be the product topology which can be metrized by the sum norm $||(\alpha, G)|| = |\alpha| + ||G||_1$ 
% and consider 
By adding and subtracting the Fr\'echet derivative of $\Psi_n$ at $\theta_0$, we get
\begin{align}
& || \sqrt{n}\Big(\dot\Psi^{\xi_n}_n(\thns-\thn)\Big) - 
\sqrt{n}\Big(\dot \Psi^{\theta_0}(\thns-\thn)\Big) || \notag\\
&= ||\dot\Psi^{\xi_n}_n(\sqrt{n}(\thns-\thn))  \mp \dot \Psi^{\theta_0}_n(\sqrt{n}(\thns-\thn))  - \dot \Psi^{\theta_0}(\sqrt{n}(\thns-\thn))||    \label{eq:anfang} \\
&\leq ||\dot\Psi^{\xi_n}_n(\sqrt{n}(\thns-\thn)) -  \dot \Psi^{\theta_0}_n(\sqrt{n}(\thns-\thn))|| + ||\dot \Psi^{\theta_0}_n(\sqrt{n}(\thns-\thn))  - \dot \Psi^{\theta_0}(\sqrt{n}(\thns-\thn))|| \notag.
\end{align} 
We examine both resulting differences separately. 
% or alternatively \[  |\dot\Psi^{\xi_n}_n(\thns-\thn)  \mp \dot \Psi_{\xi_n}(\thns-\thn)  - \dot \Psi^{\theta_0}(\thns-\thn)|.  \]
Let $\varepsilon, \delta >0$ and consider the first term on the last right-hand side of \eqref{eq:anfang}. Then, we have
\begin{align*}
    & \quad  P( || \dot\Psi^{\xi_n}_n(\sqrt{n}(\thns-\thn))  - \dot \Psi^{\theta_0}_n(\sqrt{n}(\thns-\thn)) ||  > \varepsilon) \\
    &= P( || \dot\Psi^{\xi_n}_n(\sqrt{n}(\thns-\thn))  - \dot \Psi^{\theta_0}_n(\sqrt{n}(\thns-\thn)) ||  > \varepsilon, ||\xi_n - \tz || \leq \delta ) \\ & \quad + P(|| \dot\Psi^{\xi_n}_n(\sqrt{n}(\thns-\thn))  - \dot \Psi^{\theta_0}_n(\sqrt{n}(\thns-\thn)) ||  > \varepsilon, ||\xi_n - \tz || > \delta) \\
    & \leq P( || \dot\Psi^{\xi_n}_n(\sqrt{n}(\thns-\thn))  - \dot \Psi^{\theta_0}_n(\sqrt{n}(\thns-\thn)) ||  > \varepsilon, ||\xi_n - \tz || \leq \delta)  + P(||\xi_n - \tz || > \delta),
\end{align*}
where the last probability converges to zero for increasing $n$, because we have $||\xi_n - \thn|| \leq ||\thns - \thn|| \overset{p^*}{\rightarrow} 0$ in probability due to Theorem \ref{cons_bs_est} as well as $\|\thn - \tz\|\overset{p}{\rightarrow} 0$ due to Theorem \ref{cons}, which altogether gives $||\xi_n - \theta_0||\overset{p^*}{\rightarrow} 0$ in probability.
Hence, it remains to investigate the asymptotic behavior of
% \begin{align*}
%     & \quad P( || \dot\Psi^{\xi_n}_n(\thns-\thn)  - \dot \Psi^{\theta_0}_n(\thns-\thn) ||  > \varepsilon, ||\xi_n - \tz || \leq \delta) \\
%     &= P( || \dot\Psi^{\xi_n}_n(\thns-\thn)  - \dot \Psi^{\theta_0}_n(\thns-\thn) ||  > \varepsilon, ||\xi_n - \tz || \leq \delta, X_{t-1} \leq M \, \forall t \in \{0, \ldots, n \}) \\ & \quad + 
%     P( || \dot\Psi^{\xi_n}_n(\thns-\thn)  - \dot \Psi^{\theta_0}_n(\thns-\thn) ||  > \varepsilon, ||\xi_n - \tz || \leq \delta, X_{t-1} > M \,\forall t \in \{0, \ldots, n \}) \\
%     & \leq  P( || \dot\Psi^{\xi_n}_n(\thns-\thn)  - \dot \Psi^{\theta_0}_n(\thns-\thn) ||  > \varepsilon, ||\xi_n - \tz || \leq \delta, X_{t-1} \leq M \, \forall t \in \{0, \ldots, n \}) \\
%     & \quad + P(X_{t-1} > M \,\forall t \in \{0, \ldots, n \}),
% \end{align*}
% where for the last event we know that $\exists M \in \mathbb{N}: P(X_{t-1} > M) < \gamma \forall \, t \in \{0, \ldots, n \}$ with $\gamma$ arbitrarily small. So it remains to investigate the asymptotic behavior of 
\[ P( || \dot\Psi^{\xi_n}_n(\sqrt{n}(\thns-\thn))  - \dot \Psi^{\theta_0}_n(\sqrt{n}(\thns-\thn)) ||  > \varepsilon, ||\xi_n - \tz || \leq \delta). \] 
For convenience, we will again consider only the case $p=1$, but the same arguments can be used to prove the results for higher model order. Let us consider the first summand of the first entry of \eqref{eq:dot_psi_n_comp} and follow the calculations of the derivatives in the proof of Lemma \ref{frechet}. Further, let $\xi_n  = (\alpha_{\xi_n}, G_{\xi_n})$ and $\tz=(\alpha_0,G_0)$. Then, we have
\begin{align} \label{eq:I+II}
    & |\dot \Psi_{n11}^{\xi_n}(\sqrt{n}(\ahns - \ahn)) - \dot \Psi_{n11}^{\tz} (\sqrt{n}(\ahns-\ahn))| \notag \\
    \leq & \left| \frac{1}{n} \sum\limits_{t=0}^n \left( 
    \frac{\frac{\partial^2}{\partial \alpha^2} P_{X_{t-1},X_t |\alpha=\ax}^{\alpha,\gx}}{P_{X_{t-1},X_t}^{\xi_n}} 
    - \frac{\frac{\partial^2}{\partial \alpha^2} P_{X_{t-1},X_t |\alpha=\alpha_0}^{\alpha,\gz}}{P_{X_{t-1},X_t}^{\tz}}
    - \dot{l}_\alpha^2(X_{t-1},X_t;\xi_n)
    +\dot{l}_\alpha^2(X_{t-1},X_t;\tz)
    \right) \right| \notag \\
    & \qquad \times|\sqrt{n}(\ahns-\ahn)| \notag \\
    \leq & \Bigg\{\left| \frac{1}{n} \sum\limits_{t=0}^n \left( 
    \frac{\frac{\partial^2}{\partial \alpha^2} P_{X_{t-1},X_t |\alpha=\ax}^{\alpha,\gx}}{P_{X_{t-1},X_t}^{\xi_n}} 
    - \frac{\frac{\partial^2}{\partial \alpha^2} P_{X_{t-1},X_t |\alpha=\alpha_0}^{\alpha,\gz}}{P_{X_{t-1},X_t}^{\tz}} \right)  \right| \notag \\ 
    & + \left| \frac{1}{n} \sum\limits_{t=0}^n \bigg(\dot{l}_\alpha^2(X_{t-1},X_t;\tz) - \dot{l}_\alpha^2(X_{t-1},X_t;\xi_n)\bigg) \right| \Bigg\}O_{P^*}(1)  \notag \\
    =: & \big(I + II\big)O_{P^*}(1).
\end{align}
Continuing with the second term $II$, using a binomial formula, it can be bounded by
\begin{align} \label{eq:dotl_prod}
     \frac{1}{n} \sum\limits_{t=0}^n \left| \dot{l}_\alpha(X_{t-1},X_t;\tz) + \dot{l}_\alpha(X_{t-1},X_t;\xi_n)  \right| \left| \dot{l}_\alpha(X_{t-1},X_t;\tz) - \dot{l}_\alpha(X_{t-1},X_t;\xi_n) \right|.
\end{align}
Hence, using $||\xi_n - \tz || \leq \delta$, according to Lemma \ref{Lipschitz_resultsII}, we get the bound
\begin{align} \label{eq:II_final}
    II & \leq  \frac{1}{n} \sum\limits_{t=0}^n \left| \dot{l}_\alpha(X_{t-1},X_t;\tz) + \dot{l}_\alpha(X_{t-1},X_t;\xi_n)  \right| \notag \\
    & \qquad \delta\left(\frac{2 X_{t-1} + C X_{t-1}^2\left(1+\rho\right)^{X_{t-1}-1}}{P^{\tz}_{X_{t-1}, X_{t}}} + \frac{2 X_{t-1} \left( C X_{t-1}\left(1+\rho\right)^{X_{t-1}-1}+ 1  \right)  }{P^{\tz}_{X_{t-1}, X_{t}} P^{\xi_n}_{X_{t-1}, X_{t}}}\right),
\end{align}
for some (generic) constant $C=C(\delta)$ and some $\rho=\rho(\delta)$, which becomes arbitrarily small for $\delta$ sufficiently small.
% where $|\dot{l}_\alpha(X_{t-1},X_t;\tz) + \dot{l}_\alpha(X_{t-1},X_t;\xi_n)| = O_p(1)$, particularly independent of $t$, for $X_{t-1} < M \, \forall t \in \{ 0, \ldots, n\}$.
% \tr{Hier weiter! Das für den zweiten Teil der ersten Komponente noch entsprechend durchgehen. Den zweiten Term von (A.11) omitten. }
Similarly, from Lemma \ref{Lipschitz_resultsI}, we get 
\begin{align} \label{eq:I_final}
I & \leq \frac{1}{n} \sum\limits_{t=0}^n \delta\left(\frac{\widetilde C X_{t-1}^3(1+\widetilde\rho)^{X_{t-1}-1}+\widetilde C X_{t-1}^2}{P^{\xi_n}_{X_{t-1}, X_t}} +  \frac{\widetilde C X_{t-1}^2 \left( C  X_{t-1} (1+\rho)^{X_{t-1}-1} + 1 \right)}{P^{\xi_n}_{X_{t-1}, X_t} P^{\tz}_{X_{t-1}, X_t}}\right).      
\end{align}
With \eqref{eq:II_final} and \eqref{eq:I_final}, we can now finally tackle \eqref{eq:I+II} and get
\begin{align*}
    & P(|\dot \Psi_{n11}^{\xi_n}(\sqrt{n}(\ahns - \ahn)) - \dot \Psi_{n11}^{\tz} (\sqrt{n}(\ahns-\ahn))| > \varepsilon , ||\xi_n - \tz || \leq \delta) \\
    \leq & P \left(   \frac{1}{n} \sum\limits_{t=0}^n \delta\left(\frac{\widetilde C X_{t-1}^3(1+\widetilde\rho)^{X_{t-1}-1}+\widetilde C X_{t-1}^2}{P^{\xi_n}_{X_{t-1}, X_t}} +  \frac{\widetilde C X_{t-1}^2 \left( C  X_{t-1} (1+\rho)^{X_{t-1}-1} + 1 \right)}{P^{\xi_n}_{X_{t-1}, X_t} P^{\tz}_{X_{t-1}, X_t}}\right)\right.  \\
    & \left. +
    \frac{1}{n} \sum \limits_{t=0}^n |\dot{l}_\alpha(X_{t-1},X_t;\tz) + \dot{l}_\alpha(X_{t-1},X_t;\xi_n)| \right. \\
    & \left. \quad \delta\left(\frac{2 X_{t-1} + C X_{t-1}^2\left(1+\rho\right)^{X_{t-1}-1}}{P^{\tz}_{X_{t-1}, X_{t}}} + \frac{2 X_{t-1} \left( C X_{t-1}\left(1+\rho\right)^{X_{t-1}-1}+ 1  \right)  }{P^{\tz}_{X_{t-1}, X_{t}} P^{\xi_n}_{X_{t-1}, X_{t}}}\right) O_{P^*}(1) > \varepsilon  \right) \\
   % = & P \left( \delta  \frac{1}{n} \sum \limits_{t=0}^n \left( \frac{\tilde C X_{t-1}^2(X_{t-1}-1)(1+\tilde\rho)^{X_{t-1}-1}+4X_{t-1}(X_{t-1})(1-\alpha_0)^{-2}}{P^{\xi_n}_{X_{t-1},X_t}} \right. \right. \\
   %  & \left. \left.  + \frac{4 X_{t-1} (X_{t-1}-1)(1-\az)^{-2} \left( CX_{t-1}(1+\rho)^{X_{t-1}-1}+1 \right)}{P^{\tz}_{X_{t-1},X_t} P^{\xi_n}_{X_{t-1},X_t}}+ |\dot{l}_\alpha(X_{t-1},X_t;\tz) 
   %  + \dot{l}_\alpha(X_{t-1},X_t;\xi_n)| \right. \right.  \\
   %  & \left. \left.   \left( \frac{2X_{t-1}+C X_{t-1}^2(1+\rho)^{X_{t-1}-1}}{P^{\tz}_{X_{t-1},X_t}} + \frac{2X_{t-1}(CX_{t-1}(1+\rho)^{X_{t-1}-1} + 1)}{P^{\tz}_{X_{t-1},X_t} P^{\xi_n}_{X_{t-1},X_t}}     \right) \right)  O_{P^*}(1) > \varepsilon \right) \\
     \leq & P \left( \delta \left(\frac{1}{n} \sum\limits_{t=0}^n w(X_t, X_{t-1}; \rho)\right)  O_{P^*}(1) > \varepsilon  \right),
\end{align*}
where we used similar techniques to deal with the factor $|\dot{l}_\alpha(X_{t-1},X_t;\tz) + \dot{l}_\alpha(X_{t-1},X_t;\xi_n)|$ and with $P^{\xi_n}_{X_{t-1}, X_{t}}$ in the denominators.
Using Assumption \ref{asum_mix_gu} and Theorem 14.1 of \textcite{davidson}, we can conclude that $w(X_t, X_{t-1}; \rho)$ is also geometrically mixing. \textcite{wlln_geomix} gives a weak law of law large numbers under geometric mixing and together with the moment assumptions we made we can conclude that $\forall \varepsilon > 0 \; \forall \gamma > 0 \; \exists \delta > 0,\; \rho=\rho(\delta)>0: \;  P \left( \delta \frac{1}{n}\sum\limits_{t=0}^n w(X_t, X_{t-1}; \rho) > \varepsilon  \right) < \gamma$.

So far, we showed the convergence result only for the first term of the first entry in \eqref{eq:dot_psi_n_comp} by proving Lipschitz-type continuity properties of $\frac{\partial^2}{\partial \alpha^2} P^{\alpha, G}_{X_{t-1}, X_t}$ and $\left( \frac{\partial}{\partial \alpha} P^{\alpha, G}_{X_{t-1}, X_t}  \right)^2$. For the second term of that first entry in \eqref{eq:dot_psi_n_comp}, we can proceed analogously by showing with similar arguments the Lipschitz continuity of $\frac{\partial}{\partial G} \frac{\partial}{\partial \alpha}  P^{\alpha, G}_{X_{t-1}, X_t} $ and $\frac{\partial}{\partial \alpha}  P^{\alpha, G}_{X_{t-1}, X_t}\frac{\partial}{\partial G}  P^{\alpha, G}_{X_{t-1}, X_t}$ since 
\begin{align*}
& |\dot \Psi_{n12}^{\xi_n}\big(\sqrt{n}(\ghns-\ghn)\big) - \dot \Psi_{n12}^{\tz}\big(\sqrt{n}(\ghns-\ghn)\big)| \\
= & \left|  \sum\limits_{k=0}^\infty \frac{1}{n} \sum\limits_{t=0}^n \left(  \frac{\frac{\partial}{\partial G(k)} \frac{\partial}{\partial \alpha} P^{\alpha,G}_{X_{t-1},X_t|(\alpha,G) = \xi_n}}{P^{\xi_n}_{X_{t-1}, X_t}} - \dot l_\alpha(X_{t-1}, X_t; \xi_n) \frac{\frac{\partial}{\partial G(k)} P ^{\ax,G}_{X_{t-1}, X_t| G= \gx}}{P^{\xi_n}_{X_{t-1}, X_t}} \right)  \big(\sqrt{n}(\ghns(k) - \ghn(k))\big) \right. \\
& \left. - \frac{1}{n} \sum\limits_{t=0}^n \left(  \frac{\frac{\partial}{\partial G(k)} \frac{\partial}{\partial \alpha} P^{\alpha,G}_{X_{t-1},X_t|(\alpha,G) = \tz}}{P^{\tz}_{X_{t-1}, X_t}} - \dot l_\alpha(X_{t-1}, X_t; \tz) \frac{\frac{\partial}{\partial G(k)} P ^{\az,G}_{X_{t-1}, X_t| G= \gz}}{P^{\tz}_{X_{t-1}, X_t}} \right) \big(\sqrt{n}(\ghns(k) - \ghn(k))\big) \right| \\
\leq & \left(\sum\limits_{k=0}^\infty \sqrt{n}|\ghns(k) - \ghn(k)|\right) \sum\limits_{k=0}^\infty\left|  \frac{1}{n} \sum\limits_{t=0}^n  \left(   \frac{\frac{\partial}{\partial G(k)} \frac{\partial}{\partial \alpha} P^{\alpha,G}_{X_{t-1},X_t|(\alpha,G) = \xi_n}}{P^{\xi_n}_{X_{t-1}, X_t}} - \frac{\frac{\partial}{\partial G(k)} \frac{\partial}{\partial \alpha} P^{\alpha,G}_{X_{t-1},X_t|(\alpha,G) = \tz}}{P^{\tz}_{X_{t-1}, X_t}} \right. \right. \\
    & \left. \left. \qquad \qquad \qquad + \dot l_\alpha(X_{t-1}, X_t; \tz) \frac{\frac{\partial}{\partial G(k)} P ^{\az,G}_{X_{t-1}, X_t| G= \gz}}{P^{\tz}_{X_{t-1}, X_t}} - \dot l_\alpha(X_{t-1}, X_t; \xi_n) \frac{\frac{\partial}{\partial G(k)} P ^{\ax,G}_{X_{t-1}, X_t| G= \gx}}{P^{\xi_n}_{X_{t-1}, X_t}}  \right)\right| \\
    \leq & O_{P^*}(1)\sum\limits_{k=0}^\infty \left( \left| \frac{1}{n} \sum\limits_{t=0}^n \left(   \frac{\frac{\partial}{\partial G(k)} \frac{\partial}{\partial \alpha} P^{\alpha,G}_{X_{t-1},X_t|(\alpha,G) = \xi_n}}{P^{\xi_n}_{X_{t-1}, X_t}} - \frac{\frac{\partial}{\partial G(k)} \frac{\partial}{\partial \alpha} P^{\alpha,G}_{X_{t-1},X_t|(\alpha,G) = \tz}}{P^{\tz}_{X_{t-1}, X_t}} \right)   \right| \right. \\
    & \left. + \left|  \frac{1}{n} \sum\limits_{t=0}^n  \left( \dot l_\alpha(X_{t-1}, X_t; \tz) \frac{\frac{\partial}{\partial G(k)} P ^{\az,G}_{X_{t-1}, X_t| G= \gz}}{P^{\tz}_{X_{t-1}, X_t}} - \dot l_\alpha(X_{t-1}, X_t; \xi_n) \frac{\frac{\partial}{\partial G(k)} P ^{\ax,G}_{X_{t-1}, X_t| G= \gx}}{P^{\xi_n}_{X_{t-1}, X_t}}   \right)  \right|   \right).
\end{align*}
% since $|\ghns(k) - \ghn(k)| \leq 1 \forall k$. 
Then, using similar arguments as used for the first term of the first entry of \eqref{eq:dot_psi_n_comp}, we get the claimed result. We omit the details.

Finally, for both terms of the second entry of $\Psi_{n}^{\theta_0}$, we require some Lipschitz continuity of $\sum_{j=0}^\infty \frac{\partial}{\partial \alpha} P^{\alpha,G}(\varepsilon_t=j, X_t=x_t|X_{t-1}=x_{t-1})$, $\sum_{j=0}^\infty \frac{\partial}{\partial G} P^{\alpha,G}(\varepsilon_t=j, X_t=x_t|X_{t-1}=x_{t-1})$ and $A_{\alpha,G} h(X_{t-1}, X_t)$, respectively. Starting with $P^{\alpha,G}(\varepsilon_t=j, X_t=x_t|X_{t-1}=x_{t-1})$, we get the following closed form representation 
% where for $P^{\alpha,G}(\varepsilon_t=j, X_t=x_t|X_{t-1}=x_{t-1})$, we first need to derive a closed form:
\begin{align} \label{closed_form}
    & \quad P^{\alpha,G}(\varepsilon_t=j,X_t=x_t | X_{t-1}=x_{t-1}) 
    = P^{\alpha,G}(\varepsilon_t=j,\alpha \circ X_{t-1} + \varepsilon_t=x_t | X_{t-1}=x_{t-1}) \notag \\
    &= P^{\alpha,G}(\varepsilon_t=j,\alpha \circ X_{t-1} + j=x_t | X_{t-1}=x_{t-1})
    = P^{\alpha,G}(\varepsilon_t=j,\alpha \circ X_{t-1} =x_t-j | X_{t-1}=x_{t-1}) \notag \\
    &= P^{\alpha,G}(\varepsilon_t=j |X_{t-1}=x_{t-1}) P^{\alpha,G}(\alpha \circ X_{t-1}=x_t-j |X_{t-1}=x_{t-1}) \notag \\
    &= P^{\alpha,G}(\varepsilon_t=j) P^{\alpha,G}(\alpha \circ x_{t-1}=x_t-j) \notag \\
    &= G(j) \binom{x_{t-1}}{x_t-j} \alpha^{x_t -j}(1-\alpha)^{x_{t-1}-(x_t-j)} \mathds{1}_{j \in \{ \max(0,x_t-x_{t-1}), \ldots, x_t\}}, %=: f(j,x_t,x_{t-1}),
\end{align}
where we used that $\epsilon_t$ and $\alpha\circ X_{t-1}$ are independent given $X_{t-1}$. 
% To check that this closed form is derived correctly, for fixed $x_{t-1} \in \mathbb{N}_0$, it has to hold
% \begin{align*}
%     \sum\limits_{j=0}^\infty \sum\limits_{x_t=0}^\infty f(j,x_t,x_{t-1}) =1.
% \end{align*}
% Consider
% \begin{align*}
%     f(j,x_t,x_{t-1}) = \sum\limits_{j=0}^\infty \sum\limits_{x_t=0}^\infty  \binom{x_{t-1}}{x_t-j} \alpha^{x_t -j}(1-\alpha)^{x_{t-1}-(x_t-j)}  G(j)  \mathds{1}_{j \in \{ \max(0,x_t-x_{t-1}), \ldots, x_t\}}
% \end{align*}
% and define $x_t-j =: s$. With this, and due to $-max(a,b)=min(-a,-b)$, we get
% \begin{align*}
%    & \quad \sum\limits_{x_t=0}^\infty \sum\limits_{s=-\infty}^{x_t}  \binom{x_{t-1}}{s} \alpha^s(1-\alpha)^{x_{t-1}-s} G(x_t-s) \mathds{1}_{s \in \{ 0, \ldots, \min(x_t,x_{t-1}) \}} \\
%    &= \sum\limits_{x_t=0}^\infty \sum\limits_{s=0}^{\min(x_t,x_{t-1})} \binom{x_{t-1}}{s} \alpha^s(1-\alpha)^{x_{t-1}-s} G(x_t-s) \\
%    &= \sum\limits_{x_t=0}^\infty P^{\alpha,G}(X_t=x_t|X_{t-1}=x_{t-1}) =1.
% \end{align*}
Using the same arguments as above, we conclude Lipschitz continuity of both $\sum_{j=0}^\infty \frac{\partial}{\partial \alpha} P^{\alpha,G}(\varepsilon_t=j, X_t=x_t|X_{t-1}=x_{t-1})$ and $\sum_{j=0}^\infty \frac{\partial}{\partial G} P^{\alpha,G}(\varepsilon_t=j, X_t=x_t|X_{t-1}=x_{t-1})$. The same applies to argue Lipschitz continuity of $A_{\alpha,G}h(X_{t-1},X_t)$ since, with \eqref{eq: A_expanded_representation} and \eqref{closed_form}, we have
\begin{align}
    A_{\alpha,G} h(x_{t-1},x_t) 
    % &= E_{\alpha,G}(h(\varepsilon_t)|X_t=x_t, X_{t-1}=x_{t-1}) \notag \\
    % &= \sum\limits_{j=0}^\infty h(j) P^{\alpha,G}(\varepsilon_t=j | X_t=x_t, X_{t-1}=x_{t-1}) \notag \\
    &= \sum\limits_{j=0}^\infty h(j) \frac{P^{\alpha,G}(\varepsilon_t=j,X_t=x_t|X_{t-1}=x_{t-1})}{P^{\alpha,G}(X_t=x_t|X_{t-1}=x_{t-1})} \notag \\
    &= \sum\limits_{j=0}^\infty h(j) \frac{G(j) \binom{x_{t-1}}{x_t-j} \alpha^{x_t-j}(1-\alpha)^{x_{t-1}-(x_t-j)} \mathds{1}_{j \in \{ \max(0,x_t-x_{t-1}), \ldots, x_t  \}  }}{\sum\limits_{s=0}^{\min(x_t,x_{t-1})} \binom{x_{t-1}}{s} \alpha^s(1-\alpha)^{x_{t-1}-s} G(x_t-s)} \notag \\
    &= \frac{\sum\limits_{s=0}^{\min(x_t,x_{t-1})} h(x_t-s) \binom{x_{t-1}}{s} \alpha^s(1-\alpha)^{x_{t-1}-s} G(x_t-s)}{\sum\limits_{s=0}^{\min(x_t,x_{t-1})} \binom{x_{t-1}}{s} \alpha^s(1-\alpha)^{x_{t-1}-s} G(x_t-s)},  \label{eq:Ah_explicit}
\end{align}
which can be treated analogously.

Coming back to \eqref{eq:anfang}, it remains to investigate also the asymptotic behavior of the second difference, i.e., of 
\begin{align} \label{eq:anfang2}
P( ||  \dot \Psi^{\theta_0}_n\big(\sqrt{n}(\thns-\thn)\big) - \dot \Psi^{\theta_0}\big(\sqrt{n}(\thns-\thn)\big)   ||  > \widetilde\varepsilon).
\end{align}
Here again, we argue componentwise and apply the idea of the proof of weak of large numbers. For the first summand of the first entry of \eqref{eq:dot_psi_n_comp}, we again have 
\begin{align*}
    P(|\dot \Psi_{n11}^{\tz}\big(\sqrt{n}(\ahns- \ahn)\big) - \dot \Psi_{11}^{\tz}\big(\sqrt{n}(\ahns- \ahn)\big) | > \widetilde\varepsilon) \leq P(|\dot \Psi_{n11}^{\tz} - \dot \Psi_{11}^{\tz} |O_{P^*}(1) > \widetilde\varepsilon)
\end{align*}
and 
\begin{align*}
   E_{\alpha_0, G_0}( \dot \Psi_{n11}^{\tz} ) &= E_{\alpha_0, G_0} \left(\frac{1}{n} \sum\limits_{t=0}^n \frac{\frac{\partial^2}{\partial \alpha^2} P_{X_{t-1},X_t}^{\alpha,G_0}}{P_{X_{t-1},X_t|\alpha=\alpha_0}^{\alpha,G}}%{\text{0? asympt.?}} 
- \frac{1}{n} \sum\limits_{t=0}^n \dot{l}_\alpha^2(X_{t-1},X_t;\alpha_0,G_0) \right) \\
&=  E_{\alpha_0, G_0} \left(\frac{\frac{\partial^2}{\partial \alpha^2} P_{X_{t-1},X_t}^{\alpha,G_0}}{P_{X_{t-1},X_t|\alpha=\alpha_0}^{\alpha,G}} \right)
- E_{\alpha_0, G_0} \left( \dot{l}_\alpha^2(X_{t-1},X_t;\alpha_0,G_0) \right).  
\end{align*} 
% For $n \rightarrow \infty$, we get
% \[ \dot \Psi_{n11} \rightarrow E_{\alpha_0, G_0}\left( \frac{\frac{\partial^2}{\partial \alpha^2} P_{X_{-1},X_0}^{\alpha,G}}{P_{X_{-1},X_0}^{\alpha,G}} \right) - E_{\alpha_0, G_0} \dot{l}_\alpha(X_{-1},X_0;\alpha,G)^2.   \]
The second component corresponds to the term \textcite{drost} derived for $\dot \Psi_{11}^{\theta_0}$ in case of $p=1$. 
% Let $P$ denote $P^{\alpha_0,G_0}$. 
For the first term, by dominated convergence, we have %It remains to show that the first term equals zero. In the following, $P$ denotes $P^{\alpha_0,G_0}$.
\begin{align*}
E_{\alpha_0, G_0}\left( \frac{\frac{\partial^2}{\partial \alpha^2} P_{X_{-1},X_0|\alpha=\alpha_0}^{\alpha,G_0}}{P_{X_{-1},X_0}^{\alpha_0,G_0}} \right) &= \sum\limits_{m,l=0}^\infty \frac{\frac{\partial^2}{\partial \alpha^2} P^{\alpha,G_0}(X_0=m| X_{-1}=l)_{|\alpha=\alpha_0}}{P^{\alpha_0,G_0}(X_0=m| X_{-1}=l)} P^{\alpha_0,G_0}(X_0=m, X_{-1}=l) \\
&= \sum\limits_{m,l=0}^\infty \frac{\partial^2}{\partial \alpha^2} P^{\alpha,G_0}(X_0=m| X_{-1}=l)_{|\alpha=\alpha_0} P^{\alpha_0,G_0}(X_{-1}=l) \\
&= \sum\limits_{l=0}^\infty P^{\alpha_0,G_0}(X_{-1}=l) \sum\limits_{m=0}^\infty \frac{\partial^2}{\partial \alpha^2} P^{\alpha,G_0}(X_0=m| X_{-1}=l)_{|\alpha=\alpha_0} \\
&= \sum\limits_{l=0}^\infty P^{\alpha_0,G_0}(X_{-1}=l)  \frac{\partial^2}{\partial \alpha^2} \sum\limits_{m=0}^\infty P^{\alpha,G_0}(X_0=m| X_{-1}=l)_{|\alpha=\alpha_0} \\
&= \sum\limits_{l=0}^\infty P^{\alpha_0,G_0}(X_{-1}=l)  \frac{\partial^2}{\partial \alpha^2}  1 %_{|\alpha=\alpha_0} 
\\
&= 0.
\end{align*}
With the same arguments, we can treat the other three expressions to show that the corresponding first terms have mean zero.
% where we can exchange the infinite sum and the derivatives since $P(X_0=m| X_{-1}=l) \in (0,1] \, \forall m,l$.We can analogously proceed for the other three components. But since we have the same structure with two derivatives in all components, we can also generally see that the first term will always equal zero, e.g., (roughly)
% \begin{align*}
% E \left( \frac{\partial^2P}{P}  \right) = E(\partial^2 log(P)) = \partial E(\partial log (P)) = \partial 0 =0
%     % E\left(\frac{\frac{\partial^2}{\partial^2 \alpha^2} P}{P}\right) = E \left( \frac{\partial^2}{\partial^2 \alpha^2} log P   \right) = \frac{\partial}{\partial \alpha} E \left( \frac{\partial}{\partial \alpha} log P  \right) = \frac{\partial}{\partial \alpha} 0
% \end{align*}
% under regularity conditions. 
As a preliminary result, we proved that the expectations of the two terms in \eqref{eq:anfang2}, i.e., of $\dot\Psi_{n11}^{\theta_0}$ and $\dot\Psi_{11}^{\theta_0}$, coincide. To 
% be able to 
conclude that the whole expression in \eqref{eq:anfang2} converges to zero, by Chebychev's inequality, it remains to show that 
\begin{align*}
 Var_{\az,\gz}(\dot \Psi_{n11}^{\tz}) \rightarrow 0 \quad\text{as $n\rightarrow \infty$}.
\end{align*}
We proceed as in the proof of Theorem \ref{cons_bs_est} and, analogously to \eqref{cov_def2}, we get
\begin{align} \label{cov_deriv2}
   & Var \left( \frac{\partial^2}{\partial \alpha^2} \frac{1}{n} \sum\limits_{t=0}^n log \left( \sum\limits_{j=0}^{\min\{ X_{t-1}, X_t\}} \binom{X_{t-1}}{j} \alpha^j (1-\alpha)^{X_{t-1}-j} G(X_t-j) \right)_{|\alpha=\alpha_0} \right) \notag \\
    &\leq \frac{1}{n}\sum\limits_{h=-(n-1)}^{n-1}  \text{Cov} \left( \frac{\partial^2}{\partial \alpha^2}  \log \left( \sum\limits_{j_1=0}^{\text{min}(X_1,X_0)} \binom{X_0}{j_1}\alpha^{j_1}(1-\alpha)^{X_0-j_1} G(X_1-j_1) \right)_{|\alpha=\alpha_0} \right. , \notag \\
 & \qquad  \qquad \quad \quad \;  \; \; \left. \frac{\partial^2}{\partial \alpha^2} \log \left( \sum\limits_{j_2=0}^{\text{min}(X_{h+1},X_h)} \binom{X_h}{j_2}\alpha^{j_2}(1-\alpha)^{X_h-j_2}G(X_{h+1}-j_2)  \right)_{|\alpha=\alpha_0} \right).
\end{align}
Consequently, by defining
\begin{align}
    Y_h := \frac{\partial^2}{\partial \alpha^2}   \log \left(   \sum\limits_{j=0}^{\min(X_{h+1}, X_h)} \binom{X_h}{j} \alpha^j(1-\alpha)^{X_h-j}  G(X_{h+1}-j)  \right)_{|\alpha=\alpha_0},
\end{align}
and again using Assumption \ref{asum_mix_gu}, Lemma \ref{lemma_rng}, Theorem 14.1 of \citet{davidson} and Corollary 14.3 of \citet{davidson}, we get 
\begin{align}
 |\text{Cov}(Y_0, Y_{h} )| \leq 2\, (2^{1-1/q}+1) \, \alpha_\text{mix}(h)^{1-1/q-1/r} \, ||Y_0||_q ||Y_{h}||_r.   
\end{align}
Hence, it remains to show that 
\begin{align} \label{exi_q}
    \exists q > 1, r > q/(q-1):||Y_0||_q=O(1) \quad \text{and} \quad ||Y_{h}||_r=O(1),
\end{align} 
where it again suffices to prove it for $h=0$ due to stationarity of $(X_t,t \in \mathbb{Z})$ and, for some $q>2$,
% Using definition \ref{def_bip}, 
we have to investigate
\begin{align} \label{bip_anfang}
      \sum\limits_{(x,y) \in \mathbb{N}_0^2} \left| \frac{\partial^2}{\partial \alpha^2}  \log \left( \sum\limits_{j=0}^{\min(x,y)} \binom{x}{j} \alpha^j(1-\alpha)^{x-j}G(y-j) \right)_{|\alpha=\alpha_0}  \right|^q P( (X_0,X_1)=(x,y)  ).
\end{align}
Using that $q>1$ and $|a-b| \leq |a| + |b|$, we get an upper bound for \eqref{bip_anfang} given by 
\begin{align*}
\sum\limits_{(x,y) \in \mathbb{N}_0^2} \left( \left| \frac{\frac{\partial^2}{\partial \alpha^2} P^{\alpha,G}_{x,y|\alpha=\alpha_0}}{P^{\alpha,G}_{x,y}} \right| + \left| \frac{ \frac{\partial}{\partial \alpha}P^{\alpha,G}_{x,y|\alpha=\alpha_0}}{P^{\alpha,G}_{x,y}} \right|^2 \right)^q P( (X_0,X_1)=(x,y)  ).
\end{align*}
Using as in the Proof of Theorem \ref{cons_bs_est} that $x^q$ is a convex function for $x \geq 0$, that by \eqref{abs_c} we have \begin{align*}
    \left|\frac{\partial^2}{\partial \alpha^2} P^{\alpha, G}_{x,y|\alpha=\alpha_0} \right| \leq 4x(x-1)(1-\alpha_0)^{-2}
\end{align*} and that by \eqref{abs_an}, we have
\begin{align*}
    \left| \frac{\partial}{\partial \alpha} P^{\alpha, G}_{x,y|\alpha=\alpha_0}   \right|^2 \leq 4x^2,
\end{align*} 
we get an upper bound for \eqref{bip_anfang} given by
\begin{align*}
    2^{3q-1}(1-\alpha_0)^{-2}\text{E}( (X_0(X_0-1))^q  ) + 2^{3q-1} \text{E}((X_0^2)^q).
\end{align*}
% Using \eqref{abs_c}, we get 
% \begin{align*}
%     \left|\frac{\partial^2}{\partial \alpha^2} P^{\alpha, G}_{x,y|\alpha=\alpha_0} \right| \leq 4x(x-1)(1-\alpha_0)^{-2}
% \end{align*}
% and using \eqref{abs_an}, we have
% \begin{align*}
%     \left| \frac{\partial}{\partial \alpha} P^{\alpha, G}_{x,y|\alpha=\alpha_0}   \right|^2 \leq 4x^2.
% \end{align*}
% so an upper bound for \eqref{bip_anfang} is given by \tr{Nennerbetrachtung fehlt noch!}
% \begin{align*}
%     &\quad   \left(  \sum\limits_{(x,y) \in \mathbb{N}_0^2} (4x(x-1)(1-\alpha)^{-2} +4x^2) P((X_0,X_1)=(x,y))  \right)^q  \\
%     &= \left( (4(1-\alpha)^{-2}+4) \sum\limits_{(x,y) \in \mathbb{N}_0^2} x^2  P((X_0,X_1)=(x,y))  - 4(1-\alpha)^{-2} \sum\limits_{(x,y) \in \mathbb{N}_0^2} x  P((X_0,X_1)=(x,y))  \right)^q  \\
%     & =  \left( (4(1-\alpha)^{-2}+4) \text{E}(X_0^2) - 4 (1-\alpha)^{-2} \text{E}(X_0)  \right)^q .
% \end{align*}
Using the boundedness of the moments of the observations together with Assumption \ref{ass1}, \eqref{exi_q} follows and the covariance in \eqref{cov_deriv2} is summable. As in the proof of Theorem \ref{cons_bs_est}, the assertion that $Var_{\az,\gz}(\dot \Psi_{n11}^{\tz}) \rightarrow 0$ as $n \rightarrow \infty$ follows. For the other components, we proceed analogously. Thus, altogether, \eqref{eq:anfang2} converges to zero in probability.
\end{proof}

\begin{Lemma} \label{negl}
Suppose Assumptions \ref{ass1} and \ref{asum_mix_gu} hold true. Let $\text{E}(X_1^k) < \infty$ for some $k>2(p+4)$ and
$\text{E}((X_t^3(1+\rho)^{X_t}))^{1+\delta}<\infty$ for some $\rho, \delta>0$.
% $\text{E}(X_t^3(1+\rho)^{X_t})<\infty$ for some $\rho>0$. 
Then, we have
\begin{align*}
\sqrt{n}(\Psi_n^*-\Psi_n)(\thns) - \sqrt{n}(\Psi_n^*-\Psi_n)(\thn)=o_{P^*}(1)
\end{align*}
in probability.
\end{Lemma}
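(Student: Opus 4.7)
The plan is to combine Fréchet differentiability of both $\Psi_n$ and $\Psi_n^*$ with (a bootstrap version of) the uniform convergence result established in Lemma \ref{unif_zws}. First, note that the bootstrap analogue $\Psi_n^*$ is itself Fréchet differentiable on $\mathrm{lin}([0,1]^p\times\widetilde{\mathcal{G}})$ by the same computations that yield Lemma \ref{frechet}, simply replacing the sample averages $\frac{1}{n}\sum_{t=0}^n f(X_{t-p},\ldots,X_t)$ in \eqref{psi_dot_n11}--\eqref{psi_dot_n22} by their bootstrap counterparts $\frac{1}{n}\sum_{t=0}^n f(X_{t-p}^*,\ldots,X_t^*)$; the validity of those computations under $P^*$ is underwritten by the bootstrap analogue of Assumption \ref{ass1} supplied by Lemma \ref{lemma_rng}. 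Consequently there exist data-dependent intermediate points $\xi_n,\xi_n^*$ on the segment between $\thn$ and $\thns$ with
\[
\Psi_n(\thns)-\Psi_n(\thn)=\dot\Psi_n^{\xi_n}(\thns-\thn),\qquad \Psi_n^*(\thns)-\Psi_n^*(\thn)=\dot\Psi_n^{*,\xi_n^*}(\thns-\thn),
\]
so the quantity in the lemma rewrites as
\[
\sqrt{n}\bigl[(\Psi_n^*-\Psi_n)(\thns)-(\Psi_n^*-\Psi_n)(\thn)\bigr]=\bigl[\dot\Psi_n^{*,\xi_n^*}-\dot\Psi_n^{\xi_n}\bigr]\bigl(\sqrt{n}(\thns-\thn)\bigr).
\]

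Adding and subtracting the population Fréchet derivative $\dot\Psi^{\tz}$ decomposes this into
\[
\bigl[\dot\Psi_n^{*,\xi_n^*}-\dot\Psi^{\tz}\bigr]\bigl(\sqrt{n}(\thns-\thn)\bigr)-\bigl[\dot\Psi_n^{\xi_n}-\dot\Psi^{\tz}\bigr]\bigl(\sqrt{n}(\thns-\thn)\bigr),
\]
where the second bracket is exactly the $o_{P^*}(1)$ term already handled by Lemma \ref{unif_zws}. For the first bracket I would rerun the proof of Lemma \ref{unif_zws} verbatim, but with $X_t^*$ in place of $X_t$ and with $(\widehat\alpha_n,\widehat G_n)$ playing the role of $(\alpha_0,G_0)$. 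Three ingredients are needed: (i) the Lipschitz-type bounds for $\frac{\partial}{\partial\alpha}P^{\alpha,G}_{x,y}$, $\frac{\partial^2}{\partial\alpha^2}P^{\alpha,G}_{x,y}$ and the related derivatives (cf.\ Lemmas \ref{Lipschitz_resultsI} and \ref{Lipschitz_resultsII}), which are purely functional in $(\alpha,G)$ and thus apply equally to the bootstrap transition kernel; (ii) geometric $\beta$-mixing of the bootstrap chain $(X_t^*)$, provided with probability tending to one by Lemma \ref{lemma_rng} and converted to $\alpha$-mixing via Theorem 14.1 of \citet{davidson}; and (iii) a weak law of large numbers under geometric mixing \citep{wlln_geomix} applied conditionally on $\mathbb{X}$, for which one needs bootstrap moment transfer, i.e.\ $\mathrm{E}^*\bigl((X_t^*)^3(1+\rho)^{X_t^*}\bigr)^{1+\delta}=O_P(1)$, which follows from the corresponding original-data assumption by an argument in the spirit of Lemma B.1 in \citet{jewe}. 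Since $\sqrt{n}(\thns-\thn)=O_{P^*}(1)$ (to be established via Lemma \ref{asympt_norm}), multiplying an $o_{P^*}(1)$ operator by an $O_{P^*}(1)$ argument delivers the claimed $o_{P^*}(1)$.

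The main obstacle I foresee is verifying that Assumption \ref{asum_mix_gu} transfers to the bootstrap innovation distribution, i.e.\ that some random exponential lower bound $\widehat G_n(k)\ge \hat c_1 e^{-\hat c_2 k}$ holds with $P$-probability tending to one; without such control the Lipschitz majorants involving $1/P^{\widehat\alpha_n,\widehat G_n}_{x,y}$ cannot be uniformly bounded in probability. However, because $\widehat G_n$ is supported on the finite random set $\{u_-,\ldots,u_+\}$, any strictly positive NPMLE admits such a lower bound trivially with random constants, and combining this with the $\ell^1$-consistency $\widehat G_n\to G_0$ from Theorem \ref{cons} together with $G_0\in\mathcal{G}_u$ allows one to choose $(\hat c_1,\hat c_2)$ bounded away from the degenerate values with probability tending to one. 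Once this transfer is in place, the remaining arguments parallel those of Lemma \ref{unif_zws} and close the proof.
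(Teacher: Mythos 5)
Your proposal is sound in outline but follows a genuinely different route from the paper. The paper does not pass through Fr\'echet derivatives at intermediate points for this lemma at all: it works componentwise, writes $\Psi_{n2}(\theta)h$ explicitly via the conditional-expectation operator $A_{\alpha,G}h$ as a ratio $d_{h,n}(\theta)/d_n(\theta)$ of binomial--convolution sums, forms the raw second-order difference (in the parameter \emph{and} in the data $X_t$ versus $X_t^*$), telescopes by ``adding suitable zeros'', and bounds the resulting increments directly with the Lipschitz-in-$\theta$ inequalities of Lemma \ref{Lipschitz_resultsII}, arriving at a bound of the form $\sqrt{n}\|\thns-\thn\|$ times a term vanishing in probability. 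Your mean-value decomposition
$\bigl[\dot\Psi_n^{*,\xi_n^*}-\dot\Psi^{\tz}\bigr]\bigl(\sqrt{n}(\thns-\thn)\bigr)-\bigl[\dot\Psi_n^{\xi_n}-\dot\Psi^{\tz}\bigr]\bigl(\sqrt{n}(\thns-\thn)\bigr)$
is algebraically correct and reduces the second bracket exactly to Lemma \ref{unif_zws}; what it buys is conceptual transparency, at the price of having to establish a full bootstrap analogue of Lemma \ref{unif_zws} for the first bracket, which is more machinery than the paper deploys. Both routes share the reliance on $\sqrt{n}\|\thns-\thn\|=O_{P^*}(1)$, so you are not worse off there than the paper itself.

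Two points in your sketch need shoring up. First, your ``rerun Lemma \ref{unif_zws} verbatim with $X_t^*$'' hides an extra step: the bootstrap law of large numbers centers the derivative averages at $\mathrm{E}_{\thn}(\cdot)$, not at $\mathrm{E}_{\theta_0}(\cdot)=\dot\Psi^{\tz}$, so you additionally need continuity of the population derivative in the underlying parameter together with $\thn\rightarrow\theta_0$. Second, your proposed transfer of Assumption \ref{asum_mix_gu} to $\widehat G_n$ does not work as stated: since $\widehat G_n(k)=0$ for $k>u_+$ (and possibly for some $k\in\{u_-,\dots,u_+\}$), no exponential lower bound can hold for all $k\in\mathbb{N}_0$, and pointwise consistency $\widehat G_n(k)\rightarrow G_0(k)$ only yields $\widehat G_n(k)\geq (c_1/2)e^{-c_2 k}$ for each \emph{fixed} $k$ with probability tending to one, not uniformly over the growing range $k\leq\max_t X_t^*$ that enters the denominators $1/P^{\widehat{\fett\alpha}_n,\widehat G_n}_{x,y}$. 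A uniform-in-$k$ control (or a restriction of the relevant sums to a slowly growing range with a separate tail bound) is needed here; this is a real gap in your argument, though admittedly one the paper's own terse proof also leaves implicit.
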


% \textit{Proof}. %\blue{Work in progress.}
\begin{proof}
Following the proof of (L4) in \citet{drost2009appendix}, we consider both components separately. That is, we have to show
\begin{align}
    \sqrt{n}(\Psi_{n1}^*-\Psi_{n1})(\thns) - \sqrt{n}(\Psi_{n1}^*-\Psi_{n1})(\thn) = & o_{P^*}(1),  \label{eq:LemmaA3_1}    \\
    \sqrt{n}(\Psi_{n2}^*-\Psi_{n2})(\thns) - \sqrt{n}(\Psi_{n2}^*-\Psi_{n2})(\thn) = & o_{P^*}(1)   \label{eq:LemmaA3_2}
\end{align}
in probability, respectively. We consider only \eqref{eq:LemmaA3_2} for $p=1$ and omit the details for \eqref{eq:LemmaA3_1}. Then, for $h\in\mathcal{H}_1$, we have
\begin{align*}
& \sqrt{n}\big(\Psi_{n2}^*(\thns)h-\Psi_{n2}(\thns)h-\Psi_{n2}^*(\thn)h + \Psi_{n2}(\thn)h\big) \\
= & \sqrt{n}\Bigg(\frac{1}{n} \sum\limits_{t=0}^n \left( A_{\widehat \alpha^*, \widehat G^*} h(X_{t-1}^*,X_t^*)-\int h \; d\widehat G^*  \right)-\frac{1}{n} \sum\limits_{t=0}^n \left(A_{\widehat\alpha^*,\widehat G^*} h(X_{t-1},X_t)-\int h d\widehat G^*  \right)    \\
& \qquad   -\frac{1}{n} \sum\limits_{t=0}^n \left( A_{\widehat \alpha, \widehat G} h(X_{t-1}^*,X_t^*)-\int h \; d\widehat G  \right)+\frac{1}{n} \sum\limits_{t=0}^n \left(A_{\widehat\alpha,\widehat G} h(X_{t-1},X_t)-\int h d\widehat G  \right)\Bigg)   \\
= & \frac{1}{\sqrt{n}} \sum\limits_{t=0}^n \Bigg( A_{\widehat \alpha^*, \widehat G^*} h(X_{t-1}^*,X_t^*)-A_{\widehat\alpha^*,\widehat G^*} h(X_{t-1},X_t)-A_{\widehat \alpha, \widehat G} h(X_{t-1}^*,X_t^*)+A_{\widehat\alpha,\widehat G} h(X_{t-1},X_t)\Bigg)
% \leq & \sqrt{n}\Bigg(\frac{1}{n} \sum\limits_{t=0}^n \left( A_{\widehat \alpha^*, \widehat G^*} h(X_{t-p}^*, \ldots, X_t^*)\right)-\frac{1}{n} \sum\limits_{t=0}^n \left(A_{\widehat\alpha^*,\widehat G^*} h(X_{t-p}, \ldots, X_t)\right)    \\
% & \qquad    -\frac{1}{n} \sum\limits_{t=0}^n \left( A_{\widehat \alpha, \widehat G} h(X_{t-p}^*, \ldots, X_t^*)\right)+\frac{1}{n} \sum\limits_{t=0}^n \left(A_{\widehat\alpha,\widehat G} h(X_{t-p}, \ldots, X_t)\right)\Bigg) 
% & \leq \frac{1}{\sqrt{n}} \sum\limits_{t=0}^n \left|A_{\thns} h(X_{t-1}^*, X_t^*) - A_{\thn} h(X_{t-1}^*, X_t^*) \right| +  \frac{1}{\sqrt{n}} \sum\limits_{t=0}^n \left| A_{\thn} h(X_{t-1}, X_t) - A_{\thns} h(X_{t-1}, X_t) \right|.
\end{align*}
where, according to \eqref{eq:Ah_explicit} and for $x_{t-1},x_t\in\mathbb{N}_0$ and $(\alpha,G)\in \Theta\times \mathcal{G}$, we have
\begin{align*}
    A_{\alpha,G} h(x_{t-1},x_t) 
    % &= \sum\limits_{j=0}^\infty h(j) \frac{P^{\alpha,G}(\varepsilon_t=j,X_t=x_t|X_{t-1}=x_{t-1})}{P^{\alpha,G}(X_t=x_t|X_{t-1}=x_{t-1})} \notag \\
    % &= \sum\limits_{j=0}^\infty h(j) \frac{G(j) \binom{x_{t-1}}{x_t-j} \alpha^{x_t-j}(1-\alpha)^{x_{t-1}-(x_t-j)} \mathds{1}_{j \in \{ \max(0,x_t-x_{t-1}), \ldots, x_t  \}  }}{\sum\limits_{s=0}^{\min(x_t,x_{t-1})} \binom{x_{t-1}}{s} \alpha^s(1-\alpha)^{x_{t-1}-s} G(x_t-s)} \notag \\
    &= \frac{\sum\limits_{s=0}^{\min(x_t,x_{t-1})} h(x_t-s) \binom{x_{t-1}}{s} \alpha^s(1-\alpha)^{x_{t-1}-s} G(x_t-s)}{\sum\limits_{s=0}^{\min(x_t,x_{t-1})} \binom{x_{t-1}}{s} \alpha^s(1-\alpha)^{x_{t-1}-s} G(x_t-s)}.  
\end{align*}
Hence, we have 
\begin{align}
& \sqrt{n}\big(\Psi_{n2}^*(\thns)-\Psi_{n2}(\thns)-\Psi_{n2}^*(\thn) + \Psi_{n2}(\thn)\big) \notag \\ 
% =& \frac{1}{\sqrt{n}} \sum\limits_{t=0}^n \Bigg(
% A_{\widehat \alpha^*, \widehat G^*} h(X_{t-1}^*,X_t^*)-A_{\widehat\alpha^*,\widehat G^*} h(X_{t-1},X_t)-A_{\widehat \alpha, \widehat G} h(X_{t-1}^*,X_t^*)+A_{\widehat\alpha,\widehat G} h(X_{t-1},X_t)\Bigg)   \\
=& \frac{1}{\sqrt{n}} \sum\limits_{t=0}^n \Bigg(\frac{d_{h,n}^*(\widehat \theta_n^*)}{d_{n}^*(\widehat \theta_n^*)}-\frac{d_{h,n}(\widehat \theta_n^*)}{d_{n}(\widehat \theta_n^*)}-\frac{d_{h,n}^*(\widehat \theta_n)}{d_{n}^*(\widehat \theta_n)}+\frac{d_{h,n}(\widehat \theta_n)}{d_{n}(\widehat \theta_n)}\Bigg),  \label{eq:represenation_dn}
\end{align}
where
\begin{align*}
d_{h,n}(\widehat \theta_n) =& \sum\limits_{s=0}^{\min(X_t,X_{t-1})} h(X_t-s) \binom{X_{t-1}}{s} \widehat\alpha^s(1-\widehat\alpha)^{X_{t-1}-s} \widehat G(X_t-s),  \\
d_n(\widehat \theta_n) =& \sum\limits_{s=0}^{\min(X_t,X_{t-1})} \binom{X_{t-1}}{s} \widehat\alpha^s(1-\widehat\alpha)^{X_{t-1}-s} \widehat G(X_t-s),   \\
d_{h,n}^*(\widehat \theta_n^*) =& \sum\limits_{s=0}^{\min(X_t,X_{t-1})} h(X_t-s) \binom{X_{t-1}}{s} \widehat\alpha^{*s}(1-\widehat\alpha^*)^{X_{t-1}-s} \widehat G^*(X_t-s), \\
d_n^*(\widehat \theta_n^*) =& \sum\limits_{s=0}^{\min(X_t,X_{t-1})} \binom{X_{t-1}}{s} \widehat\alpha^{*s}(1-\widehat\alpha^*)^{X_{t-1}-s} \widehat G^*(X_t-s)
\end{align*}
and $d_{h,n}(\widehat \theta_n^*), d_{n}(\widehat \theta_n^*), d^*_{h,n}(\widehat \theta_n)$ and $d_{n}^*(\widehat \theta_n)$ analog.
Finally, by adding suitable zero and using the same technique as in the proof of Lemma \ref{Lipschitz_resultsII}, we can bound the above expression \eqref{eq:represenation_dn} by $\sqrt{n}\|\widehat \theta_n^*-\widehat \theta_n\|$, which is $O_{P^*}(1)$ multiplied by a term that converges to zero in probability.
\end{proof}

\begin{Lemma} \label{asympt_norm}
Suppose Assumptions \ref{ass1} and \ref{asum_mix_gu} hold. Let $\text{E}(X_1^k) < \infty$ for some $k>2(p+4)$ and
$\text{E}((X_t^3(1+\rho)^{X_t}))^{1+\delta}<\infty$ for some $\rho, \delta>0$. Then, we have
\begin{align}\label{eq:S_n_weak_convergence_boot}
\mathcal{S}_n^{\widehat{\fett\alpha}_n, \widehat G_n,*} = \sqrt{n} \left(\Psi_n^*(\widehat{\fett\alpha}_n, \widehat G_n) -\Psi_n(\widehat{\fett\alpha}_n, \widehat G_n) \right) \leadsto^* \mathcal{S}^{\fett\alpha_0, G_0}
\end{align}
in $\mathbb{R}^p \times \ell^\infty(\mathcal{H}_1)$, under $P^*(\cdot)= \mathds{P}_{\nu_{\widehat{\fett\alpha}_n,\widehat{G}_n}, \widehat{\fett\alpha}_n, \widehat{G}_n}(\cdot|\mathbb{X})$ in $\mathds{P}_{\nu_{\theta_0}, \theta_0}$-probability, where $\mathcal{S}^{\fett\alpha_0, G_0}$ is the tight, Borel measurable, Gaussian process obtained in \eqref{eq:S_n_weak_convergence}.
\end{Lemma}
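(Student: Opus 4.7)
My plan is to reduce the bootstrap weak convergence to a direct application of Drost's argument for \eqref{eq:S_n_weak_convergence}, carried out in the bootstrap world. The first step will be a centering reduction. By construction, conditional on $\mathbb{X}$, the process $(X_t^*,t\in\mathbb{Z})$ is stationary INAR($p$) with parameters $(\widehat{\fett\alpha}_n,\widehat G_n)$, so the score identity $\Psi^{\theta}(\theta)=0$ (which holds for the $\Psi_1$-component by the usual Bartlett identity and for the $\Psi_2$-component because $A_{\fett\alpha,G}h$ is a conditional expectation under the $(\fett\alpha,G)$-law) gives $\text{E}^*\Psi_n^*(\widehat\theta_n)=0$ on the event $\{\widehat\theta_n\in\Theta\times\mathcal{G}\}$, which has probability tending to one by Theorem \ref{cons}. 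Combining this with the approximate Z-estimator property $\sqrt{n}\,\Psi_n(\widehat\theta_n)=o_P(1)$ from \eqref{eq:approxzero}, the process can be rewritten as
\[
\mathcal{S}_n^{\widehat\theta_n,*} \;=\; \sqrt{n}\,\bigl(\Psi_n^*(\widehat\theta_n)-\text{E}^*\Psi_n^*(\widehat\theta_n)\bigr) + o_P(1),
\]
so it will suffice to prove that this centered bootstrap empirical process converges weakly in $\mathbb{R}^p\times\ell^\infty(\mathcal{H}_1)$ to $\mathcal{S}^{\theta_0}$ conditionally on the data, in probability.

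Next, I would transfer Drost's proof of \eqref{eq:S_n_weak_convergence} to the bootstrap setting. The four ingredients it relies on---strict stationarity, geometric $\beta$-mixing, Assumption \ref{ass1}, and $(p+4)$-th moment control on $X_t$---all carry over: (i) stationarity holds by construction when the pre-sample values are drawn from the stationary bootstrap law, (ii) geometric $\beta$-mixing of $(X_t^*)$ with probability tending to one is exactly the content of Lemma \ref{lemma_rng}, (iii) Theorem \ref{cons} places $\widehat\theta_n$ eventually in $\Theta\times\mathcal{G}$ and the exponential lower bound in Assumption \ref{asum_mix_gu} is preserved by $\widehat G_n \overset{p}{\rightarrow} G_0$ on any fixed truncation level, and (iv) Lemma \ref{lemma_rng} yields $\text{E}^*((X_t^*)^k)=O_P(1)$ for $k>2(p+4)$. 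A CLT for functionals of geometrically $\beta$-mixing stationary sequences (e.g., via the Ibragimov/Doukhan triangular-array CLT) then applies to every finite-dimensional projection of $\sqrt{n}\Psi_n^*(\widehat\theta_n)$, with asymptotic covariance $\Sigma_{\widehat\theta_n}$; by continuity of $\theta\mapsto\Sigma_\theta$ together with $\widehat\theta_n \overset{p}{\rightarrow}\theta_0$, this limit matches the covariance of $\mathcal{S}^{\theta_0}$.

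The heart of the proof---and where I expect the main difficulty---will be the asymptotic tightness of the $\Psi_{n,2}^*$-part indexed by the unit ball $\mathcal{H}_1$ of $\ell^\infty(\mathbb{N}_0)$. I would follow Drost's bracketing strategy: for each $\delta>0$, truncate at a level $M=M(\delta)$ and build brackets in $L_2(P^*)$-norm that cover $\mathcal{H}_1$ on $\{0,\ldots,M\}$ by finitely many elements (using $\|h\|_\infty\le 1$), while bounding the tail contribution $|h|\,\mathds{1}\{X_t^*>M\}$ via Chebyshev and the bootstrap $k$-th moment bound. A maximal inequality for $\beta$-mixing empirical processes (Doukhan--Massart--Rio, or Arcones--Yu) then yields stochastic equicontinuity uniformly in $h$, conditional on the data, in probability. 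The main obstacle will be making this bracketing argument pathwise: the truncation level $M$ and the bracketing constants must be controlled in probability uniformly in $n$, which is precisely what the moment hypotheses $\text{E}(X_1^k)<\infty$ for $k>2(p+4)$ and $\text{E}((X_t^3(1+\rho)^{X_t}))^{1+\delta}<\infty$ are designed to secure, via Lemma \ref{lemma_rng}. Once equicontinuity is in place, Drost's argument transfers verbatim and yields the claimed weak limit $\mathcal{S}^{\fett\alpha_0, G_0}$.
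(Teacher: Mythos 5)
Your plan is sound in outline, and its first step (the centering reduction: $\text{E}^*\Psi_n^*(\thn)=0$ via the score identity under $P^*$, combined with $\sqrt{n}\,\Psi_n(\thn)=o_P(1)$, so that $\mathcal{S}_n^{\thn,*}$ equals the $P^*$-centered process plus $o_P(1)$) matches the paper's opening move. From there, however, the two arguments genuinely diverge. The paper does \emph{not} re-derive Drost's weak convergence inside the bootstrap world. Instead it applies the Lipschitz/equicontinuity machinery of Lemma \ref{negl} a second time to replace the random evaluation point $\thn$ by the fixed point $\theta_0$, giving $\sqrt{n}\big(\Psi_n^*(\thn)-\Psi_n(\thn)\big)=\sqrt{n}\big(\Psi_n^*(\theta_0)-\text{E}^*\Psi_n^*(\theta_0)\big)+o_{P^*}(1)$; the leading term is then a centered bootstrap average of \emph{fixed} functions $g=(g_1,g_2)$ with $g_1=\dot l_{\fett\alpha}(\cdot\,;\fett\alpha_0,G_0)$ and $g_2h=A_{\fett\alpha_0,G_0}h-\int h\,dG_0$, i.e.\ a function of generalized means, so the conclusion follows directly from Corollary 4.2 of \citet{jewe}. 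What that buys is that the triangular-array CLT and, crucially, the tightness over $\mathcal{H}_1$ are inherited from an already-established bootstrap result rather than proved afresh. Your route keeps the process at $\thn$ and proposes to redo both the finite-dimensional CLT (mixing triangular-array CLT plus continuity of $\theta\mapsto\Sigma_\theta$) and the bracketing/stochastic-equicontinuity argument for $\beta$-mixing arrays from scratch; this is more self-contained but considerably heavier, and the step you yourself flag as the main obstacle --- making the truncation levels and bracketing constants uniform in $n$ in probability --- is precisely the work the paper sidesteps by outsourcing to \citet{jewe}. One inaccuracy to repair if you pursue your route: $\widehat G_n$ has finite support $\{u_-,\ldots,u_+\}$ by construction, so the exponential lower bound of Assumption \ref{asum_mix_gu} is \emph{not} preserved for $\widehat G_n$; what you actually need, and what Lemma \ref{lemma_rng} delivers, is that $\thn\in\Theta\times\mathcal{G}$ with probability tending to one together with geometric mixing and bounded $k$-th bootstrap moments, rather than membership of $\widehat G_n$ in $\mathcal{G}_u$.
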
 

\begin{proof}
Note that $\text{E}^*(\Psi_n^*(\widehat{\fett\alpha}_n, \widehat G_n))= \Psi_n(\widehat{\fett\alpha}_n, \widehat G_n)$ and by similar techniques as used in Lemma \ref{negl}, we get 
\begin{align*}
\sqrt{n} \big(\Psi_n^*(\widehat{\fett\alpha}_n, \widehat G_n) -\Psi_n(\widehat{\fett\alpha}_n, \widehat G_n) \big) = \sqrt{n} \big(\Psi_n^*(\fett\alpha_0,G_0) - \text{E}^*(\Psi_n^*(\fett\alpha_0,G_0))\big)+o_{P^*}(1).   
\end{align*}
Hence, recalling the definition of $\Psi_n^*(\fett\alpha,G)$ in \eqref{Psi_n1*} and \eqref{Psi_n2*}, the leading term on the last right-hand side can be represented as a function of generalized means. That is, we have
\begin{align*}
& \sqrt{n} \big(\Psi_n^*(\fett\alpha_0,G_0) - \text{E}^*(\Psi_n^*(\fett\alpha_0,G_0))\big)    \\
= &  \sqrt{n}\left(f\left(\frac{1}{n_m} \sum\limits_{t=1}^{n_m} g(X_t^*, \ldots, X_{t+m-1}^*)\right) -f\left(\text{E}^*\left(\frac{1}{n_m} \sum\limits_{t=1}^{n_m} g(X_t^*, \ldots, X_{t+m-1}^*)\right)\right)  \right),
\end{align*}
where $f$ is the identity function, $m=p+1$, $n_m=n-p-1$ and $g=(g_1,g_2)$ with
\begin{align*}
g_1(x_{t}, \ldots, x_{t-p}) &= \dot{l}_{\fett\alpha}(x_{t-p}, \ldots, x_t; \fett\alpha_0, G_0) %= \frac{\partial}{\partial \fett\alpha} \log(P_{(x_{t-1}, \ldots, x_{t-p}), x_t}^{\fett\alpha_0, G_0})
 \end{align*}
and 
\begin{align*}
g_2(x_t, \ldots, x_{t-p})h &= A_{\fett\alpha_0 G_0} h(x_{t-p}, \ldots, x_t) - \int hdG_0,  \quad h \in \mathcal{H}_1
\end{align*}
for $x_t,\ldots,x_{t-p}\in\mathbb{N}_0$. This representation as a function of generalized means allows to use Corollary 4.2 in \citet{jewe}.
% , which can be easily extended to the case of more general function as considered in \citet{jewe} that cover also $g=(g_1,g_2)$ defined above. 
Finally, for allowing the application of this result, according to Assumption 1 in \citet{jewe}, sufficient smoothness properties have to be fulfilled. As $f$ is just the identity, all smoothness properties hold. It remains to argue that all partial derivatives of $g_1(x_{t}, \ldots, x_{t-p})$ and $g_2(x_{t}, \ldots, x_{t-p})$ with respect to $x_t,\ldots,x_{t-p}$ are Lipschitz continuous. However, as both function $g_1$ and $g_2$ ask for arguments from $\mathbb{N}_0^{p+1}$ only, they can be arbitrarily extended to functions on $\mathbb{R}^{p+1}$ such that all sufficient smoothness conditions will be fulfilled by construction. 
\end{proof}

\newpage

\section{Auxiliary Results}\label{sec:appendixB}

\begin{Lemma}[NPMLE fulfills $\widehat \theta_n=(\widehat{\fett\alpha}_n,\widehat G_n)\in \Theta\times\mathcal{G}$ in probability] \label{lemma_rng}
Suppose Assumption \ref{ass1} holds true. Let $\text{E}(X_1^k) < \infty$ for some $k>2(p+4)$. Then, 
% For the bootstrap procedure described in Section \ref{sec:boot_algo}, 
with $\mathds{P}_{\nu_{\fett\alpha_0,G_0}, \fett\alpha_0, G_0}$-probability tending to one, it holds that $\widehat{G}_n 
% = \mathcal{L}^*(\varepsilon_t^*) 
\in \mathcal{G}=\{G \in \widetilde{\mathcal{G}}: 0 < G(0)  <1: \text{E}_G (\varepsilon^{p+4}_t) < \infty  \}$ %in $\mathds{P}_{\nu_{\alpha_0,G_0}, \alpha_0, G_0}$-probability 
and $\widehat{\fett\alpha}_n \in \Theta =\{ \fett\alpha \in (0,1)^p: \sum_{i=1}^p \alpha_i <1 \}$, respectively. Note that $\widehat{G}_n = \mathcal{L}^*(\varepsilon_t^*)$ and $\text{E}^* (\varepsilon_t^{*\, p+4})=\sum_{k=0}^\infty k^{p+4} \widehat{G}_n(k)$ for the bootstrap procedure described in Section \ref{sec:boot_algo}
\end{Lemma}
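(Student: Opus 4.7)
The plan is to verify each membership claim separately, leveraging the consistency result from Theorem \ref{cons} together with the finite-support structure of the NPMLE established in \citet{drost}.

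For $\widehat{\fett\alpha}_n \in \Theta$: Since $\Theta = \{\fett\alpha \in (0,1)^p : \sum_{i=1}^p \alpha_i < 1\}$ is an open subset of $[0,1]^p$ and $\fett\alpha_0 \in \Theta$ by Assumption \ref{ass1}, the consistency $\widehat{\fett\alpha}_n \overset{p}{\rightarrow} \fett\alpha_0$ from Theorem \ref{cons} immediately yields $\mathds{P}_{\nu_{\fett\alpha_0,G_0}, \fett\alpha_0, G_0}(\widehat{\fett\alpha}_n \in \Theta) \rightarrow 1$ as $n\to\infty$, since convergence in probability to an interior point of an open set implies membership in that set with probability tending to one.

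For $\widehat G_n \in \mathcal{G}$, I would split the two defining conditions. The condition $0 < \widehat G_n(0) < 1$ follows from the bound $|\widehat G_n(0) - G_0(0)| \leq \sum_{k=0}^\infty |\widehat G_n(k) - G_0(k)|\overset{p}{\rightarrow}0$ provided by Theorem \ref{cons}, combined with $G_0(0) \in (0,1)$ from Assumption \ref{ass1}; this shows $\widehat G_n(0) \overset{p}{\rightarrow} G_0(0)$, so $\widehat G_n(0)$ lies in $(0,1)$ with probability tending to one. The moment condition $\mathrm{E}_{\widehat G_n}(\varepsilon_t^{p+4}) < \infty$ is in fact trivial: by the characterization of the NPMLE recalled in Section \ref{sec:prelim}, the mass of $\widehat G_n$ is assigned to the finite set $\{u_-,\ldots,u_+\} \subset \mathbb{N}_0$ with $u_+ = \max_{0\leq t\leq n} X_t$, and therefore $\mathrm{E}_{\widehat G_n}(\varepsilon_t^{p+4}) = \sum_{k=u_-}^{u_+} k^{p+4}\widehat G_n(k) \leq u_+^{p+4} < \infty$ almost surely for each fixed $n$.

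For the supplementary remark, the identity $\mathrm{E}^*(\varepsilon_t^{*\,p+4}) = \sum_{k=0}^\infty k^{p+4}\widehat G_n(k)$ is a direct consequence of the bootstrap construction $\varepsilon_t^* \sim \widehat G_n$ conditional on the data in Step 2 of Section \ref{sec:boot_algo}, and its finiteness again follows from the finite support of $\widehat G_n$. The strengthened moment hypothesis $\mathrm{E}(X_1^k) < \infty$ for some $k > 2(p+4)$ is not needed for pure finiteness in this lemma but is listed here because subsequent arguments (e.g.\ Lemmas \ref{unif_zws}, \ref{negl}, \ref{asympt_norm}) require these bootstrap moments to be bounded in probability through a control of $u_+$ by the moments of the underlying observations. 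In summary, there is no genuine obstacle to this lemma: the proof is a routine combination of Drost's consistency theorem with the almost-sure finite-support property of the non-parametric MLE.
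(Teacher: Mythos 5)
Your treatment of $\widehat{\fett\alpha}_n\in\Theta$ and of $0<\widehat G_n(0)<1$ is correct and is essentially the paper's argument (consistency from Theorem \ref{cons} plus openness of the target set; the paper just writes out the $\varepsilon$-neighbourhood argument with $\varepsilon=G_0(0)/2$ explicitly). The problem is your dispatch of the moment condition. You declare $\mathrm{E}_{\widehat G_n}(\varepsilon_t^{p+4})<\infty$ ``trivial'' via the finite support $\{u_-,\dots,u_+\}$, and you are right that this settles the literal wording of the lemma; but it renders the hypothesis $\mathrm{E}(X_1^k)<\infty$ for some $k>2(p+4)$ inert, which should have been a warning sign. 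The paper's proof shows the genuinely stronger statement $\sum_{k=0}^\infty k^{p+4}\,\widehat G_n(k)=O_P(1)$, i.e.\ stochastic boundedness \emph{uniformly in $n$} of the $(p+4)$th moment of $\widehat G_n$, and that is the property actually invoked later (the bootstrap process must satisfy a uniform version of Assumption \ref{ass1} for the mixing and weak-convergence arguments). Your bound $u_+^{p+4}=\max_t X_t^{p+4}$ diverges as $n\to\infty$, so it gives no uniform control whatsoever; pointwise finiteness for each fixed $n$ is not the content of the lemma.

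The missing argument is the paper's part (ii): decompose
\begin{align*}
\sum_{k=0}^{\max_t X_t} k^{p+4}\widehat G_n(k)
=\sum_{k=0}^{\max_t X_t} k^{p+4}\bigl(\widehat G_n(k)-G_0(k)\bigr)
+\sum_{k=0}^{\infty} k^{p+4}G_0(k)
-\sum_{k=\max_t X_t+1}^{\infty} k^{p+4}G_0(k),
\end{align*}
bound the first term by $\max_t X_t^{p+4}\cdot n^{-1/2}\cdot\sqrt{n}\sum_k\bigl(\widehat G_n(k)-G_0(k)\bigr)$ and use the $\sqrt{n}$-rate from Theorem \ref{clt} together with a Markov/union-bound computation showing $\max(X_1,\dots,X_n)^{p+4}/\sqrt{n}=o_p(1)$ precisely because $\mathrm{E}(X_1^k)<\infty$ for some $k>2(p+4)$; the second term is $O(1)$ by Assumption \ref{ass1} and the third is $o_P(1)$. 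You correctly sensed that the moment hypothesis exists ``because subsequent arguments require these bootstrap moments to be bounded in probability,'' but that boundedness is exactly what this lemma is supposed to deliver, so deferring it leaves the main content unproved.
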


\begin{proof} We divide the proof into four parts. In the following, we will prove
\begin{itemize}
\item[(i)] $0< \widehat{G}_n(0) < 1$ in $P=\mathds{P}_{\nu_{\fett\alpha_0,G_0}, \fett\alpha_0, G_0}$-probability, that is, with $\mathds{P}_{\nu_{\fett\alpha_0,G_0}, \fett\alpha_0, G_0}$-probability tending to one,
\item[(ii)] $\sum_{k=0}^\infty k^{p+4} \widehat{G}_n(k)=O_P(1)$,
% $\text{E}^*_{\widehat{G}_n} (\varepsilon_t^{*\, p+4}) = O_p(1)$, 
\item[(iii)] $\widehat{\fett\alpha}_n \in (0,1)^p$ in $\mathds{P}_{\nu_{\fett\alpha_0,G_0}, \fett\alpha_0, G_0}$-probability,
\item[(iv)] $\sum\limits_{i=1}^p \widehat{\alpha}_{n,i} < 1$ in $\mathds{P}_{\nu_{\fett\alpha_0,G_0}, \fett\alpha_0, G_0}$-probability. 
\end{itemize}
%\noindent
For showing part (i), we make use of the equivalence
\begin{align*}
P(0< \widehat{G}_n(0) < 1) \underset{n \rightarrow \infty}{\rightarrow} 1    \quad \Leftrightarrow   \quad P(\widehat{G}_n(0)\in \{0,1\})\underset{n \rightarrow \infty}{\rightarrow} 0. 
\end{align*}
Further, we have 
\begingroup
\allowdisplaybreaks
\begin{align*}
P(\widehat{G}_n(0) \in \{0,1\}) &= P\left(\widehat{G}_n(0) \in \{0,1\}, |\widehat{G}_n(0)-G_0(0)| < \frac{G_0(0)}{2}\right) \\ & \quad +  P\left(\widehat{G}_n(0) \in \{0,1\}, |\widehat{G}_n(0)-G_0(0)| \geq \frac{G_0(0)}{2}\right) \\  &=: \text{I} + \text{II},
\end{align*}
\endgroup
where
\begin{align*}
 \text{I} 
 % &= P \left(\widehat{G}_n(0) \in \{0,1\}, |\widehat{G}_n(0)-G_0(0)| < \frac{G_0(0)}{2} \right)  \\ 
 =  P\left( \left\lbrace \widehat{G}_n(0) \in \{0,1\} \right\rbrace \cap \left\lbrace  \frac{-G_0(0)}{2} < \widehat{G}_n(0) - G_0(0) < \frac{G_0(0)}{2} \right\rbrace \right) = P\left(\emptyset\right) = 0
\end{align*}
and 
\begin{align*}
\text{II} 
% &= P\left(\widehat{G}_n(0) \in \{0,1\}, |\widehat{G}_n(0)-G_0(0)| \geq \frac{G_0(0)}{2}\right)  
\leq  P\left(| \widehat{G}_n(0)-G_0(0) | \geq \frac{G_0(0)}{2}\right) \underset{n \rightarrow \infty}{\rightarrow  0}
\end{align*}
according to Theorem \ref{cons}. For proving part (ii), we have %To prove:  $\text{E}^*_{\widehat{G}_n} \varepsilon_t^{*\, p+4} =O_p(1)$. \\ % in $\text{P}_{\nu_\tzgz, \tzgz}$-probability. \\ 
\begin{align*} & \sum\limits_{k=0}^\infty k^{p+4} \widehat{G}_n(k)  \\
&= \sum\limits_{k=0}^{\text{max}(X_1, \ldots, X_n)} k^{p+4} \widehat{G}_n(k) \\
&= \sum\limits_{k=0}^{\text{max}(X_1, \ldots, X_n)} k^{p+4} \big(\widehat{G}_n(k) - G_0(k) \big) + \sum\limits_{k=0}^{\text{max}(X_1, \ldots, X_n)} k^{p+4} G_0(k) \\
&= \sum\limits_{k=0}^{\text{max}(X_1, \ldots, X_n)} k^{p+4} \big(\widehat{G}_n(k) - G_0(k) \big) + \sum\limits_{k=0}^\infty k^{p+4} G_0(k)- \sum\limits_{k=\text{max}(X_1, \ldots, X_n)+1}^\infty  k^{p+4} G_0(k) \\ & =:  I + II + III.
\end{align*}
For term $I$, we have
% $I$: We need to show that $\widehat{G}_n(k) - G_0(k)$ goes faster towards zero than max($X_1, \ldots, X_n$) towards $\infty$:
\begin{align*}
I &= \sum\limits_{k=0}^{\text{max}(X_1, \ldots, X_n)} k^{p+4} \big(\widehat{G}_n(k) - G_0(k) \big)  \leq \text{max}(X_1, \ldots, X_n)^{p+4} \sum\limits_{k=0}^{\text{max}(X_1, \ldots, X_n)} \big(\widehat{G}_n(k) - G_0(k) \big).
\end{align*}
With the use of $\sqrt{n}\sum_{k=0}^\infty (\widehat{G}_n(k) - G_0(k) ) = O_p(1)$ (see Theorem \ref{clt}), the latter becomes
\begin{align*}
% &\frac{\text{max}(X_1, \ldots, X_n)^{p+4}}{\sqrt{n}} \left(\sqrt{n}\sum\limits_{k=0}^{\text{max}(X_1, \ldots, X_n)} (\widehat{G}_n(k) - G_0(k) )\right) \\
% & \leq 
\frac{\text{max}(X_1, \ldots, X_n)^{p+4}}{\sqrt{n}} \left(\sqrt{n}\sum\limits_{k=0}^\infty \big(\widehat{G}_n(k) - G_0(k) \big)\right) = O_p\left(\frac{\text{max}(X_1, \ldots, X_n)^{p+4}}{\sqrt{n}}\right).
\end{align*}
Further, for all $x>0$ and all $k \geq 0$, from Markov's inequality, we get
\begin{align*}
& \quad P \left(\frac{\text{max}(X_1, \ldots, X_n)^{p+4}}{\sqrt{n}} > x \right) = P \left(\text{max}(X_1, \ldots, X_n) > x^{1/(p+4)} n^{1/(2(p+4))}\right) \\ 
& = P\left(\bigcup_{i=1}^n \left\lbrace X_i >  x^{1/(p+4)} n^{1/(2(p+4))}  \right\rbrace  \right) \leq \sum\limits_{i=1}^n P\left(X_i >  x^{1/(p+4)} n^{1/(2(p+4))}\right) \\ 
&= n \cdot P\left(X_1 >  x^{1/(p+4)} n^{1/(2(p+4))}\right)  \leq \frac{n \cdot \text{E}(X_1^k)}{x^{k/(p+4)} n^{k/(2(p+4))}} \\ &= \frac{\text{E}(X_1^k)}{x^{k/(p+4)}} \cdot n^{1-k/(2(p+4))},
\end{align*} 
which converges to zero for $k>2(p+4)$ if $\text{E}(X_1^k) < \infty$. This implies $\frac{\text{max}(X_1, \ldots, X_n)^{p+4}}{\sqrt{n}} = o_p(1)$ such that $I=o_P(1)$ holds. As $II<\infty$, i.e., $II=O(1)$, by Assumption \ref{ass1}, the assertion follows from $III=o_P(1)$, which we show next. Here, we distinguish two cases. If the support of $(\varepsilon_t,t \in \mathbb{Z})$ is bounded, then $\sum_{k=\text{max}(X_1, \ldots, X_n)+1}^\infty  k^{p+4} G_0(k)=0$ after one observation $X_t$ attains a value greater or equal to the largest possible innovation which happens with probability tending to one. If the support is unbounded, we have $\text{max}(X_1, \ldots, X_n) \rightarrow \infty$ such that $\sum_{k=\text{max}(X_1, \ldots, X_n)+1}^\infty  k^{p+4} G_0(k) = o_P(1)$ as again by Assumption \ref{ass1}, we have $\text{E}_{G_0}(\varepsilon_t^{p+4}) < \infty$, i.e., $(k^{p+4} \, G_0(k), k\in\mathbb{N}_0)$ is summable. %Together with II $< \infty$ ( see Assumption \ref{ass1}), the assertion follows. 
For part (iii), we have that $ \widehat{\fett\alpha}_n \in (0,1)^p$ if and only if $\widehat{\alpha}_{n,i} \in (0,1)$ for all $i=1, \ldots, p$. Hence, the proof is analogous to the proof of part (i).

Similarly, we can show that $\sum\limits_{i=1}^p \widehat{\alpha}_{n,i} < 1$ holds in $\mathds{P}_{\nu_{\fett\alpha_0,G_0}, \fett\alpha_0, G_0}$-probability, because
\begin{align*}
\sum\limits_{i=1}^p \widehat{\alpha}_{n,i} &= \sum\limits_{i=1}^p (\widehat{\alpha}_{n,i}-\alpha_{0,i}) + \sum\limits_{i=1}^p \alpha_{0,i}, 
\end{align*}
where the first term converges to zero in probability due to Theorem \ref{cons} and the second term is smaller than 1 according to Assumption \ref{ass1}. %\qed
\end{proof}

\bigskip

\begin{Lemma}[Partial derivatives of $\Psi_n$]\label{lem_frechet_derivation}
Under the assumptions of Lemma \ref{frechet}, for $p=1$ and for all $k\in\mathbb{N}_0$, we have
\begin{align}
\frac{\partial \Psi_{n1}(\alpha,G)}{\partial \alpha} &= \frac{1}{n} \sum\limits_{t=0}^n \frac{\frac{\partial^2}{\partial \alpha^2} P_{X_{t-1},X_t}^{\alpha,G}}{P_{X_{t-1},X_t}^{\alpha,G}}
- \frac{1}{n} \sum\limits_{t=0}^n \dot{l}_\alpha^2(X_{t-1},X_t;\alpha,G),    \label{partial_Psi_n1_alpha}\\
\frac{\partial \Psi_{n1}(\alpha,G)}{\partial G(k)} &= \frac{1}{n} \sum\limits_{t=0}^n  \frac{\frac{\partial}{\partial G(k)} \frac{\partial}{\partial \alpha} P_{X_{t-1},X_t}^{\alpha,G} }{P_{X_{t-1},X_t}^{\alpha,G}}
- \frac{1}{n} \sum\limits_{t=0}^n  \dot{l}_\alpha(X_{t-1},X_t;\alpha,G) \frac{\frac{\partial}{\partial G(k)}P_{X_{t-1},X_t}^{\alpha,G}}{P_{X_{t-1},X_t}^{\alpha,G}}\label{partial_Psi_n1_G}
\end{align}
and, for $h\in\mathcal{H}_1$,
\begin{align}
\frac{\partial \Psi_{n2}(\alpha,G)}{\partial \alpha}h 
&=  \frac{1}{n} \sum\limits_{t=0}^n  \sum\limits_{j=0}^\infty h(j) \frac{\Big(\frac{\partial}{\partial \alpha} P^{\alpha,G}(\varepsilon_t =j, X_t=x_t|X_{t-1}=x_{t-1})\Big)_{|_{(x_t,x_{t-1})=(X_t,X_{t-1})}}}{P^{\alpha,G}_{X_{t-1},X_t}}  \notag \\
&   \qquad 
- \frac{1}{n} \sum\limits_{t=0}^n \sum\limits_{j=0}^\infty h(j) \dot{l}_\alpha(X_{t-1},X_t;\alpha,G) A_{\alpha,G}h(X_{t-1},X_t), \label{partial_Psi_n2_alpha}    \\
\frac{\partial \Psi_{n2}(\alpha,G)}{\partial G(k)}h &= \frac{1}{n} \sum\limits_{t=0}^n  \sum \limits_{j=0}^\infty h(j) \frac{\Big(\frac{\partial}{\partial G(k)} P^{\alpha,G}(\varepsilon_t=j, X_t=x_t | X_{t-1}=x_{t-1})\Big)_{|_{(x_t,x_{t-1})=(X_t,X_{t-1})}}}{P^{\alpha,G}_{X_{t-1},X_t}}   \notag
 \\
& \quad  - \frac{1}{n} \sum\limits_{t=0}^n A_{\alpha,G}h(X_{t-1},X_t) \frac{ \Big(\frac{\partial}{\partial G(k)} P^{\alpha,G}(X_t=x_t | X_{t-1}=x_{t-1})\Big)_{|_{(x_t,x_{t-1})=(X_t,X_{t-1})}}}{P^{\alpha,G}_{X_{t-1},X_t}}  \notag    \\
& \quad - h(k).\label{partial_Psi_n2_G}
\end{align}
\end{Lemma}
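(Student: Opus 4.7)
The plan is to obtain all four displayed formulas by direct application of the chain and quotient rules to the definitions of $\Psi_{n1}$ and $\Psi_{n2}$ in \eqref{eq:psi_n1}--\eqref{eq:psi_n2}, once we substitute the explicit expressions for $\dot l_\alpha$ from \eqref{dot_l} and for $A_{\alpha,G}h$ from \eqref{Ah}. Throughout, I abbreviate $P := P^{\alpha,G}_{(x_{t-1},\ldots,x_{t-p}),x_t}$, so that $\dot l_\alpha = (\partial_\alpha P)/P$. Partial derivatives with respect to $G(k)$ are interpreted coordinate-wise, treating the pmf entries as algebraically independent; this is the formally correct notion in view of the ambient linear span $\mathrm{lin}([0,1]^p\times\widetilde{\mathcal{G}})$ on which the Fr\'echet derivative acts.

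For \eqref{partial_Psi_n1_alpha} I would apply the quotient rule to $\partial_\alpha[(\partial_\alpha P)/P]$, giving $(\partial_\alpha^2 P)/P - (\partial_\alpha P)^2/P^2$, and identify the second term with $\dot l_\alpha^2$; summing over $t$ and dividing by $n$ yields the claim. For \eqref{partial_Psi_n1_G}, the same rule applied in $G(k)$ produces $(\partial_{G(k)}\partial_\alpha P)/P - (\partial_\alpha P)(\partial_{G(k)} P)/P^2$, and the second summand is then rewritten as $\dot l_\alpha \cdot (\partial_{G(k)}P)/P$.

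For the $\Psi_{n2}$ formulas, I would first use the closed-form expression \eqref{closed_form} for $P^{\alpha,G}(\varepsilon_t=j,X_t=x_t\mid X_{t-1}=x_{t-1})$ to rewrite \eqref{Ah} as the ratio
\begin{align*}
A_{\alpha,G}h(x_{t-1},x_t) = \frac{\sum_{j=0}^\infty h(j)\, P^{\alpha,G}(\varepsilon_t=j,X_t=x_t\mid X_{t-1}=x_{t-1})}{P^{\alpha,G}_{x_{t-1},x_t}},
\end{align*}
noting that the numerator is a finite sum (the joint probability vanishes outside $\{\max(0,x_t-x_{t-1}),\ldots,x_t\}$) so that termwise differentiation in $j$ is immediate. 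The quotient rule then yields \eqref{partial_Psi_n2_alpha}, because $\int h\,dG$ does not depend on $\alpha$ and $(\partial_\alpha P)/P = \dot l_\alpha$ appears as the Jacobian factor multiplying $A_{\alpha,G}h$. Similarly, \eqref{partial_Psi_n2_G} follows by differentiating in $G(k)$, with the additional $-h(k)$ arising from $\partial_{G(k)}\int h\,dG = h(k)$.

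The computation is routine; there is essentially no obstacle beyond bookkeeping. The only points requiring mild care are the interchange of $\partial_{G(k)}$ with the sum over $j$ defining $A_{\alpha,G}h$, which is justified by the finite support noted above, and the preservation of the prefactor $1/n$ (rather than $1/(n+1)$), in line with the convention of \citet{drost} recalled in the footnote following \eqref{tpinar1}. The extension from $p=1$ to general $p$ is purely notational, since all derivatives act on the single factor $P^{\alpha,G}_{(x_{t-1},\ldots,x_{t-p}),x_t}$ rather than on the conditioning history.
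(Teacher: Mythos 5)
Your proposal is correct and follows essentially the same route as the paper's own proof: quotient-rule differentiation of $\dot l_\alpha=(\partial_\alpha P)/P$ for \eqref{partial_Psi_n1_alpha}--\eqref{partial_Psi_n1_G}, and rewriting $A_{\alpha,G}h$ as the ratio $\sum_j h(j)P^{\alpha,G}(\varepsilon_t=j,X_t=x_t\mid X_{t-1}=x_{t-1})/P^{\alpha,G}_{x_{t-1},x_t}$ (the paper's representation \eqref{eq: A_expanded_representation}) before applying the quotient rule again, with the $-h(k)$ term coming from $\partial_{G(k)}\int h\,dG=h(k)$. Your remarks on the finite support of the $j$-sum and coordinate-wise differentiation in $G(k)$ are consistent with what the paper does implicitly.
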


\begin{proof}
For the first two derivatives in \eqref{partial_Psi_n1_alpha} and \eqref{partial_Psi_n1_G}, recall from \eqref{eq:psi_n1} that
\begin{align*}
\Psi_{n1}(\alpha, G) &= \frac{1}{n} \sum\limits_{t=0}^n \dot l_{\alpha}(X_{t-1},X_t; \alpha, G),
\end{align*}
where 
\begin{align*}
\dot l_\alpha (x_{t-1},x_t; \alpha, G) = \frac{\partial}{\partial \alpha} \log \left(P^{\alpha, G}_{x_{t-1},x_t}\right).    
\end{align*}
Hence, for \eqref{partial_Psi_n1_alpha}, we get
\begin{align*} %\label{eq:first_comp}
\frac{\partial \Psi_{n1}(\alpha,G)}{\partial \alpha} &= \frac{\partial}{\partial \alpha} \frac{1}{n} \sum \limits_{t=0}^n \dot{l}_\alpha(X_{t-1},X_t;\alpha,G)= \frac{\partial}{\partial \alpha} \frac{1}{n} \sum \limits_{t=0}^n \frac{\partial}{\partial \alpha} \log \left(P_{X_{t-1},X_t}^{\alpha,G} \right) \notag \\
&= \frac{1}{n} \sum\limits_{t=0}^n \frac{\partial^2}{\partial \alpha^2} \log \left(P_{X_{t-1},X_t}^{\alpha,G} \right)= \frac{1}{n} \sum\limits_{t=0}^n \frac{P_{X_{t-1},X_t}^{\alpha,G} \big(\frac{\partial^2}{\partial \alpha^2} P_{X_{t-1},X_t}^{\alpha,G}\big)- \left( \frac{\partial}{\partial \alpha} P_{X_{t-1},X_t}^{\alpha,G} \right)^2}{\left(P_{X_{t-1},X_t}^{\alpha,G}\right)^2} \notag \\
&=  \frac{1}{n} \sum\limits_{t=0}^n  \left( \frac{\frac{\partial^2}{\partial \alpha^2} P_{X_{t-1},X_t}^{\alpha,G}}{P_{X_{t-1},X_t}^{\alpha,G}} - \left( \frac{\frac{\partial}{\partial \alpha} P_{X_{t-1},X_t}^{\alpha,G}}{P_{X_{t-1},X_t}^{\alpha,G}}  \right)^2     \right) \notag \\
&= \frac{1}{n} \sum\limits_{t=0}^n  \left( \frac{\frac{\partial^2}{\partial \alpha^2} P_{X_{t-1},X_t}^{\alpha,G}}{P_{X_{t-1},X_t}^{\alpha,G}} - \dot{l}_\alpha^2(X_{t-1},X_t;\alpha,G)  \right) \notag \\
&= \frac{1}{n} \sum\limits_{t=0}^n \frac{\frac{\partial^2}{\partial \alpha^2} P_{X_{t-1},X_t}^{\alpha,G}}{P_{X_{t-1},X_t}^{\alpha,G}}
- \frac{1}{n} \sum\limits_{t=0}^n \dot{l}_\alpha^2(X_{t-1},X_t;\alpha,G).
\end{align*}
Similarly, for \eqref{partial_Psi_n1_G} and for all $k\in\mathbb{N}_0$, we have
\begin{align*} %\label{eq:second_comp}
\frac{\partial \Psi_{n1}(\alpha,G)}{\partial G(k)} &= \frac{\partial}{\partial G(k)} \frac{1}{n} \sum \limits_{t=0}^n \dot{l}_\alpha(X_{t-1},X_t;\alpha,G) \notag = \frac{1}{n} \sum\limits_{t=0}^n \frac{\partial}{\partial G(k)} \frac{\frac{\partial}{\partial \alpha}P_{X_{t-1},X_t}^{\alpha,G} }{P_{X_{t-1},X_t}^{\alpha,G}} \notag \\
&= \frac{1}{n} \sum\limits_{t=0}^n \frac{\left( \frac{\partial}{\partial G(k)} \frac{\partial}{\partial \alpha} P_{X_{t-1},X_t}^{\alpha,G}\right)P_{X_{t-1},X_t}^{\alpha,G} - \left(\frac{\partial}{\partial \alpha} P_{X_{t-1},X_t}^{\alpha,G}\right)\left(\frac{\partial}{\partial G(k)} P_{X_{t-1},X_t}^{\alpha,G}\right)  }{ \left( P_{X_{t-1},X_t}^{\alpha,G}  \right)^2} \notag \\
&= \frac{1}{n} \sum\limits_{t=0}^n \left( \frac{\frac{\partial}{\partial G(k)} \frac{\partial}{\partial \alpha} P_{X_{t-1},X_t}^{\alpha,G} }{P_{X_{t-1},X_t}^{\alpha,G}} - \frac{\left(\frac{\partial}{\partial \alpha} P_{X_{t-1},X_t}^{\alpha,G}\right)\left(\frac{\partial}{\partial G(k)} P_{X_{t-1},X_t}^{\alpha,G}\right)}{\left( P_{X_{t-1},X_t}^{\alpha,G} \right)^2}  \right) \notag \\
&= \frac{1}{n} \sum\limits_{t=0}^n \left( \frac{\frac{\partial}{\partial G(k)} \frac{\partial}{\partial \alpha} P_{X_{t-1},X_t}^{\alpha,G} }{P_{X_{t-1},X_t}^{\alpha,G}} - \dot{l}_\alpha(X_{t-1},X_t;\alpha,G) \frac{\frac{\partial}{\partial G(k)}P_{X_{t-1},X_t}^{\alpha,G}}{P_{X_{t-1},X_t}^{\alpha,G}}  \right) \notag \\
&= \frac{1}{n} \sum\limits_{t=0}^n  \frac{\frac{\partial}{\partial G(k)} \frac{\partial}{\partial \alpha} P_{X_{t-1},X_t}^{\alpha,G} }{P_{X_{t-1},X_t}^{\alpha,G}}
- \frac{1}{n} \sum\limits_{t=0}^n  \dot{l}_\alpha(X_{t-1},X_t;\alpha,G) \frac{\frac{\partial}{\partial G(k)}P_{X_{t-1},X_t}^{\alpha,G}}{P_{X_{t-1},X_t}^{\alpha,G}}.
\end{align*}
For the last two derivatives \eqref{partial_Psi_n2_alpha} and \eqref{partial_Psi_n2_G}, recall from \eqref{eq:psi_n2} that 
\begin{align*}
\Psi_{n2}(\alpha, G) h = \frac{1}{n} \sum\limits_{t=0}^n \left(A_{\alpha,G} h(X_{t-1},X_t)-\int h dG  \right), \quad h \in \mathcal{H}_1,    
\end{align*}
holds, where    
\begin{align}
A_{\alpha,G} h(x_{t-1},x_t) =& \text{E}_{\alpha,G}(h(\varepsilon_t)| X_t=x_t, X_{t-1}=x_{t-1})  \notag  \\
=& \sum\limits_{j=0}^\infty h(j) P^{\alpha,G}(\varepsilon_t =j |X_t=x_t,X_{t-1}=x_{t-1}) \notag \\
=& \sum\limits_{j=0}^\infty h(j) \frac{P^{\alpha,G}(\varepsilon_t =j, X_t=x_t,X_{t-1}=x_{t-1} )}{P^{\alpha,G}(X_t=x_t,X_{t-1}=x_{t-1})} \frac{P^{\alpha,G}(X_{t-1}=x_{t-1})}{P^{\alpha,G}(X_{t-1}=x_{t-1})} \notag \\
=& \sum\limits_{j=0}^\infty h(j) \frac{P^{\alpha,G}(\varepsilon_t =j, X_t=x_t,X_{t-1}=x_{t-1} )}{ P^{\alpha,G}(X_t=x_t|X_{t-1}=x_{t-1})  P^{\alpha,G}(X_{t-1}=x_{t-1})} \notag \\
=& \sum\limits_{j=0}^\infty h(j) \frac{P^{\alpha,G}(\varepsilon_t =j, X_t=x_t|X_{t-1}=x_{t-1} )}{P^{\alpha,G}(X_t=x_t|X_{t-1}=x_{t-1}) }.    \label{eq: A_expanded_representation}
\end{align}
Hence, for \eqref{partial_Psi_n2_alpha}, we get
\begin{align*} %\label{eq:third_comp}
\frac{\partial \Psi_{n2}(\alpha,G)}{\partial \alpha}h = \frac{\partial}{\partial \alpha} \frac{1}{n} \sum\limits_{t=0}^n \left( A_{\alpha,G} h(X_{t-1},X_t) - \int hdG \right) 
=  \frac{1}{n} \sum\limits_{t=0}^n  \frac{\partial}{\partial \alpha}  A_{\alpha,G} h(X_{t-1},X_t),
\end{align*}
where, for all $x_t,x_{t-1}\in\mathbb{N}_0$, using \eqref{eq: A_expanded_representation}, we have
\begin{align*} %\label{ah_repres}
& \frac{\partial}{\partial \alpha} A_{\alpha,G} h(x_{t-1},x_t)    \\ 
=& \sum\limits_{j=0}^\infty h(j) \frac{\partial}{\partial \alpha} \frac{P^{\alpha,G}(\varepsilon_t =j, X_t=x_t|X_{t-1}=x_{t-1} )}{P^{\alpha,G}(X_t=x_t|X_{t-1}=x_{t-1}) } \\
=& \sum\limits_{j=0}^\infty h(j) \Bigg[\frac{\left(\frac{\partial}{\partial \alpha} P^{\alpha,G}(\varepsilon_t =j, X_t=x_t|X_{t-1}=x_{t-1} )\right) P^{\alpha,G}(X_t=x_t|X_{t-1}=x_{t-1})}{\left( P^{\alpha,G}(X_t=x_t|X_{t-1}=x_{t-1})  \right)^2}  \\
& \quad -\frac{\left(\frac{\partial}{\partial \alpha} P^{\alpha,G}(X_t=x_t|X_{t-1}=x_{t-1})\right)P^{\alpha,G}(\varepsilon_t =j, X_t=x_t|X_{t-1}=x_{t-1} )}{\left( P^{\alpha,G}(X_t=x_t|X_{t-1}=x_{t-1})  \right)^2}\Bigg]
 \\
% &= \sum\limits_{j=0}^\infty h(j) \frac{ \splitfrac{\frac{\partial}{\partial \alpha} P^{\alpha,G}(\varepsilon_t =j, X_t=x_t|X_{t-1}=x_{t-1} ) \cdot P^{\alpha,G}(X_t=x_t|X_{t-1}=x_{t-1})} {\hspace{-0.4cm}- \frac{\partial}{\partial \alpha} P^{\alpha,G}(X_t=x_t|X_{t-1}=x_{t-1}) \cdot P^{\alpha,G}(\varepsilon_t =j, X_t=x_t|X_{t-1}=x_{t-1} )  }}{\left( P^{\alpha,G}(X_t=x_t|X_{t-1}=x_{t-1})  \right)^2} \notag \\
=& \sum\limits_{j=0}^\infty h(j) \frac{\frac{\partial}{\partial \alpha} P^{\alpha,G}(\varepsilon_t =j, X_t=x_t|X_{t-1}=x_{t-1} )}{P^{\alpha,G}(X_t=x_t|X_{t-1}=x_{t-1})} \\ 
& \quad - \sum\limits_{j=0}^\infty h(j) \dot{l}_\alpha(x_{t-1},x_t;\alpha,G) \frac{P^{\alpha,G}(\varepsilon_t =j, X_t=x_t|X_{t-1}=x_{t-1} )}{P^{\alpha,G}(X_t=x_t|X_{t-1}=x_{t-1})} \\
=&  \sum\limits_{j=0}^\infty h(j) \frac{\frac{\partial}{\partial \alpha} P^{\alpha,G}(\varepsilon_t =j, X_t=x_t|X_{t-1}=x_{t-1} )}{P^{\alpha,G}(X_t=x_t|X_{t-1}=x_{t-1})} - \dot{l}_\alpha(x_{t-1},x_t;\alpha,G) A_{\alpha,G}h(x_{t-1},x_t).
\end{align*}
Recalling that $P^{\alpha,G}(X_t=x_t|X_{t-1}=x_{t-1}) = P^{\alpha,G}_{x_{t-1},x_t}$, altogether, we have
\begin{align*}
\frac{\partial \Psi_{n2}(\alpha,G)}{\partial \alpha}h
&=  \frac{1}{n} \sum\limits_{t=0}^n  \sum\limits_{j=0}^\infty h(j) \frac{\Big(\frac{\partial}{\partial \alpha} P^{\alpha,G}(\varepsilon_t =j, X_t=x_t|X_{t-1}=x_{t-1})\Big)_{|_{(x_t,x_{t-1})=(X_t,X_{t-1})}}}{P^{\alpha,G}_{X_{t-1},X_t}}  \\
&   \qquad 
- \frac{1}{n} \sum\limits_{t=0}^n \dot{l}_\alpha(X_{t-1},X_t;\alpha,G) A_{\alpha,G}h(X_{t-1},X_t). \notag
\end{align*}
Similarly, for \eqref{partial_Psi_n2_G} and for all $k\in\mathbb{N}_0$, we have
\begin{align} \label{eq:fourth_comp}
\frac{\partial \Psi_{n2}(\alpha,G))}{\partial G(k)}h &= \frac{\partial}{\partial G(k)} \frac{1}{n} \sum\limits_{t=0}^n \left( A_{\alpha,G} h(X_{t-1},X_t) - \int hdG   \right), \notag  \\
&= \frac{1}{n} \sum\limits_{t=0}^n  \frac{\partial}{\partial G(k)} A_{\alpha,G} h(X_{t-1},X_t) - \frac{1}{n} \sum\limits_{t=0}^n  \frac{\partial}{\partial G(k)} \int hdG, \notag
\end{align}
where 
\begin{align}
\frac{1}{n} \sum\limits_{t=0}^n  \frac{\partial}{\partial G(k)} \int hdG = \frac{\partial}{\partial G(k)} \sum_{j=0}^\infty h(j)G(j) =  \sum_{j=0}^\infty h(j)\frac{\partial}{\partial G(k)}G(j) = h(k)
\end{align}
and, for all $x_t,x_{t-1}\in\N_0$, using \eqref{eq: A_expanded_representation}, we have
\begin{align*}
& \frac{1}{n} \sum\limits_{t=0}^n  \frac{\partial}{\partial G(k)} A_{\alpha,G} h(x_{t-1},x_t)  \\
=& \frac{1}{n} \sum\limits_{t=0}^n  \sum \limits_{j=0}^\infty h(j) \frac{\partial}{\partial G(k)} \left(\frac{P^{\alpha,G}(\varepsilon_t=j, X_t=x_t | X_{t-1}=x_{t-1})}{P^{\alpha,G}(X_t=x_t | X_{t-1}=x_{t-1})}\right) \notag \\
=& \frac{1}{n} \sum\limits_{t=0}^n  \sum \limits_{j=0}^\infty h(j)\Bigg[ \frac{\left(\frac{\partial}{\partial G(k)} P^{\alpha,G}(\varepsilon_t=j, X_t=x_t | X_{t-1}=x_{t-1})\right) P^{\alpha,G}(X_t=x_t | X_{t-1}=x_{t-1})}{\left( P^{\alpha,G}(X_t=x_t | X_{t-1}=x_{t-1}) \right)^2} \notag \\
& \quad-\frac{\left(\frac{\partial}{\partial G(k)} P^{\alpha,G}(X_t=x_t | X_{t-1}=x_{t-1})\right) P^{\alpha,G}(\varepsilon_t=j, X_t=x_t | X_{t-1}=x_{t-1})}{\left( P^{\alpha,G}(X_t=x_t | X_{t-1}=x_{t-1}) \right)^2}\Bigg] \notag \\
% =& \frac{1}{n} \sum\limits_{t=0}^n  \sum \limits_{j=0}^\infty h(j) \frac{ \splitfrac{ \frac{\partial}{\partial G(k)} P^{\alpha,G}(\varepsilon_t=j, X_t=x_t | X_{t-1}=x_{t-1}) \cdot P^{\alpha,G}(X_t=x_t | X_{t-1}=x_{t-1})}{ \hspace{-0.6cm} - \frac{\partial}{\partial G(k)} P^{\alpha,G}(X_t=x_t | X_{t-1}=x_{t-1}) \cdot P^{\alpha,G}(\varepsilon_t=j, X_t=x_t | X_{t-1}=x_{t-1}) } }{\left( P^{\alpha,G}(X_t=x_t | X_{t-1}=x_{t-1}) \right)^2} \notag \\
=& \frac{1}{n} \sum\limits_{t=0}^n  \sum \limits_{j=0}^\infty h(j) \frac{\frac{\partial}{\partial G(k)} P^{\alpha,G}(\varepsilon_t=j, X_t=x_t | X_{t-1}=x_{t-1})}{P^{\alpha,G}(X_t=x_t | X_{t-1}=x_{t-1})}  \notag \\
& \quad  - \frac{1}{n} \sum\limits_{t=0}^n  \sum \limits_{j=0}^\infty h(j) \frac{\left(\frac{\partial}{\partial G(k)} P^{\alpha,G}(X_t=x_t | X_{t-1}=x_{t-1}) \right) P^{\alpha,G}(\varepsilon_t=j, X_t=x_t | X_{t-1}=x_{t-1}) }{\left( P^{\alpha,G}(X_t=x_t | X_{t-1}=x_{t-1}) \right)^2}\notag \\
= & \frac{1}{n} \sum\limits_{t=0}^n  \sum \limits_{j=0}^\infty h(j) \frac{\frac{\partial}{\partial G(k)} P^{\alpha,G}(\varepsilon_t=j, X_t=x_t | X_{t-1}=x_{t-1})}{P^{\alpha,G}(X_t=x_t | X_{t-1}=x_{t-1})}
 \\
& \quad  - \frac{1}{n} \sum\limits_{t=0}^n A_{\alpha,G}h(x_{t-1},x_t) \frac{ \frac{\partial}{\partial G(k)} P^{\alpha,G}(X_t=x_t | X_{t-1}=x_{t-1})}{P^{\alpha,G}(X_t=x_t | X_{t-1}=x_{t-1})}. \notag
\end{align*}
Consequently, altogether, we have
\begin{align*}
& \frac{\partial \Psi_{n2}(\alpha,G))}{\partial G(k)}h  \\
&= \frac{1}{n} \sum\limits_{t=0}^n  \sum \limits_{j=0}^\infty h(j) \frac{\Big(\frac{\partial}{\partial G(k)} P^{\alpha,G}(\varepsilon_t=j, X_t=x_t | X_{t-1}=x_{t-1})\Big)_{|_{(x_t,x_{t-1})=(X_t,X_{t-1})}}}{P^{\alpha,G}_{X_{t-1},X_t}}
 \\
& \quad  - \frac{1}{n} \sum\limits_{t=0}^n A_{\alpha,G}h(X_{t-1},X_t) \frac{ \Big(\frac{\partial}{\partial G(k)} P^{\alpha,G}(X_t=x_t | X_{t-1}=x_{t-1})\Big)_{|_{(x_t,x_{t-1})=(X_t,X_{t-1})}}}{P^{\alpha,G}_{X_{t-1},X_t}} - h(k).  
\end{align*}
\end{proof}

\begin{Lemma} \label{Lipschitz_resultsII}
Suppose the Assumptions of Lemma \ref{unif_zws} hold.
% \ref{ass1} and \ref{asum_mix_gu} hold. 
% Let $\text{E}(X_1^k) < \infty$ for some $k>2(p+4)$ and $\text{E}(X_t^3(1+\rho)^{X_t})<\infty$ for some $\rho>0$.  
Then, for $||\xi_n - \tz || \leq \delta$, we have 
\begin{align} \label{eq:II_final_auxiliary_Lemma}
& \left| \dot{l}_\alpha(X_{t-1},X_t;\tz) - \dot{l}_\alpha(X_{t-1},X_t;\xi_n)  \right| \notag \\
% \leq & \frac{2 X_{t-1} \delta + C\delta X_{t-1}^2\left(1+\rho\right)^{X_{t-1}-1}}{P^{\tz}_{X_{t-1}, X_{t}}} + \frac{2 X_{t-1} \left( C\delta X_{t-1}\left(1+\rho\right)^{X_{t-1}-1}+ \delta   \right)  }{P^{\tz}_{X_{t-1}, X_{t}} P^{\xi_n}_{X_{t-1}, X_{t}}}   \\
\leq & \delta\left(\frac{2 X_{t-1} + C X_{t-1}^2\left(1+\rho\right)^{X_{t-1}-1}}{P^{\tz}_{X_{t-1}, X_{t}}} + \frac{2 X_{t-1} \left( C X_{t-1}\left(1+\rho\right)^{X_{t-1}-1}+ 1  \right)  }{P^{\tz}_{X_{t-1}, X_{t}} P^{\xi_n}_{X_{t-1}, X_{t}}}\right),
\end{align}
for some (generic) constant $C=C(\delta)$ and some $\rho=\rho(\delta)$, which becomes arbitrarily small for $\delta$ sufficiently small.
\end{Lemma}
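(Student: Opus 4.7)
Abbreviate $A(\theta) := \frac{\partial}{\partial \alpha} P^{\alpha,G}_{X_{t-1}, X_t}$ and $B(\theta) := P^{\alpha,G}_{X_{t-1}, X_t}$ for $\theta = (\alpha, G)$, so that $\dot l_\alpha(X_{t-1}, X_t; \theta) = A(\theta)/B(\theta)$. The plan is to start from the algebraic identity
\[
\frac{A(\theta_0)}{B(\theta_0)} - \frac{A(\xi_n)}{B(\xi_n)} = \frac{A(\theta_0) - A(\xi_n)}{B(\theta_0)} + A(\xi_n) \cdot \frac{B(\xi_n) - B(\theta_0)}{B(\theta_0)\, B(\xi_n)},
\]
which already reproduces the two-term denominator structure $1/P^{\theta_0}_{X_{t-1}, X_t}$ and $1/(P^{\theta_0}_{X_{t-1}, X_t} P^{\xi_n}_{X_{t-1}, X_t})$ of the right-hand side of \eqref{eq:II_final_auxiliary_Lemma}. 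It then suffices to verify the three pointwise bounds (i) $|A(\xi_n)| \leq 2 X_{t-1}$, (ii) $|A(\theta_0) - A(\xi_n)| \leq \delta\bigl(2 X_{t-1} + C X_{t-1}^2 (1+\rho)^{X_{t-1}-1}\bigr)$, and (iii) $|B(\theta_0) - B(\xi_n)| \leq \delta\bigl(C X_{t-1} (1+\rho)^{X_{t-1}-1} + 1\bigr)$.

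Bound (i) is the pointwise estimate $|\partial_\alpha P^{\alpha,G}_{x,y}| \leq 2x$ that is used repeatedly in the proof of Lemma \ref{unif_zws}. For (ii) and (iii) I would interpolate along the path $(\alpha_0, G_0) \to (\alpha_{\xi_n}, G_0) \to (\alpha_{\xi_n}, G_{\xi_n})$ and handle the two legs separately. Along the $G$-leg both $A$ and $B$ are linear in $G$, so the increment reduces to $\sum_{j=0}^{\min(X_t, X_{t-1})} c_j(\alpha_{\xi_n})(G_0(X_t - j) - G_{\xi_n}(X_t - j))$ with coefficients bounded by $|c_j^B(\alpha)| \leq 1$ for $B$ and $|c_j^A(\alpha)| \leq 2 X_{t-1}$ for $A$ (the latter using $j\binom{X_{t-1}}{j} = X_{t-1}\binom{X_{t-1}-1}{j-1}$ and the analogous identity for $(X_{t-1}-j)\binom{X_{t-1}}{j}$). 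Combined with the trivial estimate $\sup_k |G_0(k) - G_{\xi_n}(k)| \leq \sum_k |G_0(k) - G_{\xi_n}(k)| \leq \|\xi_n - \theta_0\| \leq \delta$, this contributes $\delta$ to (iii) and $2\delta X_{t-1}$ to (ii).

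The $\alpha$-leg I would handle by a one-dimensional mean value theorem, so the task reduces to uniform control of $|\partial_\alpha P^{\alpha, G_0}_{X_{t-1}, X_t}|$ and $|\partial_\alpha^2 P^{\alpha, G_0}_{X_{t-1}, X_t}|$ on the entire $\delta$-neighbourhood $\{|\alpha - \alpha_0| \leq \delta\}$, rather than just at $\alpha_0$. This uniformisation is the main technical obstacle, and it is precisely where the exponential factor $(1+\rho)^{X_{t-1}-1}$ is born: writing $\alpha^j(1-\alpha)^{X_{t-1}-j} = \alpha_0^j(1-\alpha_0)^{X_{t-1}-j} \cdot (\alpha/\alpha_0)^j \bigl((1-\alpha)/(1-\alpha_0)\bigr)^{X_{t-1}-j}$ and choosing $\rho = \rho(\delta)$ so that both ratios lie in $[1-\rho, 1+\rho]$ for all $|\alpha - \alpha_0| \leq \delta$, the sharp bounds at $\alpha_0$, namely $|\partial_\alpha P^{\alpha_0, G_0}_{x,y}| \leq 2x$ and $|\partial_\alpha^2 P^{\alpha_0, G_0}_{x,y}| \leq C x^2/(1-\alpha_0)^2$, upgrade to the uniform estimates $\sup_{|\alpha - \alpha_0| \leq \delta}|\partial_\alpha P^{\alpha, G_0}_{X_{t-1}, X_t}| \leq C X_{t-1}(1+\rho)^{X_{t-1}-1}$ and $\sup_{|\alpha - \alpha_0| \leq \delta}|\partial_\alpha^2 P^{\alpha, G_0}_{X_{t-1}, X_t}| \leq C X_{t-1}^2(1+\rho)^{X_{t-1}-1}$, with $\rho \downarrow 0$ as $\delta \downarrow 0$. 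This is exactly what forces the strengthened moment hypothesis $\text{E}(X_t^3(1+\rho)^{X_t})^{1+\delta} < \infty$ in Lemma \ref{unif_zws}. Collecting the two legs delivers (ii) and (iii), and substitution into the algebraic identity above produces the bound \eqref{eq:II_final_auxiliary_Lemma}.
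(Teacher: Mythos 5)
Your proposal is correct and follows the same skeleton as the paper's proof: the identity $\frac{a}{b}-\frac{a_n}{b_n}=\frac{a-a_n}{b}+\frac{a_n(b_n-b)}{bb_n}$, the bound $|a_n|\leq 2X_{t-1}$ via the binomial identities $j\binom{x}{j}=x\binom{x-1}{j-1}$, and the splitting of each increment into a $G$-leg (handled linearly with $\sum_m|\gz(m)-\gx(m)|\leq\delta$) and an $\alpha$-leg are all exactly what the paper does. The one place you genuinely diverge is the $\alpha$-leg: you invoke the one-dimensional mean value theorem and then uniformise $|\partial_\alpha P|$ and $|\partial_\alpha^2 P|$ over the whole $\delta$-ball via the ratio trick $(\alpha/\az)^j((1-\alpha)/(1-\az))^{x-j}\leq(1+\rho)^x$, whereas the paper never differentiates a second time in this lemma: it uses the exact telescoping factorisations $\az^{j-1}-\ax^{j-1}=(\az-\ax)\sum_{l=0}^{j-2}\az^{j-2-l}\ax^l$ and $(1-\az)^{x-j}-(1-\ax)^{x-j}=(\ax-\az)\sum_l(1-\az)^{x-j-1-l}(1-\ax)^l$, and the factor $(1+\rho)^{X_{t-1}-1}$ emerges from bounding the inner sums by the geometric series $\sum_{l=0}^{X_{t-1}-1}(1+\delta/(1-\az))^l$. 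The two routes buy essentially the same thing --- both need $\az,\ax$ bounded away from $0$ and $1$ (guaranteed for small $\delta$ by Assumption \ref{ass1}) and both produce the generic $C(\delta)$, $\rho(\delta)\downarrow 0$ --- but the paper's algebraic version keeps everything at the level of first derivatives and finite sums, while your MVT version is shorter to state at the cost of having to justify the uniform second-derivative bound as a separate step (which you correctly identify as the crux, and which the paper only needs in the companion Lemma \ref{Lipschitz_resultsI} for the second-derivative term $\dot\Psi_{n11}$).
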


\begin{proof}
By plugging-in, we get
\begin{align} \label{eq:anbn}
    & \left| \dot{l}_\alpha(X_{t-1},X_t;\tz) - \dot{l}_\alpha(X_{t-1},X_t;\xi_n) \right| \notag \\
    =& \left| 
    \frac{\sum\limits_{j=0}^{\min(X_{t-1}, X_t)} \binom{X_{t-1}}{j} G_0(X_t-j) (j \alpha_0^{j-1} (1-\alpha_0)^{X_{t-1}-j} - \alpha_0^j(X_{t-1}-j)(1-\alpha_0)^{X_{t-1}-j-1})}{\sum\limits_{j=0}^{\min(X_{t-1}, X_t)} \binom{X_{t-1}}{j} \alpha_0^j(1-\alpha_0)^{X_{t-1}-j}G_0(X_t-j)} \right. \notag \\
    & \left. - \frac{\sum\limits_{j=0}^{\min(X_{t-1}, X_t)} \binom{X_{t-1}}{j} \gx(X_t-j) (j \ax^{j-1} (1-\ax)^{X_{t-1}-j} - \ax^j(X_{t-1}-j)(1-\ax)^{X_{t-1}-j-1})}{\sum\limits_{j=0}^{\min(X_{t-1}, X_t)} \binom{X_{t-1}}{j} \ax^j(1-\ax)^{X_{t-1}-j}\gx(X_t-j)}
    \right| \notag \\
    % &= \left| \frac{\sum\limits_{j=0}^{\min(X_{t-1}, X_t)} \binom{X_{t-1}}{j} G_0(X_t-j) (j \alpha_0^{j-1} (1-\alpha_0)^{X_{t-1}-j} - \alpha_0^j(X_{t-1}-j)(1-\alpha_0)^{X_{t-1}-j-1}) - \sum\limits_{j=0}^{\min(X_{t-1}, X_t)} \binom{X_{t-1}}{j} \gx(X_t-j) (j \ax^{j-1} (1-\ax)^{X_{t-1}-j} - \ax^j(X_{t-1}-j)(1-\ax)^{X_{t-1}-j-1}) }{\sum\limits_{j=0}^{\min(X_{t-1}, X_t)} \binom{X_{t-1}}{j} \alpha_0^j(1-\alpha_0)^{X_{t-1}-j}G_0(X_t-j)} + \frac{\sum\limits_{j=0}^{\min(X_{t-1}, X_t)} \binom{X_{t-1}}{j} \gx(X_t-j) (j \ax^{j-1} (1-\ax)^{X_{t-1}-j} - \ax^j(X_{t-1}-j)(1-\ax)^{X_{t-1}-j-1}) \left( \sum\limits_{j=0}^{\min(X_{t-1}, X_t)} \binom{X_{t-1}}{j} \ax^j(1-\ax)^{X_{t-1}-j}\gx(X_t-j) - \sum\limits_{j=0}^{\min(X_{t-1}, X_t)} \binom{X_{t-1}}{j} \alpha_0^j(1-\alpha_0)^{X_{t-1}-j}G_0(X_t-j) \right)}{\sum\limits_{j=0}^{\min(X_{t-1}, X_t)} \binom{X_{t-1}}{j} \alpha_0^j(1-\alpha_0)^{X_{t-1}-j}G_0(X_t-j) \sum\limits_{j=0}^{\min(X_{t-1}, X_t)} \binom{X_{t-1}}{j} \ax^j(1-\ax)^{X_{t-1}-j}\gx(X_t-j)} \right|
    =: & \left| \frac{a}{b} - \frac{a_n}{b_n} \right| = \left|\frac{a-a_n}{b} + \frac{a_n(b_n-b)}{b b_n} \right| \leq \frac{|a-a_n|}{b} + \frac{|a_n||b_n-b|}{b b_n},
\end{align}
where we used that both $b$ and $b_n$ are transition probabilities with $b,b_n\in[0,1]$. As by Assumption \ref{ass1}, $b>0$ and $\alpha_0 \in (0,1)$ holds and $\ax \rightarrow \az$ by Theorems \ref{cons} and \ref{cons_bs_est}, we also have $b_n>0$ with probability tending to 1. Hence, it remains to consider the numerators $|a-a_n|$ and $|a_n|\cdot|b-b_n|$ separately in the following. First, for $|a-a_n|$, we have
\begin{align} \label{eq:mixed_terms}
    & |a-a_n|  \notag \\
    \leq & \Bigg| \sum\limits_{j=0}^{\min(X_{t-1}, X_t)} \binom{X_{t-1}}{j} G_0(X_t-j) (j \alpha_0^{j-1} (1-\alpha_0)^{X_{t-1}-j} -\alpha_0^j(X_{t-1}-j)(1-\alpha_0)^{X_{t-1}-j-1})  \notag \\
    &    \qquad \qquad \quad -   \binom{X_{t-1}}{j} \gx(X_t-j) (j \alpha_0^{j-1} (1-\alpha_0)^{X_{t-1}-j} -\alpha_0^j(X_{t-1}-j)(1-\alpha_0)^{X_{t-1}-j-1}) \Bigg| \notag \\
    & + \Bigg| \sum\limits_{j=0}^{\min(X_{t-1}, X_t)} \binom{X_{t-1}}{j} \gx(X_t-j) (j \alpha_0^{j-1} (1-\alpha_0)^{X_{t-1}-j} -\alpha_0^j(X_{t-1}-j)(1-\alpha_0)^{X_{t-1}-j-1})  \notag \\
    &  \qquad \qquad \quad -  \binom{X_{t-1}}{j} \gx(X_t-j) (j \ax^{j-1} (1-\ax)^{X_{t-1}-j} -\ax^j(X_{t-1}-j)(1-\ax)^{X_{t-1}-j-1})  \Bigg| \notag \\
    =: & II_{a,1} + II_{a,2}. 
\end{align}
For the first term $II_{a,1}$, we get
\begin{align*}
    II_{a,1} &\leq \sum\limits_{j=0}^{\min(X_{t-1}, X_t)} \binom{X_{t-1}}{j} |\gz(X_t-j)- \gx(X_t-j)| \\ 
    & \qquad \qquad |j \alpha_0^{j-1} (1-\alpha_0)^{X_{t-1}-j} -\alpha_0^j(X_{t-1}-j)(1-\alpha_0)^{X_{t-1}-j-1}| \\
    & \leq \sum\limits_{m=0}^\infty |\gz(m)-\gx(m)| \left( \sum\limits_{j=0}^{\min(X_{t-1}, X_t)} \binom{X_{t-1}}{j} j \alpha_0^{j-1} (1-\alpha_0)^{X_{t-1}-j} \right. \\
    & \qquad \qquad \qquad \qquad \qquad \qquad  \left. + \sum\limits_{j=0}^{\min(X_{t-1}, X_t)} \binom{X_{t-1}}{j} \alpha_0^j(X_{t-1}-j)(1-\alpha_0)^{X_{t-1}-j-1}   \right) \\
    & \leq 2 X_{t-1} \sum\limits_{m=0}^\infty |\gz(m)-\gx(m)|   \\
    \leq & 2 X_{t-1} \delta,
\end{align*}
where we used $\sum\limits_{m=0}^\infty |\gz(m)-\gx(m)|
\leq \|\xi_n-\theta_0\|\leq \delta$ for the last inequality and, for the second last inequality, we made use of the binomial theorem to get
\begin{align} \label{eq:bin_theo1}
    & \sum\limits_{j=0}^{\min(X_{t-1}, X_t)} \binom{X_{t-1}}{j} j \alpha_0^{j-1} (1-\alpha_0)^{X_{t-1}-j} \notag   \\
    &\leq \sum\limits_{j=0}^{X_{t-1}} \binom{X_{t-1}}{j} j \alpha_0^{j-1} (1-\alpha_0)^{X_{t-1}-j} = \sum\limits_{j=1}^{X_{t-1}} \binom{X_{t-1}}{j} j \alpha_0^{j-1} (1-\alpha_0)^{X_{t-1}-j} \notag  \\
    &= \sum\limits_{j=0}^{X_{t-1}-1} \binom{X_{t-1}}{j+1} (j+1) \alpha_0^{j} (1-\alpha_0)^{X_{t-1}-(j+1)} = X_{t-1} \sum\limits_{j=0}^{X_{t-1}-1} \binom{X_{t-1}-1}{j} \alpha_0^{j} (1-\alpha_0)^{X_{t-1}-1-j}  \notag \\
    &= X_{t-1}
\end{align}
% \begin{align} \label{eq:bin_theo1}
%     & \quad \sum\limits_{j=0}^{\min(X_{t-1}, X_t)} \binom{X_{t-1}}{j} j \alpha_0^{j-1} (1-\alpha_0)^{X_{t-1}-j} \leq \sum\limits_{j=0}^{X_{t-1}} \binom{X_{t-1}}{j} j \alpha_0^{j-1} (1-\alpha_0)^{X_{t-1}-j} \notag \\
%     &= \sum\limits_{j=1}^{X_{t-1}} \binom{X_{t-1}}{j} j \alpha_0^{j-1} (1-\alpha_0)^{X_{t-1}-j} = \sum\limits_{j=0}^{X_{t-1}-1} \binom{X_{t-1}}{j+1} (j+1) \alpha_0^{j} (1-\alpha_0)^{X_{t-1}-(j+1)} \notag \\
%     &= X_{t-1} \sum\limits_{j=0}^{X_{t-1}-1} \binom{X_{t-1}-1}{j} \alpha_0^{j} (1-\alpha_0)^{X_{t-1}-1-j} = X_{t-1}
% \end{align}
as well as
\begin{align} \label{eq:bin_theo2}
   & \quad \sum\limits_{j=0}^{\min(X_{t-1}, X_t)} \binom{X_{t-1}}{j} \alpha_0^j(X_{t-1}-j)(1-\alpha_0)^{X_{t-1}-j-1} \notag \\
   & \leq  \sum\limits_{j=0}^{X_{t-1}} \binom{X_{t-1}}{j} (X_{t-1}-j) \alpha_0^j(1-\alpha_0)^{X_{t-1}-j-1} = \sum\limits_{j=0}^{X_{t-1}-1} \binom{X_{t-1}}{j} (X_{t-1}-j) \alpha_0^j(1-\alpha_0)^{X_{t-1}-j-1} \notag \\
   &= X_{t-1}\sum\limits_{j=0}^{X_{t-1}-1} \binom{X_{t-1}-1}{j} \alpha_0^j(1-\alpha_0)^{X_{t-1}-1-j} = X_{t-1}.
\end{align}
% \begin{align} \label{eq:bin_theo2}
%    & \quad \sum\limits_{j=0}^{\min(X_{t-1}, X_t)} \binom{X_{t-1}}{j} \alpha_0^j(X_{t-1}-j)(1-\alpha_0)^{X_{t-1}-j-1} \notag \\
%    & \leq  \sum\limits_{j=0}^{X_{t-1}} \binom{X_{t-1}}{j} X_{t-1} \alpha_0^j(1-\alpha_0)^{X_{t-1}-j-1}
%    - \sum\limits_{j=0}^{X_{t-1}} \binom{X_{t-1}}{j} j \alpha_0^j(1-\alpha_0)^{X_{t-1}-j-1} \notag \\
%    &= X_{t-1} (1-\az)^{-1} \sum\limits_{j=0}^{X_{t-1}} \binom{X_{t-1}}{j} \alpha_0^j(1-\alpha_0)^{X_{t-1}-j} \\
%    & \quad - X_{t-1}\az(1-\az)^{-1} \sum\limits_{j=0}^{X_{t-1}-1} \binom{X_{t-1}-1}{j} \alpha_0^{j} (1-\alpha_0)^{X_{t-1}-1-j} \notag \\
%    &= X_{t-1} (1-\az)^{-1}  - X_{t-1}\az(1-\az)^{-1} = X_{t-1}.
% \end{align}
When dealing with the second term $II_{a,2}$, we have
\begin{align} \label{eq:azax}
& \alpha_0^{j-1}(1-\alpha_0)^{X_{t-1}-j} - \ax^{j-1}(1-\ax)^{X_{t-1}-j}  \notag \\
    &= \alpha_0^{j-1}(1-\alpha_0)^{X_{t-1}-j} - \ax^{j-1}(1-\ax)^{X_{t-1}-j} \mp \alpha_0^{j-1}(1-\ax)^{X_{t-1}-j} \notag \\
    &=\alpha_0^{j-1}\Big((1-\alpha_0)^{X_{t-1}-j} - (1-\ax)^{X_{t-1}-j}\Big) + (1-\ax)^{X_{t-1}-j}\Big(\alpha_0^{j-1}-\ax^{j-1}\Big).
\end{align}
Further, for the last expression in brackets, we have
\begin{align} \label{eq:az-ax}
\az^{j-1}-\ax^{j-1}
    &=\az^{j-1}-\ax^{j-1} \mp \az^{j-2}\ax  \notag\\
    &= \az^{j-2}(\az-\ax) + \az^{j-2}\ax - \ax^{j-1} \mp \az^{j-3}\ax^2 \notag \\&= \az^{j-2}(\az-\ax) + \az^{j-3}\ax(\az-\ax) + \az^{j-3}\ax^2
-\ax^{j-1} \mp \az^{j-4}\ax^3 \notag \\
&= \ldots \notag\\
&= (\az -\ax) \sum\limits_{l=0}^{j-2} \az^{j-2-l}\ax^l
\end{align}
and, analogously, 
\begin{align} \label{eq:1-azax}
    (1-\alpha_0)^{X_{t-1}-j} - (1-\ax)^{X_{t-1}-j} = (\ax-\az)\sum\limits_{l=0}^{X_{t-1}-j-1} (1-\az)^{X_{t-1}-j-1-l}(1-\ax)^l.
\end{align}
Consequently, plugging-in \eqref{eq:az-ax} and \eqref{eq:1-azax} in \eqref{eq:azax} leads to
\begin{align*}
& \alpha_0^{j-1}(1-\alpha_0)^{X_{t-1}-j} - \ax^{j-1}(1-\ax)^{X_{t-1}-j} \\
&= \az^{j-1}(\ax-\az) \sum\limits_{l=0}^{X_{t-1}-j-1} (1-\az)^{X_{t-1}-j-1-l}(1-\ax)^l + (1-\ax)^{X_{t-1}-j}(\az-\ax) \sum\limits_{l=0}^{j-2}\az^{j-2-l}\ax^l.
\end{align*}
Following the same procedure, we get
\begin{align*} & \quad \ax^j(1-\ax)^{X_{t-1}-j-1}-\az^j(1-\az)^{X_{t-1}-j-1} \\
&= \ax^j(\az-\ax) \sum\limits_{l=0}^{X_{t-1}-j-2}(1-\ax)^{X_{t-1}-j-2-l}(1-\az)^l + (1-\az)^{X_{t-1}-j-1}(\ax-\az) \sum\limits_{l=0}^{j-1}\ax^{j-1-l} \az^l.
\end{align*}
Altogether, we obtain
\begin{align*}
    II_{a,2}
    &= \left| \sum\limits_{j=0}^{\min(X_{t-1}, X_t)} \binom{X_{t-1}}{j} \gx(X_t-j) \left(j \alpha_0^{j-1}(1-\alpha_0)^{X_{t-1}-j} - j \ax^{j-1}(1-\ax)^{X_{t-1}-j} \right. \right. \\
    & \left. \left. \qquad \qquad \qquad \qquad + \ax^j(X_{t-1}-j)(1-\ax)^{X_{t-1}-j-1} - \alpha_0^j(X_{t-1}-j)(1-\alpha_0)^{X_{t-1}-j-1}  \right)   \right| \\
    & \leq |\az-\ax| \sum\limits_{j=0}^{\min(X_{t-1}, X_t)} \left[ \binom{X_{t-1}}{j} \gx(X_t-j) \left( j
    \left( \az^{j-1} \sum\limits_{l=0}^{X_{t-1}-j-1} (1-\az)^{X_{t-1}-j-1-l}(1-\ax)^l \right. \right. \right. \\
    & \quad\left. \left. \left. + (1-\ax)^{X_{t-1}-j} \sum\limits_{l=0}^{j-2}\az^{j-2-l}\ax^l  \right) + (X_{t-1}-j) \left(  \ax^j \sum\limits_{l=0}^{X_{t-1}-j-2}(1-\ax)^{X_{t-1}-j-2-l}(1-\az)^l \right. \right. \right. \\
    & \quad\left. \left. \left. + (1-\az)^{X_{t-1}-j-1}\sum\limits_{l=0}^{j-1}\ax^{j-1-l} \az^l  \right) \right)\right] \\
    &=: |\az-\ax|\big(II_{a,2,1}+II_{a,2,2}+II_{a,2,3}+II_{a,2,4}\big)
    % & \leq  |\az-\ax| \sum\limits_{j=0}^{\min(X_{t-1}, X_t)}  \binom{X_{t-1}}{j}  X_{t-1}(X_{t-1}-1) \leq |\az-\ax| \sum\limits_{j=0}^{X_{t-1}}  \binom{X_{t-1}}{j}  X_{t-1}(X_{t-1}-1) \\
    % & = |\az-\ax| 2^{X_{t-1}} X_{t-1}(X_{t-1}-1),
\end{align*}
with an obvious notation for $II_{a,2,1},II_{a,2,2},II_{a,2,3}$ and $II_{a,2,4}$ according to the four terms on the last right-hand side. Let us consider $II_{a,2,1}$ in more detail. Making use of $|\az-\ax|\leq \|\xi_n-\theta_0\|\leq \delta$, we have
\begin{align} \label{4terms}
& \sum\limits_{j=0}^{\min(X_{t-1}, X_t)}  \binom{X_{t-1}}{j} \gx(X_t-j) j\az^{j-1} \sum\limits_{l=0}^{X_{t-1}-j-1} (1-\az)^{X_{t-1}-j-1-l}(1-\ax)^l   \\
=& \sum\limits_{j=0}^{\min(X_{t-1}, X_t)}  \binom{X_{t-1}}{j} \gx(X_t-j) j\az^{j-1} (1-\az)^{X_{t-1}-j-1} \sum\limits_{l=0}^{X_{t-1}-j-1} (1-\az)^{-l}(1-\ax)^l \notag  \\
=& \sum\limits_{j=0}^{\min(X_{t-1}, X_t)}  \binom{X_{t-1}}{j} \gx(X_t-j) j\az^{j-1} (1-\az)^{X_{t-1}-j-1} \sum\limits_{l=0}^{X_{t-1}-j-1} \left(\frac{1-\ax}{1-\az}\right)^l  \notag \\
=& \sum\limits_{j=0}^{\min(X_{t-1}, X_t)}  \binom{X_{t-1}}{j} \gx(X_t-j) j\az^{j-1} (1-\az)^{X_{t-1}-j-1} \sum\limits_{l=0}^{X_{t-1}-j-1} \left(1+\frac{\az-\ax}{1-\az}\right)^l \notag  \\
\leq& \frac{1}{1-\az}\sum\limits_{j=0}^{\min(X_{t-1}, X_t)}  \binom{X_{t-1}}{j} j\az^{j-1} (1-\az)^{X_{t-1}-j} \sum\limits_{l=0}^{X_{t-1}-j-1} \left(1+\frac{|\az-\ax|}{1-\az}\right)^l \notag \\
\leq & \frac{1}{1-\az}\left(\sum\limits_{j=0}^{\min(X_{t-1}, X_t)}  \binom{X_{t-1}}{j} j\az^{j-1} (1-\az)^{X_{t-1}-j}\right) \left(\sum\limits_{l=0}^{X_{t-1}-1} \left(1+\frac{\delta}{1-\az}\right)^l\right) \notag  \\
=& \frac{1}{1-\az}X_{t-1}^2\left(1+\frac{\delta}{1-\az}\right)^{X_{t-1}-1}. \notag
\end{align}
Using the same steps, we get for the three other terms 
\begin{align*}
    II_{a,2,2} & \leq \frac{1}{\ax}X_{t-1}^2\left(1+\frac{\delta}{\ax}\right)^{X_{t-1}-1}, \\
    II_{a,2,3} & \leq \frac{1}{1-\ax}X_{t-1}^2\left(1+\frac{\delta}{1-\ax}\right)^{X_{t-1}-1} \quad  \text{and}\\
    II_{a,2,4} & \leq \frac{1}{\az}X_{t-1}^2\left(1+\frac{\delta}{\az}\right)^{X_{t-1}-1},
\end{align*}
where we used that, e.g., $\sum_{l=0}^{j-2} \az^{j-2-l}\ax^l=\sum_{l=0}^{j-2} \ax^{j-2-l}\az^l$ and $\min(X_t, X_{t-1}) \leq X_{t-1}$. Altogether, using again $|\az-\ax|\leq \|\xi_n-\theta_0\|\leq \delta$, this leads to
\begin{align*}
II_{a,2} \leq C|\az-\ax| X_{t-1}^2\left(1+\rho\right)^{X_{t-1}-1}\leq C\delta X_{t-1}^2\left(1+\rho\right)^{X_{t-1}-1}
\end{align*}
for some (generic) constant $C=C(\delta)$ and some $\rho=\rho(\delta)$ which becomes arbitrarily small for $\delta$ sufficiently small.
% where for the second last inequality, we used that $\gz(X_t-j) \in [0,1] \forall j$ and $\az,\ax \in (0,1)$  and for the last inequality that
% \begin{align*}
%     \sum\limits_{j=0}^{X_{t-1}}  \binom{X_{t-1}}{j}  X_{t-1}(X_{t-1}-1) =X_{t-1}(X_{t-1}-1) \sum\limits_{j=0}^{X_{t-1}}  \binom{X_{t-1}}{j} = X_{t-1}(X_{t-1}-1) 2^{X_{t-1}}.
%  \end{align*}
Now, consider the second term of \eqref{eq:anbn}. We have 
\begin{align} \label{abs_an}
    |a_n| &= \left| \sum\limits_{j=0}^{\min(X_{t-1}, X_t)} \binom{X_{t-1}}{j} \gx(X_t-j) \right.    \notag \\
    & \qquad \qquad \qquad \left. \left(j \ax^{j-1} (1-\ax)^{X_{t-1}-j} - \ax^j(X_{t-1}-j)(1-\ax)^{X_{t-1}-j-1} \right) \right| \notag \\
    \leq & \sum\limits_{j=0}^{\min(X_{t-1}, X_t)} \binom{X_{t-1}}{j} j \ax^{j-1} (1-\ax)^{X_{t-1}-j} + \sum\limits_{j=0}^{\min(X_{t-1}, X_t)} \binom{X_{t-1}}{j} \ax^j(X_{t-1}-j)(1-\ax)^{X_{t-1}-j-1} \notag   \\
    % & \leq \sum\limits_{j=0}^{\min(X_{t-1}, X_t)} \binom{X_{t-1}}{j} \gx(X_t-j) \\
    % & \qquad \qquad \qquad  \left( |j \ax^{j-1} (1-\ax)^{X_{t-1}-j}|+|\ax^j(X_{t-1}-j)(1-\ax)^{X_{t-1}-j-1}| \right) \notag \\
    % & \leq \sum\limits_{j=0}^{X_{t-1}} j \ax^{j-1} (1-\ax)^{X_{t-1}-j} + \sum\limits_{j=0}^{X_{t-1}} \ax^j(X_{t-1}-j)(1-\ax)^{X_{t-1}-j-1} \notag \\
    \leq& 2 X_{t-1},
\end{align}
where we used that $G_{\xi_n}(k)\leq 1$ for all $k\in\mathbb{N}_0$ and the bounds obtained in \eqref{eq:bin_theo1} and \eqref{eq:bin_theo2}. Further, we have
\begin{align*}
    |b_n-b|  \leq & \left|  \sum\limits_{j=0}^{\min(X_{t-1}, X_t)} \binom{X_{t-1}}{j} \gx(X_t-j) \left(  \ax^j(1-\ax)^{X_{t-1}-j} - \az^j(1-\az)^{X_{t-1}-j} \right)  \right| \\
    & + \left|  \sum\limits_{j=0}^{\min(X_{t-1}, X_t)} \binom{X_{t-1}}{j} \big(\gx(X_t-j)-\gz(X_t-j)\big) \left(  \az^j(1-\az)^{X_{t-1}-j} \right) \right| \\
     =: & II_{b,1} + II_{b,2}.
\end{align*}
We can proceed analogously as we did for $II_{a,1}$ and $II_{a,2}$ to get
\begin{align*}
    II_{b,1} &= \left|  \sum\limits_{j=0}^{\min(X_{t-1}, X_t)} \binom{X_{t-1}}{j} \gx(X_t-j) \left( \ax^j(\az-\ax) \sum\limits_{l=0}^{X_{t-1}-j-1} (1-\ax)^{X_{t-1}-j-1-l}(1-\az)^l \right. \right. \\
    & \left. \left. \qquad \qquad \qquad\qquad \qquad \qquad \qquad \qquad  + (1-\az)^{X_{t-1}-j} (\ax-\az) \sum\limits_{l=0}^{j-1} \ax^{j-1-l} \az^l \right) \right| \\
    & \leq |\az-\ax| \sum\limits_{j=0}^{\min(X_{t-1}, X_t)} \binom{X_{t-1}}{j} \left( \ax^j \sum\limits_{l=0}^{X_{t-1}-j-1} (1-\ax)^{X_{t-1}-j-1-l}(1-\az)^l \right. 
    \\ & \left.\qquad \qquad \qquad\qquad \qquad \qquad \qquad \qquad   + (1-\az)^{X_{t-1}-j} \sum\limits_{l=0}^{j-1} \ax^{j-1-l} \az^l  \right) \\
    &\leq C|\az-\ax| X_{t-1}\left(1+\rho\right)^{X_{t-1}-1} \\
    &\leq  C\delta X_{t-1}\left(1+\rho\right)^{X_{t-1}-1}
    % & \leq |\az-\ax|  \sum\limits_{j=0}^{\min(X_{t-1}, X_t)} \binom{X_{t-1}}{j} X_{t-1} = |\az-\ax|  \sum\limits_{j=0}^{X_{t-1}} \binom{X_{t-1}}{j} X_{t-1} \\
    % & \leq |\az-\ax| 2^{X_{t-1}}X_{t-1}
\end{align*}
and 
\begin{align*}
    II_{b,2} & \leq \sum\limits_{j=0}^{\min(X_{t-1}, X_t)} \binom{X_{t-1}}{j} | \gx(X_t-j) - \gz(X_t-j)| (\az^j(1-\az)^{X_{t-1}-j}) \\
    & \leq \sum\limits_{m=0}^\infty |\gx(m) - \gz(m)| \sum\limits_{j=0}^{X_{t-1}}\binom{X_{t-1}}{j} \az^j(1-\az)^{X_{t-1}-j} \\
    &= \sum\limits_{m=0}^\infty |\gx(m) - \gz(m)|   \\
    &\leq \delta,
\end{align*}
where $C$ and $\rho$ are as above. Altogether, this completes the proof. 
\end{proof}

\begin{Lemma} \label{Lipschitz_resultsI}
Suppose the Assumptions of Lemma \ref{unif_zws} hold.
% Suppose Assumptions \ref{ass1} and \ref{asum_mix_gu} hold. 
% Let $\text{E}(X_1^k) < \infty$ for some $k>2(p+4)$ and $\text{E}(X_t^3(1+\rho)^{X_t})<\infty$ for some $\rho>0$.  
Then, for $||\xi_n - \tz || \leq \delta$, we have 
\begin{align} \label{eq:I_final_auxiliary_Lemma}
& \left|\frac{\frac{\partial^2}{\partial \alpha^2} P_{X_{t-1},X_t |\alpha=\ax}^{\alpha,\gx}}{P_{X_{t-1},X_t}^{\xi_n}} - \frac{\frac{\partial^2}{\partial \alpha^2} P_{X_{t-1},X_t |\alpha=\alpha_0}^{\alpha,\gz}}{P_{X_{t-1},X_t}^{\tz}}\right| \notag  \\
% \leq & \frac{\tilde C\delta X_{t-1}^2(X_{t-1}-1)(1+\tilde\rho)^{X_{t-1}-1}+ \delta  4 X_{t-1}(X_{t-1}-1) (1-\az)^{-2}}{P^{\xi_n}_{X_{t-1}, X_t}}  \notag \\
% & \qquad \qquad +  \frac{4 X_{t-1}(X_{t-1}-1) (1-\az)^{-2} \left( C \delta X_{t-1} (1+\rho)^{X_{t-1}-1} + \delta \right)}{P^{\xi_n}_{X_{t-1}, X_t} P^{\tz}_{X_{t-1}, X_t}}  \\
\leq & \delta\left(\frac{\widetilde C X_{t-1}^3(1+\widetilde\rho)^{X_{t-1}-1}+\widetilde C X_{t-1}^2}{P^{\xi_n}_{X_{t-1}, X_t}} +  \frac{\widetilde C X_{t-1}^2 \left( \widetilde C  X_{t-1} (1+\widetilde \rho)^{X_{t-1}-1} + 1 \right)}{P^{\xi_n}_{X_{t-1}, X_t} P^{\tz}_{X_{t-1}, X_t}}\right)     
\end{align}
for some (generic) constant $\widetilde C=\widetilde C(\delta)$ and some $\widetilde \rho=\widetilde \rho(\delta)$, which becomes arbitrarily small for $\delta$ sufficiently small.
\end{Lemma}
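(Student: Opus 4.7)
The plan is to mirror the structure of the proof of Lemma \ref{Lipschitz_resultsII}, replacing $\tfrac{\partial}{\partial\alpha}P^{\alpha,G}$ by $\tfrac{\partial^2}{\partial\alpha^2}P^{\alpha,G}$ throughout. Writing $A,B,A_n,B_n$ for the numerator and denominator at $\xi_n$ and at $\theta_0$, respectively (in this order, so that the resulting denominators match the target bound \eqref{eq:I_final_auxiliary_Lemma}), the identity
\[ \left|\frac{A}{B}-\frac{A_n}{B_n}\right|\le \frac{|A-A_n|}{B}+\frac{|A_n|\,|B-B_n|}{B\,B_n} \]
reduces everything to estimating $|A-A_n|$, $|A_n|$, and $|B-B_n|$ separately. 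The third quantity is literally the $|b-b_n|$ of Lemma \ref{Lipschitz_resultsII}, so its bound is already in hand. The quantity $|A_n|$ is controlled by $\widetilde C X_{t-1}^2$ via $\bigl|\tfrac{\partial^2}{\partial\alpha^2}P^{\alpha,G}_{x,y}\bigr|\le 4x(x-1)(1-\alpha)^{-2}$, which is already invoked in the proof of Lemma \ref{unif_zws} (cf.\ \eqref{abs_c}). Together these account for the $\widetilde C X_{t-1}^2$ term in the first summand and the entire second summand of \eqref{eq:I_final_auxiliary_Lemma}.

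The main work is bounding $|A-A_n|$. Expand
\[ \tfrac{\partial^2}{\partial\alpha^2}\bigl[\alpha^j(1-\alpha)^{x_{t-1}-j}\bigr]=j(j-1)\alpha^{j-2}(1-\alpha)^{x_{t-1}-j}-2j(x_{t-1}-j)\alpha^{j-1}(1-\alpha)^{x_{t-1}-j-1}+(x_{t-1}-j)(x_{t-1}-j-1)\alpha^j(1-\alpha)^{x_{t-1}-j-2}, \]
and then, following the template of \eqref{eq:mixed_terms}, split $|A-A_n|$ into a \emph{$G$-difference} piece and an \emph{$\alpha$-difference} piece by adding and subtracting the intermediate quantity in which only $G$ has been replaced. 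In the $G$-difference piece, factor out $\sum_m|G_0(m)-G_{\xi_n}(m)|\le\delta$ and bound each of the three binomial-polynomial sums by the two-step analogues of \eqref{eq:bin_theo1}--\eqref{eq:bin_theo2}, for instance $\sum_{j}\binom{x_{t-1}}{j}j(j-1)\alpha^{j-2}(1-\alpha)^{x_{t-1}-j}=x_{t-1}(x_{t-1}-1)$, yielding a contribution $\widetilde C\delta X_{t-1}^2$, which matches the $\widetilde C X_{t-1}^2$ summand in the target.

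For the $\alpha$-difference piece, apply the telescoping identities \eqref{eq:az-ax} and \eqref{eq:1-azax} to each of the three monomial differences $\alpha_0^a(1-\alpha_0)^b-\alpha_{\xi_n}^a(1-\alpha_{\xi_n})^b$ with $(a,b)\in\{(j-2,x_{t-1}-j),(j-1,x_{t-1}-j-1),(j,x_{t-1}-j-2)\}$. Each difference produces a factor $|\alpha_0-\alpha_{\xi_n}|\le\delta$ together with geometric sums that are then dominated by $(1+\widetilde\rho)^{x_{t-1}-1}$ via the same manipulation as in \eqref{4terms}, where $\widetilde\rho=\widetilde\rho(\delta)$ is an appropriate weighted maximum of $\delta/\alpha_0,\ \delta/(1-\alpha_0),\ \delta/\alpha_{\xi_n},\ \delta/(1-\alpha_{\xi_n})$ (all small once $\delta$ is small, by Assumption \ref{ass1} and the convergence $\alpha_{\xi_n}\to\alpha_0\in(0,1)$). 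The combinatorial prefactors $j(j-1),\ 2j(x_{t-1}-j),\ (x_{t-1}-j)(x_{t-1}-j-1)$ each contribute one extra power of $x_{t-1}$ compared to the first-derivative case, producing the $\widetilde C\delta X_{t-1}^3(1+\widetilde\rho)^{X_{t-1}-1}$ term.

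The main obstacle is purely bookkeeping. Because the second derivative has three monomial terms (versus one for the first derivative), the telescoping in the $\alpha$-difference piece generates substantially more subterms, each carrying its own geometric sum based on a possibly different ``$1+\delta/\cdot$'' base. Keeping the final polynomial factor at exactly $X_{t-1}^3$ (and not higher) and collapsing the various geometric-series bases into a single $1+\widetilde\rho$ with $\widetilde\rho\to 0$ as $\delta\to 0$ requires care, but no new ideas beyond those already deployed in the proof of Lemma \ref{Lipschitz_resultsII}.
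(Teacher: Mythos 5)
Your proposal is correct and follows essentially the same route as the paper's proof: the same quotient-difference inequality with the same assignment of denominators, the same rewriting of the second derivative as three monomial terms, the same mixed-term splitting into a $G$-difference piece (bounded via the binomial-theorem identities, giving $\widetilde C\delta X_{t-1}^2$) and an $\alpha$-difference piece (bounded via the telescoping identities \eqref{eq:az-ax}--\eqref{eq:1-azax} and the geometric-sum manipulation of \eqref{4terms}, giving $\widetilde C\delta X_{t-1}^3(1+\widetilde\rho)^{X_{t-1}-1}$), together with the bound \eqref{abs_c} for the remaining numerator and the reuse of the $|b-b_n|$ estimate from Lemma \ref{Lipschitz_resultsII}. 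The paper likewise tracks six telescoped subterms in the $\alpha$-difference piece and collapses their bases into a single $1+\widetilde\rho(\delta)$, exactly as you anticipate.
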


\begin{proof}
We follow the proof technique of Lemma \ref{Lipschitz_resultsII}, but have to deal with the second derivative. First, we get the bound
\begin{align*}
\left|\frac{\frac{\partial^2}{\partial \alpha^2} P_{X_{t-1},X_t |\alpha=\ax}^{\alpha,\gx}}{P_{X_{t-1},X_t}^{\xi_n}} - \frac{\frac{\partial^2}{\partial \alpha^2} P_{X_{t-1},X_t |\alpha=\alpha_0}^{\alpha,\gz}}{P_{X_{t-1},X_t}^{\tz}}\right| =: \left| \frac{c_n}{b_n} - \frac{c}{b}  \right| \leq \frac{|c_n-c|}{b_n} + \frac{|c||b-b_n|}{b_n b},
\end{align*}
where $b_n$ and $b$ are defined as in \eqref{eq:anbn} and
\begin{align*}
    c_n = & \sum\limits_{j=0}^{\min(X_{t-1}, X_t)} \binom{X_{t-1}}{j} \gx(X_t-j) \ax^{j-2} (1-\ax)^{X_{t-1}-j-2}    \\
    & \qquad\qquad\qquad\qquad\qquad\qquad\bigg(\ax^2(X_{t-1}-1)X_{t-1}+j(-2\ax(X_{t-1}-1)-1) +j^2  \bigg),   \\
    c = & \sum\limits_{j=0}^{\min(X_{t-1}, X_t)} \binom{X_{t-1}}{j} \gz(X_t-j) \az^{j-2} (1-\az)^{X_{t-1}-j-2}    \\
    & \qquad\qquad\qquad\qquad\qquad\qquad \bigg( \az^2(X_{t-1}-1)X_{t-1}+j(-2\az(X_{t-1}-1)-1) +j^2  \bigg).
\end{align*}
% \begin{small}
% \begin{align*}
%     I & \leq \frac{1}{n} \sum\limits_{t=0}^n \left| 
%     \frac{\splitfrac{\sum\limits_{j=0}^{\min(X_{t-1}, X_t)} \binom{X_{t-1}}{j} \gx(X_t-j) \ax^{j-2} (1-\ax)^{X_{t-1}-j-2}}{ \hspace{1cm} ( \ax^2(X_{t-1}-1)X_{t-1}+j(-2\ax(X_{t-1}-1)-1) +j^2  )}}{\sum\limits_{j=0}^{\min(X_{t-1}, X_t)} \binom{X_{t-1}}{j} \ax^j(1-\ax)^{X_{t-1}-j}\gx(X_t-j)} \right. \notag \\
%     & \left. - \frac{\sum\limits_{j=0}^{\min(X_{t-1}, X_t)} \binom{X_{t-1}}{j} \gz(X_t-j) \az^{j-2} (1-\az)^{X_{t-1}-j-2} ( \az^2(X_{t-1}-1)X_{t-1}+j(-2\az(X_{t-1}-1)-1) +j^2  )}{\sum\limits_{j=0}^{\min(X_{t-1}, X_t)} \binom{X_{t-1}}{j} \alpha_0^j(1-\alpha_0)^{X_{t-1}-j}G_0(X_t-j)}
%     \right| \\
%     & =: \left| \frac{c_n}{b_n} - \frac{c}{b}  \right|  \leq \frac{|c_n-c|}{b_n} + \frac{|c||b-b_n|}{b_n b},
% \end{align*}
% \end{small}
% where $b_n$ and $b$ are defined as in \eqref{eq:anbn}, so 
Hence, it remains to investigate $|c_n-c|$ and $|c|$. First, we note that $c$ (and analogously $c_n$) can equivalently be written as 
\begin{align*} & \sum\limits_{j=0}^{\min(X_{t-1}, X_t)} \binom{X_{t-1}}{j} \gz(X_t-j)\left( \az^j(1-\az)^{X_{t-1}-j-2}(X_{t-1}-1)X_{t-1} \right. \\
& \qquad\left. - \az^{j-1}(1-\az)^{X_{t-1}-j-2}2j(X_{t-1}-1)-\az^{j-2}(1-\az)^{X_{t-1}-j-2}(j-j^2)   \right).    
\end{align*}
Adding and subtracting the mixed terms as we did in \eqref{eq:mixed_terms}, we obtain
\begin{align*}
   |c_n-c|  \leq & \left|   \sum\limits_{j=0}^{\min(X_{t-1}, X_t)} \binom{X_{t-1}}{j} \gx(X_t-j) \left( \ax^j(1-\ax)^{X_{t-1}-j-2}(X_{t-1}-1)X_{t-1} \right. \right. \\
    & \left. \left. \qquad  - \az^j(1-\az)^{X_{t-1}-j-2}(X_{t-1}-1)X_{t-1}  + \az^{j-1}(1-\az)^{X_{t-1}-j-2}2j(X_{t-1}-1)  \right. \right. \\
     & \left. \left. \qquad  - \ax^{j-1}(1-\ax)^{X_{t-1}-j-2}2j(X_{t-1}-1)  + \az^{j-2}(1-\az)^{X_{t-1}-j-2}(j-j^2)  \right. \right. \\
     & \left. \left. \qquad  -\ax^{j-2}(1-\ax)^{X_{t-1}-j-2}(j-j^2)   \right)  \right| \\
     & + \left|  \sum\limits_{j=0}^{\min(X_{t-1}, X_t)} \binom{X_{t-1}}{j} \big(\gx(X_t-j)   - \gz(X_t-j)\big) \left( \az^j(1-\az)^{X_{t-1}-j-2}(X_{t-1}-1)X_{t-1} \right. \right. \\
     & \left. \left.  \qquad \qquad - \az^{j-1}(1-\az)^{X_{t-1}-j-2}2j(X_{t-1}-1) -\az^{j-2}(1-\az)^{X_{t-1}-j-2}(j-j^2)   \right)    \right| \\
     := & I_{c,1} + I_{c,2}.
\end{align*}
For $I_{c,1}$, we rewrite the last factor of the sum analogously to \eqref{eq:azax}, \eqref{eq:az-ax} and \eqref{eq:1-azax} to get
\begin{align*}
    & I_{c,1} \\ 
    \leq & |\az-\ax|  \sum\limits_{j=0}^{\min(X_{t-1}, X_t)} \binom{X_{t-1}}{j}
    \Bigg[   (X_{t-1}-1)X_{t-1} \left( \ax^j\sum\limits_{l=0}^{X_{t-1}-j-3} (1-\ax)^{X_{t-1}-j-3-l}(1-\az)^l \right. \\
    & \left. + (1-\az)^{X_{t-1}-j-2} \sum\limits_{l=0}^{j-1} \ax^{j-1-l} \az^l \right)
    + 2j(X_{t-1}-1) \left( \az^{j-1} \sum\limits_{l=0}^{X_{t-1}-j-3} (1-\az)^{X_{t-1}-j-3-l} (1-\ax)^l \right.  \\
    & \left. + (1-\ax)^{X_{t-1}-j-2} \sum\limits_{l=0}^{j-1} \az^{j-1-l}\ax^l  \right)
    +(j-j^2) \left(  \az^{j-2} \sum\limits_{l=0}^{X_{t-1}-j-3}(1-\az)^{X_{t-1}-j-3-l}(1-\ax)^l \right.  \\
    &  \left. + (1-\ax)^{X_{t-1}-j-2} \sum\limits_{l=0}^{j-3} \az^{j-3-l} \ax^l  \right)  \Bigg] \\
    % \leq & |\az-\ax| \sum\limits_{j=0}^{X_{t-1}} \binom{X_{t-1}}{j} \bigg( (X_{t-1}-1) X_{t-1} (X_{t-1}-2) + 2j(X_{t-1} -1)(X_{t-1}-2) \\
    % & \qquad \qquad \qquad  \qquad \qquad \qquad \qquad \qquad  + (j-j^2)(X_{t-1}-4)\bigg) \\
    % \leq & |\az-\ax| \sum\limits_{j=0}^{X_{t-1}} \binom{X_{t-1}}{j} \bigg( (X_{t-1}-1) X_{t-1} (X_{t-1}-2) + 2X_{t-1}(X_{t-1} -1)(X_{t-1}-2) \\
    % &  \qquad \qquad \qquad  \qquad \qquad \qquad \qquad \qquad  + (X_{t-1}-X_{t-1}^2)(X_{t-1}-4)\bigg) \\
    % = & |\az-\ax| \sum\limits_{j=0}^{X_{t-1}} \binom{X_{t-1}}{j} 2 X_{t-1}(X_{t-1} -1)^2 \leq |\az-\ax| 2^{X_{t-1}+1} X_{t-1} (X_{t-1} -1)^2   .
    =: & |\az-\ax| \big(I_{c,1,1} + I_{c,1,2} + I_{c,1,3} + I_{c,1,4} + I_{c,1,5} + I_{c,1,6}\big)
\end{align*}
with an obvious notation for $I_{c,1,1},I_{c,1,2}, I_{c,1,3}, I_{c,1,4}, I_{c,1,5}$ and $ I_{c,1,6}$ according to the six terms on the last right-hand side. With similar steps as in \eqref{4terms}, we get
\begin{align*}
    I_{c,1,1} &\leq \frac{1}{(1-\ax)^3} X_{t-1}^2 (X_{t-1}-1) \left(1+\frac{\delta}{1-\ax}\right)^{X_{t-1}-1}, \\
    I_{c,1,2} &\leq \frac{1}{\az(1-\az)^2} X_{t-1}^2 (X_{t-1}-1) \left(1+\frac{\delta}{\az}\right)^{X_{t-1}-1}, \\
    I_{c,1,3} &\leq \frac{1}{(1-\az)^3} 2X_{t-1}^2 (X_{t-1}-1) \left(1+\frac{\delta}{1-\az}\right)^{X_{t-1}-1}, \\
    I_{c,1,4} &\leq \frac{1}{(1-\ax)^2} 2X_{t-1}^2 (X_{t-1}-1) \left(1+\frac{\delta}{\ax}\right)^{X_{t-1}-1}, \\
    I_{c,1,5} &\leq \frac{1}{(1-\az)^3} X_{t-1}^2 (X_{t-1}-1) \left(1+\frac{\delta}{1-\az}\right)^{X_{t-1}-1} \quad \text{and} \\
    I_{c,1,6} &\leq \frac{1}{\ax(1-\ax)^2} X_{t-1}^2 (X_{t-1}-1) \left(1+\frac{\delta}{\ax}\right)^{X_{t-1}-1}.
\end{align*}
Altogether, this leads to 
\[I_{c,1} \leq \widetilde C |\az-\ax|  X_{t-1}^2(X_{t-1}-1) (1+\widetilde\rho)^{X_{t-1}-1} \leq \widetilde C \delta  X_{t-1}^3 (1+\widetilde\rho)^{X_{t-1}-1}\]
for some (generic) constant $ \widetilde C=\widetilde C(\delta)$ and some $\widetilde\rho=\widetilde\rho(\delta)$ which becomes arbitrary small for $\delta$ sufficiently small.
For $I_{c,2}$, we re-use the binomial theorem as in \eqref{eq:bin_theo1} and \eqref{eq:bin_theo2} and get
\begin{align} \label{eq:viii}
    I_{c,2} & \leq \sum\limits_{j=0}^{\min(X_{t-1}, X_t)} \binom{X_{t-1}}{j} |\gx(X_t-j) - \gz(X_t-j)| (\az^j(1-\az)^{X_{t-1}-j-2}(X_{t-1}-1)X_{t-1} \notag \\
    &  \qquad \qquad - \az^{j-1}(1-\az)^{X_{t-1}-j-2}2j(X_{t-1}-1)-\az^{j-2}(1-\az)^{X_{t-1}-j-2}(j-j^2)) \notag \\
    & \leq \sum\limits_{m=0}^\infty |\gx(m) - \gz(m)| 4 X_{t-1} (X_{t-1}-1) (1-\az)^{-2}    \\
    & \leq \widetilde C\delta  X_{t-1}^2
\end{align}
As last term, we have to consider $|c|$ for which we get
\begin{align} \label{abs_c}
    |c| \leq 4 X_{t-1}(X_{t-1}-1)(1-\az)^{-2} \leq \widetilde C X_{t-1}^2
\end{align}
with the same arguments as in \eqref{eq:viii}. Altogether, using $||\xi_n - \tz || \leq \delta$, this completes the proof.
% Altogether, we showed that
% \begin{small}
% \begin{align} \label{eq:I_final}
%     I & \leq \frac{1}{n} \sum\limits_{t=0}^n \left( \frac{\tilde C|\az-\ax| X_{t-1}^2(X_{t-1}-1)(1+\tilde\rho)^{X_{t-1}-1}+ \sum\limits_{m=0}^\infty |\gx(m) - \gz(m)|  4 X_{t-1}(X_{t-1}-1) (1-\az)^{-2}}{P^{\xi_n}_{X_{t-1}, X_t}} \right. \notag \\
%     & \left. \qquad \qquad +  \frac{4 X_{t-1}(X_{t-1}-1) (1-\az)^{-2} \left( C |\az-\ax| X_{t-1} (1+\rho)^{X_{t-1}-1} + \sum\limits_{m=0}^\infty |\gx(m)-\gz(m)| \right)}{P^{\xi_n}_{X_{t-1}, X_t} P^{\tz}_{X_{t-1}, X_t}}  \right)
% \end{align}
% \end{small}
\end{proof}

\newpage

\section{Additional Tables}\label{sec:appendixC}

\begin{table}[h]
\centering
\begin{tabular}{c|ccc|ccc|ccc}
& \multicolumn{3}{|c|}{$\alpha$} & \multicolumn{3}{|c|}{$G(0)$} & \multicolumn{3}{|c}{$G(1)$} \\
$n$ & 100 & 500 & 1000 & 100 & 500 & 1000& 100 & 500 & 1000 \\
\hline
coverage &0.520&0.806&0.914&0.886&0.918&0.946&0.960&0.948&0.962 \\
average length &0.264&0.163&0.125&0.242&0.115&0.083&0.223&0.098&0.069 \\
%\hline
& \multicolumn{3}{|c|}{$G(2)$} & \multicolumn{3}{|c|}{$G(3)$} & \multicolumn{3}{|c}{$G(4)$} \\
$n$ & 100 & 500 & 1000 & 100 & 500 & 1000& 100 & 500 & 1000 \\
\hline
coverage &0.918&0.938&0.934&0.820&0.934&0.934&0.652&0.834&0.942 \\
average length &0.191&0.087&0.062&0.113&0.055&0.039&0.047&0.027&0.019 \\
\end{tabular}
\vspace{0.5cm}
%\caption{\red{$Poi(1), \alpha=0.1, stng=sp$}}
\caption{Coverage and average length of the semi-parametrically constructed bootstrap confidence intervals based on the semi-parametric estimation of \citet{drost} and the semi-parametric INAR bootstrap from Section \ref{sec:boot_algo} for $\alpha, G(0), \ldots, G(4)$ in case of a Poi(1)-INAR(1) DGP with $\alpha=0.1$ for different sample sizes.}
\label{tab:poi_sp_1_01}
\end{table}

\begin{table}[h]
\centering
\begin{tabular}{c|ccc|ccc|ccc}
& \multicolumn{3}{|c|}{$\alpha$} & \multicolumn{3}{|c|}{$G(0)$} & \multicolumn{3}{|c}{$G(1)$} \\
$n$ & 100 & 500 & 1000 & 100 & 500 & 1000& 100 & 500 & 1000 \\
\hline
coverage &0.652&0.834&0.930&0.876&0.916&0.944&0.974&0.976&0.986 \\
average length &0.288&0.164&0.124&0.195&0.093&0.067&0.026&0.006&0.003 \\
%\hline
& \multicolumn{3}{|c|}{$G(2)$} & \multicolumn{3}{|c|}{$G(3)$} & \multicolumn{3}{|c}{$G(4)$} \\
$n$ & 100 & 500 & 1000 & 100 & 500 & 1000& 100 & 500 & 1000 \\
\hline
coverage &0.848&0.908&0.948&0.864&0.898&0.948&0.846&0.872&0.944 \\
average length &0.094&0.0046&0.033&0.061&0.031&0.022&0.024&0.012&0.008 \\
\end{tabular}
\vspace{0.5cm}
\caption{Coverage and average length of the parametrically constructed bootstrap confidence intervals based on \emph{parametric} ML estimation and a \emph{parametric} Poi-INAR bootstrap for $\alpha, G(0), \ldots, G(4)$ in case of a Poi(1)-INAR(1) DGP with $\alpha=0.1$ for different sample sizes.}
%\caption{\red{$Poi(1), \alpha=0.1, stng=p$}}
\label{tab:poi_p_1_01}
\end{table}

\begin{table}[h]
\centering
\begin{tabular}{c|ccc|ccc|ccc}
& \multicolumn{3}{|c|}{$\alpha$} & \multicolumn{3}{|c|}{$G(0)$} & \multicolumn{3}{|c}{$G(1)$} \\
$n$ & 100 & 500 & 1000 & 100 & 500 & 1000& 100 & 500 & 1000 \\
\hline
coverage &0.410&0.750&0.866&0.754&0.892&0.916&0.856&0.864&0.912 \\
average length &0.224&0.153&0.123&0.110&0.052&0.038&0.199&0.094&0.069 \\
%\hline
& \multicolumn{3}{|c|}{$G(2)$} & \multicolumn{3}{|c|}{$G(3)$} & \multicolumn{3}{|c}{$G(4)$} \\
$n$ & 100 & 500 & 1000 & 100 & 500 & 1000& 100 & 500 & 1000 \\
\hline
coverage &0.898&0.944&0.942&0.918&0.942&0.942&0.878&0.958&0.940 \\
average length &0.236&0.107&0.076&0.242&0.107&0.076&0.219&0.104&0.074 \\
\end{tabular}
\vspace{0.5cm}
%\caption{\red{$Poi(3), \alpha=0.1, stng=sp$}}
\caption{Coverage and average length of the semi-parametrically constructed bootstrap confidence intervals based on the semi-parametric estimation of \citet{drost} and the semi-parametric INAR bootstrap from Section \ref{sec:boot_algo} for $\alpha, G(0), \ldots, G(4)$ in case of a Poi(3)-INAR(1) DGP with $\alpha=0.1$ for different sample sizes.}
\label{tab:poi_sp_3_01}
\end{table}

\begin{table}[h]
\centering
\begin{tabular}{c|ccc|ccc|ccc}
& \multicolumn{3}{|c|}{$\alpha$} & \multicolumn{3}{|c|}{$G(0)$} & \multicolumn{3}{|c}{$G(1)$} \\
$n$ & 100 & 500 & 1000 & 100 & 500 & 1000& 100 & 500 & 1000 \\
\hline
coverage &0.596&0.842&0.936&0.774&0.896&0.948&0.768&0.910&0.942\\
average length &0.283&0.162&0.122&0.073&0.032&0.023&0.126&0.062&0.045 \\
%\hline
& \multicolumn{3}{|c|}{$G(2)$} & \multicolumn{3}{|c|}{$G(3)$} & \multicolumn{3}{|c}{$G(4)$} \\
$n$ & 100 & 500 & 1000 & 100 & 500 & 1000& 100 & 500 & 1000 \\
\hline
coverage &0.694&0.880&0.960&0.970&0.978&0.960&0.704&0.866&0.944 \\
average length &0.080&0.045&0.034&0.028&0.007&0.004&0.071&0.034&0.026 \\
\end{tabular}
\vspace{0.5cm}
\caption{Coverage and average length of the parametrically constructed bootstrap confidence intervals based on \emph{parametric} ML estimation and a \emph{parametric} Poi-INAR bootstrap for $\alpha, G(0), \ldots, G(4)$ in case of a Poi(3)-INAR(1) DGP with $\alpha=0.1$ for different sample sizes.}
%\caption{\red{$Poi(3), \alpha=0.1, stng=p$}}
\label{tab:poi_p_3_01}
\end{table}

\begin{table}[h]
\centering
\begin{tabular}{c|ccc|ccc|ccc}
& \multicolumn{3}{|c|}{$\alpha$} & \multicolumn{3}{|c|}{$G(0)$} & \multicolumn{3}{|c}{$G(1)$} \\
$n$ & 100 & 500 & 1000 & 100 & 500 & 1000& 100 & 500 & 1000 \\
\hline
coverage &0.804&0.952&0.944&0.828&0.928&0.934&0.914&0.938&0.924 \\
average length &0.379&0.171&0.120&0.305&0.141&0.100&0.287&0.129&0.091 \\
%\hline
& \multicolumn{3}{|c|}{$G(2)$} & \multicolumn{3}{|c|}{$G(3)$} & \multicolumn{3}{|c}{$G(4)$} \\
$n$ & 100 & 500 & 1000 & 100 & 500 & 1000& 100 & 500 & 1000 \\
\hline
coverage &0.878&0.928&0.944&0.732&0.936&0.944&0.486&0.782&0.898 \\
average length &0.253&0.112&0.078&0.143&0.069&0.048&0.058&0.031&0.024 \\
\end{tabular}
\vspace{0.5cm}
%\caption{\red{$Poi(1), \alpha=0.3, stng=sp$}}
\caption{Coverage and average length of the semi-parametrically constructed bootstrap confidence intervals based on the semi-parametric estimation of \citet{drost} and the semi-parametric INAR bootstrap from Section \ref{sec:boot_algo} for $\alpha, G(0), \ldots, G(4)$ in case of a Poi(1)-INAR(1) DGP with $\alpha=0.3$ for different sample sizes.}
\label{tab:poi_sp_1_03}
\end{table}

\begin{table}[h]
\centering
\begin{tabular}{c|ccc|ccc|ccc}
& \multicolumn{3}{|c|}{$\alpha$} & \multicolumn{3}{|c|}{$G(0)$} & \multicolumn{3}{|c}{$G(1)$} \\
$n$ & 100 & 500 & 1000 & 100 & 500 & 1000& 100 & 500 & 1000 \\
\hline
coverage &0.890&0.952&0.954&0.926&0.932&0.954&0.984&0.978&0.982 \\
average length &0.363&0.162&0.114&0.222&0.100&0.071&0.034&0.007&0.004 \\
%\hline
& \multicolumn{3}{|c|}{$G(2)$} & \multicolumn{3}{|c|}{$G(3)$} & \multicolumn{3}{|c}{$G(4)$} \\
$n$ & 100 & 500 & 1000 & 100 & 500 & 1000& 100 & 500 & 1000 \\
\hline
coverage &0.878&0.924&0.938&0.872&0.924&0.938&0.836&0.916&0.940 \\
average length &0.105&0.050&0.036&0.074&0.034&0.024&0.031&0.013&0.009 \\
\end{tabular}
\vspace{0.5cm}
\caption{Coverage and average length of the parametrically constructed bootstrap confidence intervals based on \emph{parametric} ML estimation and a \emph{parametric} Poi-INAR bootstrap for $\alpha, G(0), \ldots, G(4)$ in case of a Poi(1)-INAR(1) DGP with $\alpha=0.3$ for different sample sizes.}
%\caption{\red{$Poi(1), \alpha=0.3, stng=p$}}
\label{tab:poi_p_1_03}
\end{table}

\begin{table}[h]
\centering
\begin{tabular}{c|ccc|ccc|ccc}
& \multicolumn{3}{|c|}{$\alpha$} & \multicolumn{3}{|c|}{$G(0)$} & \multicolumn{3}{|c}{$G(1)$} \\
$n$ & 100 & 500 & 1000 & 100 & 500 & 1000& 100 & 500 & 1000 \\
\hline
coverage &0.698&0.966&0.946&0.566&0.872&0.912&0.596&0.890&0.918\\
average length &0.352&0.188&0.128&0.123&0.079&0.058&0.258&0.168&0.122 \\
%\hline
& \multicolumn{3}{|c|}{$G(2)$} & \multicolumn{3}{|c|}{$G(3)$} & \multicolumn{3}{|c}{$G(4)$} \\
$n$ & 100 & 500 & 1000 & 100 & 500 & 1000& 100 & 500 & 1000 \\
\hline
coverage &0.688&0.884&0.932&0.750&0.902&0.932&0.686&0.836&0.932 \\
average length &0.324&0.211&0.155&0.340&0.231&0.172&0.311&0.224&0.170 \\
\end{tabular}
\vspace{0.5cm}
%\caption{\red{$Poi(3), \alpha=0.3, stng=sp$}}
\caption{Coverage and average length of the semi-parametrically constructed bootstrap confidence intervals based on the semi-parametric estimation of \citet{drost} and the semi-parametric INAR bootstrap from Section \ref{sec:boot_algo} for $\alpha, G(0), \ldots, G(4)$ in case of a Poi(3)-INAR(1) DGP with $\alpha=0.3$ for different sample sizes.}
\label{tab:poi_sp_3_03}
\end{table}

\begin{table}[h]
\centering
\begin{tabular}{c|ccc|ccc|ccc}
& \multicolumn{3}{|c|}{$\alpha$} & \multicolumn{3}{|c|}{$G(0)$} & \multicolumn{3}{|c}{$G(1)$} \\
$n$ & 100 & 500 & 1000 & 100 & 500 & 1000& 100 & 500 & 1000 \\
\hline
coverage &0.904&0.946&0.958&0.842&0.926&0.952&0.872&0.934&0.946\\
average length &0.357&0.157&0.111&0.083&0.036&0.026&0.151&0.070&0.051\\
%\hline
& \multicolumn{3}{|c|}{$G(2)$} & \multicolumn{3}{|c|}{$G(3)$} & \multicolumn{3}{|c}{$G(4)$} \\
$n$ & 100 & 500 & 1000 & 100 & 500 & 1000& 100 & 500 & 1000 \\
\hline
coverage &0.796&0.902&0.982&0.978&0.980&0.982&0.686&0.882&0.934 \\
average length &0.110&0.053&0.038&0.045&0.010&0.007&0.075&0.038&0.029 \\
\end{tabular}
\vspace{0.5cm}
\caption{Coverage and average length of the parametrically constructed bootstrap confidence intervals based on \emph{parametric} ML estimation and a \emph{parametric} Poi-INAR bootstrap for $\alpha, G(0), \ldots, G(4)$ in case of a Poi(3)-INAR(1) DGP with $\alpha=0.3$ for different sample sizes.}
%\caption{\red{$Poi(3), \alpha=0.3, stng=p$}}
\label{tab:poi_p_3_03}
\end{table}

\begin{table}[h]
\centering
\begin{tabular}{c|ccc|ccc|ccc}
& \multicolumn{3}{|c|}{$\alpha$} & \multicolumn{3}{|c|}{$G(0)$} & \multicolumn{3}{|c}{$G(1)$} \\
$n$ & 100 & 500 & 1000 & 100 & 500 & 1000& 100 & 500 & 1000 \\
\hline
coverage &0.896&0.964&0.952&0.408&0.684&0.818&0.550&0.794&0.832 \\
average length &0.422&0.213&0.172&0.200&0.165&0.168&0.325&0.254&0.209 \\
%\hline
& \multicolumn{3}{|c|}{$G(2)$} & \multicolumn{3}{|c|}{$G(3)$} & \multicolumn{3}{|c}{$G(4)$} \\
$n$ & 100 & 500 & 1000 & 100 & 500 & 1000& 100 & 500 & 1000 \\
\hline
coverage &0.592&0.812&0.704&0.692&0.764&0.704&0.664&0.734&0.724 \\
average length &0.385&0.293&0.244&0.423&0.299&0.253&0.389&0.287&0.237 \\
\end{tabular}
\vspace{0.5cm}
%\caption{\red{$Poi(3), \alpha=0.5, stng=sp$}}
\caption{Coverage and average length of the semi-parametrically constructed bootstrap confidence intervals based on the semi-parametric estimation of \citet{drost} and the semi-parametric INAR bootstrap from Section \ref{sec:boot_algo} for $\alpha, G(0), \ldots, G(4)$ in case of a Poi(3)-INAR(1) DGP with $\alpha=0.5$ for different sample sizes.}
\label{tab:poi_sp_3_05}
\end{table}

\begin{table}[h]
\centering
\begin{tabular}{c|ccc|ccc|ccc}
& \multicolumn{3}{|c|}{$\alpha$} & \multicolumn{3}{|c|}{$G(0)$} & \multicolumn{3}{|c}{$G(1)$} \\
$n$ & 100 & 500 & 1000 & 100 & 500 & 1000& 100 & 500 & 1000 \\
\hline
coverage &0.952&0.962&0.960&0.832&0.924&0.930&0.872&0.932&0.952 \\
average length &0.288&0.122&0.087&0.088&0.038&0.030&0.159&0.075&0.056 \\
%\hline
& \multicolumn{3}{|c|}{$G(2)$} & \multicolumn{3}{|c|}{$G(3)$} & \multicolumn{3}{|c}{$G(4)$} \\
$n$ & 100 & 500 & 1000 & 100 & 500 & 1000& 100 & 500 & 1000 \\
\hline
coverage &0.784&0.904&0.974&0.978&0.990&0.974&0.652&0.878&0.922 \\
average length &0.117&0.055&0.044&0.052&0.011&0.010&0.078&0.041&0.034 \\
\end{tabular}
\vspace{0.5cm}
\caption{Coverage and average length of the parametrically constructed bootstrap confidence intervals based on \emph{parametric} ML estimation and a \emph{parametric} Poi-INAR bootstrap for $\alpha, G(0), \ldots, G(4)$ in case of a Poi(3)-INAR(1) DGP with $\alpha=0.5$ for different sample sizes.}
%\caption{\red{$Poi(3), \alpha=0.5, stng=p$}}
\label{tab:poi_p_3_05}
\end{table}

\begin{table}[h]
\centering
\begin{tabular}{c|ccc|ccc|ccc}
& \multicolumn{3}{|c|}{$\alpha$} & \multicolumn{3}{|c|}{$G(0)$} & \multicolumn{3}{|c}{$G(1)$} \\
$n$ & 100 & 500 & 1000 & 100 & 500 & 1000& 100 & 500 & 1000 \\
\hline
coverage &0.958&0.946&0.810&0.412&0.258&0.196&0.682&0.534&0.478 \\
average length &0.205&0.096&0.086&0.408&0.222&0.152&0.684&0.432&0.399 \\
%\hline
& \multicolumn{3}{|c|}{$G(2)$} & \multicolumn{3}{|c|}{$G(3)$} & \multicolumn{3}{|c}{$G(4)$} \\
$n$ & 100 & 500 & 1000 & 100 & 500 & 1000& 100 & 500 & 1000 \\
\hline
coverage &0.808&0.864&0.864&0.728&0.880&0.864&0.488&0.786&0.844 \\
average length &0.559&0.348&0.306&0.401&0.230&0.204&0.235&0.071&0.065 \\
\end{tabular}
\vspace{0.5cm}
%\caption{\red{$Poi(1), \alpha=0.9, stng=sp$}}
\caption{Coverage and average length of the semi-parametrically constructed bootstrap confidence intervals based on the semi-parametric estimation of \citet{drost} and the semi-parametric INAR bootstrap from Section \ref{sec:boot_algo} for $\alpha, G(0), \ldots, G(4)$ in case of a Poi(1)-INAR(1) DGP with $\alpha=0.9$ for different sample sizes.}
\label{tab:poi_sp_1_09}
\end{table}

\begin{table}[h]
\centering
\begin{tabular}{c|ccc|ccc|ccc}
& \multicolumn{3}{|c|}{$\alpha$} & \multicolumn{3}{|c|}{$G(0)$} & \multicolumn{3}{|c}{$G(1)$} \\
$n$ & 100 & 500 & 1000 & 100 & 500 & 1000& 100 & 500 & 1000 \\
\hline
coverage &0.940&0.956&0.952&0.936&0.964&0.952&0.982&0.970&0.980 \\
average length &0.068&0.029&0.022&0.237&0.107&0.078&0.040&0.008&0.012 \\
%\hline
& \multicolumn{3}{|c|}{$G(2)$} & \multicolumn{3}{|c|}{$G(3)$} & \multicolumn{3}{|c}{$G(4)$} \\
$n$ & 100 & 500 & 1000 & 100 & 500 & 1000& 100 & 500 & 1000 \\
\hline
coverage &0.878&0.954&0.942&0.900&0.948&0.942&0.858&0.926&0.934 \\
average length &0.111&0.053&0.043&0.080&0.036&0.029&0.035&0.014&0.011 \\
\end{tabular}
\vspace{0.5cm}
\caption{Coverage and average length of the parametrically constructed bootstrap confidence intervals based on \emph{parametric} ML estimation and a \emph{parametric} Poi-INAR bootstrap for $\alpha, G(0), \ldots, G(4)$ in case of a Poi(1)-INAR(1) DGP with $\alpha=0.9$ for different sample sizes.}
%\caption{\red{$Poi(1), \alpha=0.9, stng=p$}}
\label{tab:poi_p_1_09}
\end{table}

\begin{table}[h]
\centering
\begin{tabular}{c|ccc|ccc|ccc}
& \multicolumn{3}{|c|}{$\alpha$} & \multicolumn{3}{|c|}{$G(0)$} & \multicolumn{3}{|c}{$G(1)$} \\
$n$ & 100 & 500 & 1000 & 100 & 500 & 1000& 100 & 500 & 1000 \\
\hline
coverage &0.938&0.920&0.875&0&0&0&0.012&0.008&0.010 \\
average length &0.229&0.082&0.055&0.006&0.001&0.001&0.005&0.001&0.001 \\
%\hline
& \multicolumn{3}{|c|}{$G(2)$} & \multicolumn{3}{|c|}{$G(3)$} & \multicolumn{3}{|c}{$G(4)$} \\
$n$ & 100 & 500 & 1000 & 100 & 500 & 1000& 100 & 500 & 1000 \\
\hline
coverage &0.020&0.008&0.006&0.032&0.008&0.006&.032&0.006&0.010 \\
average length &0.005&0.001&0.001&0.005&0.001&0.001&0.005&0.001&0.001 \\
\end{tabular}
\vspace{0.5cm}
%\caption{\red{$Poi(3), \alpha=0.9, stng=sp$}}
\caption{Coverage and average length of the semi-parametrically constructed bootstrap confidence intervals based on the semi-parametric estimation of \citet{drost} and the semi-parametric INAR bootstrap from Section \ref{sec:boot_algo} for $\alpha, G(0), \ldots, G(4)$ in case of a Poi(3)-INAR(1) DGP with $\alpha=0.9$ for different sample sizes.}
\label{tab:poi_sp_3_09}
\end{table}

\begin{table}[h]
\centering
\begin{tabular}{c|ccc|ccc|ccc}
& \multicolumn{3}{|c|}{$\alpha$} & \multicolumn{3}{|c|}{$G(0)$} & \multicolumn{3}{|c}{$G(1)$} \\
$n$ & 100 & 500 & 1000 & 100 & 500 & 1000& 100 & 500 & 1000 \\
\hline
coverage &0.948&0.950&0.938&0.850&0.924&0.910&0.884&0.940&0.920 \\
average length &0.061&0.026&0.020&0.094&0.040&0.028&0.166&0.078&0.054 \\
%\hline
& \multicolumn{3}{|c|}{$G(2)$} & \multicolumn{3}{|c|}{$G(3)$} & \multicolumn{3}{|c}{$G(4)$} \\
$n$ & 100 & 500 & 1000 & 100 & 500 & 1000& 100 & 500 & 1000 \\
\hline
coverage &0.766&0.910&0.964&0.984&0.968&0.964&0.698&0.874&0.880 \\
average length &0.117&0.058&0.040&0.055&0.012&0.006&0.084&0.042&0.030 \\
\end{tabular}
\vspace{0.5cm}
\caption{Coverage and average length of the parametrically constructed bootstrap confidence intervals based on \emph{parametric} ML estimation and a \emph{parametric} Poi-INAR bootstrap for $\alpha, G(0), \ldots, G(4)$ in case of a Poi(3)-INAR(1) DGP with $\alpha=0.9$ for different sample sizes.}
%\caption{\red{$Poi(3), \alpha=0.9, stng=p$}}
\label{tab:poi_p_3_09}
\end{table}

\begin{table}[h]
\centering
\begin{tabular}{c|ccc|ccc|ccc}
& \multicolumn{3}{|c|}{$\alpha$} & \multicolumn{3}{|c|}{$G(0)$} & \multicolumn{3}{|c}{$G(1)$} \\
$n$ & 100 & 500 & 1000 & 100 & 500 & 1000& 100 & 500 & 1000 \\
\hline
coverage &0.940&0.960&0.950&0.866&0.950&0.952&0.918&0.952&0.958 \\
average length &0.303&0.117&0.081&0.360&0.151&0.105&0.335&0.143&0.100 \\
%\hline
& \multicolumn{3}{|c|}{$G(2)$} & \multicolumn{3}{|c|}{$G(3)$} & \multicolumn{3}{|c}{$G(4)$} \\
$n$ & 100 & 500 & 1000 & 100 & 500 & 1000& 100 & 500 & 1000 \\
\hline
coverage &0.856&0.946&0.940&0..732&0.938&0.938&0.634&0.852&0.940 \\
average length &0.235&0.101&0.071&0.153&0.072&0.050&0.096&0.050&0.037 \\
\end{tabular}
\vspace{0.5cm}
\caption{Coverage and average length of the semi-parametrically constructed bootstrap confidence intervals based on the semi-parametric estimation of \citet{drost} and the semi-parametric INAR bootstrap from Section \ref{sec:boot_algo} for $\alpha, G(0), \ldots, G(4)$ in case of a NB(1,1/2)-INAR(1) DGP with $\alpha=0.5$ for different sample sizes.}
%\caption{\red{$NB(1,1/2), \alpha=0.5, stng=sp$}}
\label{tab:nb_sp_112_05}
\end{table}

\begin{table}[h]
\centering
\begin{tabular}{c|ccc|ccc|ccc}
& \multicolumn{3}{|c|}{$\alpha$} & \multicolumn{3}{|c|}{$G(0)$} & \multicolumn{3}{|c}{$G(1)$} \\
$n$ & 100 & 500 & 1000 & 100 & 500 & 1000& 100 & 500 & 1000 \\
\hline
coverage &0.926&0.446&0.104&0.150&0&0&0.018&0&0 \\
average length &0.333&0.142&0.100&0.219&0.099&0.070&0.053&0.017&0.011 \\
%\hline
& \multicolumn{3}{|c|}{$G(2)$} & \multicolumn{3}{|c|}{$G(3)$} & \multicolumn{3}{|c}{$G(4)$} \\
$n$ & 100 & 500 & 1000 & 100 & 500 & 1000& 100 & 500 & 1000 \\
\hline
coverage &0.184&0&0&0.832&0.566&0.304&0.660&0.558&0.364 \\
average length &0.097&0.047&0.034&0.089&0.041&0.029&0.046&0.019&0.013 \\
\end{tabular}
\vspace{0.5cm}
\caption{Coverage and average length of the parametrically constructed bootstrap confidence intervals based on \emph{parametric} ML estimation and a \emph{parametric} Poi-INAR bootstrap for $\alpha, G(0), \ldots, G(4)$ in case of a NB(1,1/2)-INAR(1) DGP with $\alpha=0.5$ for different sample sizes.}
%\caption{\red{$NB(1,1/2), \alpha=0.5, stng=p$}}
\label{tab:nb_p_112_05}
\end{table}

\begin{table}[h]
\centering
\begin{tabular}{c|ccc|ccc|ccc}
& \multicolumn{3}{|c|}{$\alpha$} & \multicolumn{3}{|c|}{$G(0)$} & \multicolumn{3}{|c}{$G(1)$} \\
$n$ & 100 & 500 & 1000 & 100 & 500 & 1000& 100 & 500 & 1000 \\
\hline
coverage &0.932&0.952&0.940&0.826&0.928&0.926&0.898&0.918&0.930 \\
average length &0.362&0.142&0.100&0.385&0.175&0.124&0.385&0.169&0.120 \\
%\hline
& \multicolumn{3}{|c|}{$G(2)$} & \multicolumn{3}{|c|}{$G(3)$} & \multicolumn{3}{|c}{$G(4)$} \\
$n$ & 100 & 500 & 1000 & 100 & 500 & 1000& 100 & 500 & 1000 \\
\hline
coverage &0.832&0.938&0.944&0.710&0.912&0.932&0.486&0.756&0.874 \\
average length &0.314&0.131&0.092&0.177&0.081&0.057&0.079&0.039&0.030 \\
\end{tabular}
\vspace{0.5cm}
\caption{Coverage and average length of the semi-parametrically constructed bootstrap confidence intervals based on the semi-parametric estimation of \citet{drost} and the semi-parametric INAR bootstrap from Section \ref{sec:boot_algo} for $\alpha, G(0), \ldots, G(4)$ in case of a NB(10,10/11)-INAR(1) DGP with $\alpha=0.5$ for different sample sizes.}
%\caption{\red{$NB(10,10/11), \alpha=0.5, stng=sp$}}
\label{tab:nb_sp_101011_05}
\end{table}

\begin{table}[h]
\centering
\begin{tabular}{c|ccc|ccc|ccc}
& \multicolumn{3}{|c|}{$\alpha$} & \multicolumn{3}{|c|}{$G(0)$} & \multicolumn{3}{|c}{$G(1)$} \\
$n$ & 100 & 500 & 1000 & 100 & 500 & 1000& 100 & 500 & 1000 \\
\hline
coverage &0.948&0.952&0.922&0.866&0.782&0.662&0.206&0.004&0 \\
average length &0.304&0.131&0.091&0.229&0.103&0.073&0.040&0.008&0.004 \\
%\hline
& \multicolumn{3}{|c|}{$G(2)$} & \multicolumn{3}{|c|}{$G(3)$} & \multicolumn{3}{|c}{$G(4)$} \\
$n$ & 100 & 500 & 1000 & 100 & 500 & 1000& 100 & 500 & 1000 \\
\hline
coverage &0.806&0.770&0.660&0.890&0.932&0.936&0.794&0.794&0.762 \\
average length &0.107&0.051&0.036&0.081&0.036&0.025&0.036&0.014&0.010 \\
\end{tabular}
\vspace{0.5cm}
\caption{Coverage and average length of the parametrically constructed bootstrap confidence intervals based on \emph{parametric} ML estimation and a \emph{parametric} Poi-INAR bootstrap for $\alpha, G(0), \ldots, G(4)$ in case of a NB(10,10/11)-INAR(1) DGP with $\alpha=0.5$ for different sample sizes.}
%\caption{\red{$NB(10,10/11), \alpha=0.5, stng=p$}}
\label{tab:nb_p_101011_05}
\end{table}

\end{document}